\newcommand{\rank}{{\rm rank}}
\providecommand{\fp}{{\mathfrak p}}
\providecommand{\fo}{{\mathfrak o}}
\providecommand{\g}{{\mathfrak g}}
\providecommand{\fg}{{\mathfrak g}}
\providecommand{\fk}{{\mathfrak k}}
\providecommand{\fa}{{\mathfrak a}}
\providecommand{\fh}{{\mathfrak h}}
\providecommand{\fz}{{\mathfrak z}}
\providecommand{\fsu}{{\mathfrak{su}}}
\providecommand{\fsl}{{\mathfrak{sl}}}
\providecommand{\fso}{\mathfrak{so}}
\providecommand{\fsp}{\mathfrak{sp}}
\providecommand{\comment}[1]{}
\renewcommand{\emptyset}{\varnothing}
\providecommand{\PP}{\mathbb{P}}
\providecommand{\face}{\mbox{face}}
\providecommand{\im}{\mbox{im}}
\newtheorem{theorem}{Theorem}[section]
\newtheorem{lemma}[theorem]{Lemma}
\newtheorem{claim}{Claim}
\newtheorem{proposition}[theorem]{Proposition}
\newtheorem{corollary}[theorem]{Corollary}
\newtheorem{ex-conjecture}[theorem]{Ex-Conjecture}
\theoremstyle{definition}
\newtheorem{definition}[theorem]{Definition}
\theoremstyle{remark}
\newtheorem{remark}[theorem]{Remark}
\providecommand{\x}{{\bf x}}
\providecommand{\y}{{\bf y}}
\providecommand{\tr}{\text{tr\,}}
\providecommand{\aff}{\text{Aff}~}
\renewcommand{\phi}{\varphi}
\providecommand{\beq}{\begin{equation}}
\providecommand{\eeq}{\end{equation}}
\providecommand{\beqa}{\begin{eqnarray}}
\providecommand{\eeqa}{\end{eqnarray}}
\providecommand{\fp}{{\mathfrak{p}}}
\providecommand{\fk}{{\mathfrak{k}}}
\providecommand{\fa}{{\mathfrak{a}}}
\providecommand{\fg}{{\mathfrak{g}}}
\providecommand{\fe}{{\mathfrak{e}}}
\providecommand{\ff}{{\mathfrak{f}}}
\newcommand{\faut}{\mathfrak{aut}}
\newcommand{\flie}{\mathfrak{lie}}
\newcommand{\fder}{\mathfrak{der}}
\renewcommand{\H}{{\mathbb{H}}}
\providecommand{\R}{{\mathbb{R}}}
\providecommand{\D}{{\mathbb{D}}}
\providecommand{\C}{{\mathbb{C}}}
\providecommand{\Oct}{{\mathbb{O}}}
\providecommand{\S}{{\mathbb{S}}}
\providecommand{\N}{{\mathbb{N}}}
\providecommand{\E}{{\mathbb{E}}}
\providecommand{\Z}{{\mathbb{Z}}}
\providecommand{\ad}{\mbox{ad }}
\providecommand{\tri}{{\triangle}}
\providecommand{\diag}{{diag}}
\providecommand{\join}{\vee}
\providecommand{\meet}{\wedge}
\providecommand{\iso}{\simeq}
\providecommand{\union}{\cup}
\providecommand{\intersect}{\cap}
\providecommand{\interior}{{\rm int~}}
\providecommand{\face}{{\rm Face}}
\providecommand{\eface}{{\rm ExpFace}}
\providecommand{\ad}{{\rm ad}}
\providecommand{\cone}[1]{{\mathrm{Cone}~}(#1)}
\providecommand{\aut}{\mathrm{Aut~}}
\providecommand{\Aut}{\mathrm{Aut~}}
\providecommand{\conv}{{\rm Conv~}}
\providecommand{\lin}{{\mathrm{lin~}}}
\providecommand{\hbcommentoff}[1]{{}}
\providecommand{\tempout}[1]{{}}
\begin{document}

\title{Strongly symmetric spectral convex bodies are Jordan algebra
  state spaces} 


\author[1]{Howard Barnum\thanks{\texttt{hnbarnum@aol.com}}}
\affil[1]{Los Alamos, New Mexico, USA}
\author[2]{Joachim Hilgert\thanks{\texttt{hilgert@math.upb.de}}}
\affil[2]{Department of Mathematics, University of Paderborn, Germany}

\date{\today}


\maketitle



\thispagestyle{empty}

\pagestyle{myheadings}

\begin{abstract}
We show that the finite-dimensional convex compact sets having the properties of \emph{spectrality} and \emph{strong symmetry} are precisely the normalized state spaces of finite-dimensional simple Euclidean Jordan algebras and the simplices.  Various assumptions are known characterizing complex quantum state spaces among the Jordan state spaces,  
which combine with this theorem to give simple characterizations of finite-dimensional quantum state
space. 

Spectrality and strong symmetry arose in the study of ``general probabilistic theories" (GPTs), in which convex compact sets are considered as state spaces of abstractly conceivable physical systems, though not necessarily ones corresponding to actual physics.  We discuss some implications of our result---which is purely convex geometric in nature---for such theories.   A major concern in the study of such theories, and also in the theory of operator algebras, has been the characterization of the state spaces of finite and infinite-dimensional Jordan algebras, and the 
characterization of standard quantum theory over complex Hilbert spaces, or the state spaces of von Neumann or $C^*$-algebras, within the class of Jordan-algebraic state spaces.   In the finite-dimensional case, our characterization of simple Jordan-algebraic state spaces and classical (i.e. simplicial) state spaces, can serve as an alternative to existing characterizations of Jordan-algebraic systems, for example via the properties of positive projections associated with faces \cite{ASBook2, Araki80}. 
 
While our result is purely convex-geometric in nature, it  has strong implications for work relating geometric properties of the state spaces of systems to physical and information processing characteristics of GPT theories.   
It shows that some generalizations of important aspects of quantum and classical thermodynamics to theories satisfying natural postulates, e.g. in  \cite{KrummEtAlThermo, ChiribellaScandoloDiagonalization}, apply to a narrower class of theories than might have been hoped, already relatively close to complex quantum theory since their systems are Jordan algebraic.   Sorkin's notion of irreducibly $k$-th order interference, involving $k$ or more possibilities, generalizing the interference between two possibilities in quantum theory, has been studied in the GPT framework and looked for in experiments.
Our result shows that the assumption of no higher-order ($k \ge 3$) interference, used along with 
spectrality and strong symmetry to characterize the same class of Jordan-algebraic convex sets in \cite{BMU}, was superfluous.   
It also implies that  \cite{LeeSelbyGrover}'s
extension, on the assumption that interference is no greater than some fixed maximal degree $k$, of the 
important $\Omega(\sqrt{N})$ lower bound on the quantum black-box query complexity of searching $N$ possibilities for one having a desired property (which is achieved by Grover's celebrated quantum algorithm), to a class of GPTs satisfying
certain postulates  allowing the formulation of a generalized notion of query algorithm actually applies in the Jordan-algebraic setting where higher-order interference ($k \ge 3$) is not possible.

On the other hand, our work suggests that whether one is interested in investigating the physical and information properties of possible probabilistic theories or in the geometric properties of convex compact sets, it is worth focusing on the implications of postulates weaker than the conjunction of strong symmetry and spectrality.   

\end{abstract}

\section{Introduction}

In \cite{BMU}, the normalized state spaces of simple
finite-dimensional Euclidean Jordan algebras, and the simplices, were
characterized as the unique finite-dimensional compact convex sets
satisfying three properties: spectrality, strong symmetry, and the
absence of higher-order interference.  In this paper, we show that
the same class of convex compact sets is characterized by 
 the first two of these properties:   

\begin{theorem}\label{theorem: main}
A finite-dimensional convex compact set is spectral and strongly
symmetric if and only if it is a simplex or affinely isomorphic to the
space of normalized states of a simple Euclidean Jordan algebra.
\end{theorem}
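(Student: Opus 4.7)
The plan is to leverage the characterization already established in \cite{BMU}: a finite-dimensional convex compact set is a simplex or affinely isomorphic to the normalized state space of a simple Euclidean Jordan algebra provided it is spectral, strongly symmetric, and has no higher-order interference in the sense of Sorkin. Accordingly, I would reduce the present theorem to \cite{BMU} by proving that the first two hypotheses already entail the third.

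I would begin by recalling how a filter---a positive affine retraction onto the affine span of a face---is built from spectrality. Given any face $F$ spanned by an orthogonal set of atoms $\{F_1,\ldots,F_k\}$, complete to a frame $\{F_1,\ldots,F_n\}$ of $K$, and for a state $\omega$ with spectral decomposition $\omega=\sum_i p_i \omega_i$ relative to this frame, declare $P_F(\omega)=\sum_{i\le k} p_i \omega_i$. Strong symmetry then plays two essential roles. First, it secures well-definedness: any two frames extending $\{F_1,\ldots,F_k\}$ are related by a symmetry of $K$ that fixes $F$ pointwise and each $F_j$ setwise, so the resulting retraction does not depend on the completion. Second, since the symmetry group acts transitively on (ordered) frames, spectral decompositions of different states can be related by a global symmetry, and the prescription extends unambiguously to a well-defined affine map on the whole state space.

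With $P_F$ in hand, the key additivity identity $P_F=\sum_{j\le k}P_{F_j}$ holds on every state $\omega$ that admits a spectral decomposition in a frame extending the $F_j$: both sides coincide summand by summand. To propagate the identity to all states, I would use spectrality combined with strong symmetry to reduce an arbitrary state to one spectrally decomposed along such a frame, invoking that the symmetry carrying one frame to another intertwines the corresponding filters by construction. This identity is precisely Sorkin's no-higher-order-interference condition for the face $F$, and since $F$ was arbitrary, the third hypothesis of \cite{BMU} is obtained, completing the reduction.

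The principal obstacle, I expect, lies in the interplay between the two uses of strong symmetry: showing both that $P_F$ is well-defined and that the symmetries relating arbitrary frames preserve the filter structure compatibly. Well-definedness requires a fine analysis of the stabilizer in the symmetry group of a face, which must act transitively on the ``complementary'' frames, while the compatibility requires that filters intertwine with the symmetry action across distinct spectral decompositions of the same state. Both ingredients are plausible consequences of strong symmetry but demand careful group-theoretic bookkeeping. Should this route meet unforeseen difficulty, an alternative would be to construct a $G$-invariant inner product on the affine span of $K$ turning the generated cone into a self-dual, homogeneous cone, so that Koecher--Vinberg supplies the Jordan structure directly without passing through the interference axiom at all.
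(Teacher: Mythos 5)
Your proposal has a genuine gap, and it sits exactly at the step you treat as routine. The entire content of your reduction is the claim that spectrality and strong symmetry imply the absence of higher-order interference; but this implication is precisely the open problem that the theorem resolves (as a \emph{corollary} of the classification, not as a lemma feeding into \cite{BMU}). The paper explicitly considers this route in Section \ref{sec: discussion}: the authors note that \cite{BMU} obtained Jordan structure by showing that compressions preserve purity, \emph{using} the no-higher-order-interference assumption in that step, and that no direct derivation of purity preservation, homogeneity, or absence of higher-order interference from spectrality and strong symmetry alone is known. The actual proof goes a completely different way: it shows that a strongly symmetric spectral set is a regular convex body in the sense of Farran--Robertson (maximal flags correspond bijectively to maximal frames, on which the automorphism group acts transitively), that its Farran--Robertson polytope is a simplex, and then matches the entries of the Madden--Robertson classification of regular convex bodies via polar representations against the isotropy representations attached to simple Euclidean Jordan algebras. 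Your fallback suggestion (prove homogeneity and invoke Koecher--Vinberg) is likewise flagged in the paper as a conceivable but uncarried-out alternative.

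Beyond the strategic issue, the specific identity you propose is not the right one and your propagation argument fails. The condition $P_F=\sum_{j\le k}P_{F_j}$ for atoms $F_j$ is the absence of \emph{all} interference (including second order), which forces classicality and is false for quantum theory: there $P_F(\rho)=\Pi\rho\Pi$ while $\sum_j P_{F_j}(\rho)$ is the decohered diagonal part, and $\lin F$ has dimension $k^2$, not $k$. Sorkin's no-\emph{higher}-order condition is the weaker identity $P_F=\sum_{i<j}P_{F_i\vee F_j}-(k-2)\sum_i P_{F_i}$. More importantly, any such condition is an identity of linear maps on all of $V$, i.e.\ it must hold on states carrying ``off-diagonal'' components relative to the fixed frame generating $F$. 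Your argument only verifies it on states spectrally decomposed in a frame \emph{extending} $\{F_1,\ldots,F_k\}$, where it is trivial, and spectrality only guarantees that an arbitrary state decomposes in \emph{some} frame --- generally one incompatible with the fixed $F_j$. The symmetry carrying one frame to another does not fix the faces $F_j$, so it cannot be used to transport the identity. This is exactly where the difficulty of the problem lives, and it is why the paper routes the proof through the classification of regular convex bodies instead.
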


Simplices are the state spaces of particular nonsimple Euclidean Jordan
algebras, namely products of several copies of the trivial
(one-dimensional) Euclidean Jordan algebra, so this theorem implies the claim 
in our title.

Spectrality (in this convex sense) and strong symmetry are notions originating in an area of
research, now often called ``general probabilistic theories''
(GPTs), that considers convex compact sets as an abstract notion of
state space of a system, physical or otherwise, on which one can make
measurements, attempt to control the dynamics, etc.; a state encodes
the probabilities of the results of all measurements that it is
possible to make on a system.  Roughly speaking, a system specified by a 
convex compact state space $\Omega$ is spectral if every state is a
convex combination of pure (i.e. extremal, in the convex sense) states
that are perfectly distinguishable from each other via some
measurement it is possible to perform on the system.  A system is
strongly symmetric if the symmetry group of its state space acts
transitively on the set of lists (of a given length) of perfectly
distinguishable states.  These two properties are convex abstractions
of properties that hold for quantum systems: in the quantum case, the
state space is the set of density matrices on a Hilbert space,
perfectly distinguishable sets of states are the sets of rank-one
projectors (pure density matrices) $vv^\dagger$ corresponding to
orthonormal sets of Hilbert space vectors $v$ (modulo phase factors),
spectrality is the spectral decomposition of density matrices, and
strong symmetry reflects the facts that any orthonormal basis can be
taken to any other via a unitary operator and that conjugations by
unitaries, $\rho \mapsto U \rho U^\dagger$, are symmetries of the set
of density matrices.  Although much of quantum physics is best represented 
in infinite-dimensional Hilbert spaces, quantum information theory has focused
on degrees of freedom that can be represented in finite-dimensional Hilbert
spaces, or on controlling finite-dimensional subalgebras of observables in larger
systems, corresponding to effective finite-dimensional Hilbert spaces, e.g. ones  
describable as tensor products of two-dimensional complex Hilbert 
spaces (``qubits").  And most of the characteristically quantum phenomena such as interference, 
tradeoffs between information gain and disturbance in the measurement process, the existence of 
incompatible observables, and entanglement, that are 
associated with the cryptographic and computational advantages of quantum information
processing over classical, are already present in finite-dimensional systems.   So finite-dimensional results such
as those of the present paper can help us understand the conceptual and physical basis of such quantum phenomena.  Indeed most recent research on general probabilistic theories has been done in finite-dimensional
settings.

That simple Euclidean Jordan algebras and simplices are spectral and
strongly symmetric is known  (cf. \cite{BMU}), so the new result here is the
converse.  It is obtained by showing that compact convex sets
in finite dimension satisfying spectrality and strong symmetry are (up
to affine isomorphism) a subset of the regular convex bodies, defined
in \cite{FarranRobertson} as bodies whose symmetry group acts
transitively on maximal chains of faces, and using the classification
of those convex bodies in \cite{MaddenRobertson} to verify that this
subset consists precisely of the abovementioned classes of sets.

Several recent works have explored the consequences of the conjunction
of spectrality and strong symmetry, or sets of principles that imply
these, for the state space of a system in the GPT framework.  In
particular, in \cite{LeeSelbyGrover}, the important lower bound of $\Omega(\sqrt{N})$
queries (due to Bennett, Brassard, Bernstein, and Vazirani \cite{Bennett97b}) on 
computing OR (that is, $x_1 \vee x_2 \vee \cdots \vee x_N)$ via quantum queries to an $N$-bit-string $(x_1, x_2, \cdots ,x_N) \in \{0,1\}^N$, which shows that one cannot improve on the $\sqrt{N}$ 
queries used in Grover's quantum algorithm for ``searching" over 
$N$ possibilities for one having a desired property, was extended to systems 
satisfying a set of five postulates.
These postulates include strong symmetry, and imply (by results in
\cite{ChiribellaScandoloEntanglementAxiomatic}) spectrality.  In \cite{KrummEtAlThermo},
the conjunction of strong symmetry and spectrality was shown to imply
thermodynamical results generalizing important facts about quantum
thermodynamics, and similar results were obtained in
\cite{ChiribellaScandoloDiagonalization}.  Our work implies that these recent results, as well 
as additional results on query complexity obtained in \cite{BarnumLeeSelbyOracles}, 
apply to a much smaller set of GPT systems than might have been
hoped.  We discuss this further in the concluding section of the
paper.

The paper is organized as follows.  After this outline of the paper's structure, 
Section \ref{subsec: terminology} briefly 
describes some terminological and mathematical conventions used in the rest of the paper.
Section \ref{sec: background}
provides the necessary background on convex compact sets and
associated structures, especially their groups of automorphisms, and
defines the notion of measurement, which is used in formulating
the notion of perfect distinguishability and the properties of
spectrality and strong symmetry which depend on it.  It also touches
on the use of this framework to represent potential physical systems
abstractly, which motivates notions like measurement,
distinguishability, spectrality, and strong symmetry.  Section
\ref{sec: distinguishability spectrality strong symmetry} defines
perfect distinguishability, spectrality and strong symmetry, and
reviews important consequences (mostly from \cite{BMU}) of the latter
two properties that will be needed in proving the main theorem.
Section \ref{sec: Euclidean Jordan algebras} describes the main class
of convex compact sets that are the target of the characterization via
spectrality and strong symmetry in the main theorem: the normalized
state spaces of simple Euclidean Jordan algebras.  It also reviews the
already-known direction of the main theorem for these cases,
explaining how known results in Jordan theory imply that these state
spaces are spectral and strongly symmetric.  Finally, it reviews the
structure of these Jordan algebras considered as representations of
the automorphism groups of their normalized state spaces, and of the
automorphism groups of the cones over these state spaces.  This
structure will be used in the proof of the main theorem in Section
\ref{sec: main result}, to compare them with the polar representations
in which Madden and Robertson \cite{MaddenRobertson} embed all regular
convex bodies in order to classify them.
Section \ref{sec: simplices} briefly reviews the easily seen fact that
simplices, the other part of the class of bodies we characterize, are
spectral and strongly symmetric, which completes the proof of the
``if'' direction of the main theorem.  Section \ref{sec: sss bits are
  balls} establishes, via a direct and simple geometric argument
essentially from \cite{DakicBruknerQuantumBeyond} (but 
also using strong symmetry which was not explicitly assumed there), the ``only if''
direction for the special case of strongly symmetric spectral systems
in which no more than a pair of states can be perfectly distinguished
at once: they are affinely isomorphic to balls.  This enables us to
avoid some case-checking later.  Section \ref{sec: regular convex
  bodies} introduces the main tools we will use in the new part of our
characterization theorem: the notion of regular convex body and
associated theory, and Section \ref{sec: Madden-Robertson
  classification} describes the classification by Madden and Robertson
\cite{MaddenRobertson} of these bodies as convex hulls of particular
orbits in polar representations of compact groups, along with some of
the theory of polar representations used in the classification.  The
classification uses a one-to-one correspondence (which was described
in Section \ref{sec: regular convex bodies}) which associates each
regular convex body $\Omega$ embedded as the convex hull of an orbit
in a polar representation with a polytope $\pi(\Omega)$ given as the
convex hull of a particular orbit of the ``restricted Weyl group'' of
the representation, acting in a subspace $\fa$, with $\pi(\Omega) =
\fa \intersect \Omega$.  Finally, Section \ref{sec: main result}
completes the proof of the characterization theorem by proving the new
part: that all strongly symmetric spectral convex compact sets are
normalized state spaces of simple Jordan algebras, or simplices.  This is done by
first proving that they are regular, and then that the associated
polytope $\pi(\Omega)$ is a simplex.  We then go through the list of
representations in the Madden-Robertson classification of regular
convex bodies, and verify that all of those (other than the simplices themselves)
whose polytope is a
simplex with three or more vertices occur in polar representations of
automorphism groups of Euclidean Jordan algebras acting on those
algebras.  We compare the representation-theoretic description of the
orbits leading to normalized Jordan state spaces from Section
\ref{sec: Euclidean Jordan algebras}, to the description of the points
whose orbits yield the regular convex body in the Madden-Robertson
construction, and see that, up to an affine transformation, they are
the same.  Section \ref{sec: discussion} discusses the
implications of the result for GPTs, including recent work assuming
strong symmetry and spectrality; discusses several simple and
physically or informationally meaningful ways of adding additional
assumptions to further narrow the class of systems to that of standard
complex quantum theory, and also discusses the relation of the present
work to other characterizations of Jordan-algebraic state spaces, and Section 
\ref{sec: conclusion} briefly concludes.

\subsection{Some mathematical terminology}
\label{subsec: terminology}
A few terminological conventions and definitions are worth noting.  We
will sometimes abuse notation by referring to a singleton set,
$\{x\}$, by the name of its element $x$, especially when $\{x\}$ is
the face of a convex set consisting of the single extremal point $x$.
For $S$ any subset of an inner product space $V,(\,\cdot \, , \, \cdot \,)$ we
define $S^\perp := \{x \in V: \forall y \in S ~(x,y) = 0\}$.  We
also adopt the common practice whereby, when we substitute a set in a
place in an expression where an element of the set is expected, the
expression is taken to refer to the set of all its referents when
elements of the set are substituted in that place.  For example, 
for $S$ a
subset of an inner product space $V$, $\{x: (x,S) = 0\}$ has the same
meaning as $\{x: \forall y \in S ~(x,y) = 0 \}$, i.e. it refers to
$S^\perp$; for $G$ a group acting on a set $S$, $G.x$ for fixed $x \in
S$ is the orbit through $x$, and so forth.

We use the notation $\E^n$ to indicate $n$-dimensional
\emph{Euclidean space}, by which we mean a finite-dimensional real
vector space equipped with a distinguished positive definite inner
product (a concrete example is $\R^n$ equipped with the dot product).
When $S$ is a subset of a real affine space or linear space, 
we use the notation $\conv S$ for the convex hull of $S$, i.e. the 
set of all convex combinations of elements of $S$.   

For any group $G$ acting linearly on a vector space $V$, and any subspace $L$
of $V$, we denote the subgroup that preserves the subspace $L$ by
$G_L$, and the subgroup that fixes $L$ pointwise by $G^L$.  ($N_G(L)$
and $Z_G(L)$ are also common notation for these.)  We write
$G_0$ for the connected identity component of a Lie group $G$, and
$G_0^s$ for the semisimple part of $G_0$.  

We write $:=$ or $=:$ to indicate that the expression on the side with
the colon is defined by the expression on the side with the equals
sign, both when defining an expression for the first time and in
occasional reminders.  Finally, as is common in the
literature, we use ``positive'' to mean ``nonnegative''.

\section{Background and framework}
\label{sec: background}

We study convex compact subsets $\Omega$ of finite-dimensional real
affine spaces $A$.  From now on the term ``convex compact set'' refers
to such sets.  Unless otherwise noted, it refers to full-dimensional
sets, i.e. ones whose affine span, $\aff \Omega$, is $A$.  We will be
concerned with intrinsic properties of such a set, which are invariant
under affine transformations.  In this setting, unlike the setting of
compact convex subsets of Euclidean space (defined as a finite-dimensional
real vector space with positive definite inner product), there is no way of
distinguishing a notion of rectangle from that of square or
parallelogram, for example; a trapezoid, however, represents a
distinct affine isomorphism class from the other three.

Some intrinsic properties of convex compact sets are best formulated
by embedding $\aff{\Omega}$ as an affine hyperplane, not containing
the origin, in a real vector space $V$ of dimension one greater than
the dimension, which we'll call $n$, of $\aff{\Omega}$.  The
properties we study are independent of the particular embedding.  Such an 
embedding determines the convex, topologically closed, pointed,
generating cone\footnote{These concepts will be defined below.} $V_+ := \R_+ \Omega$, its dual cone $V^*_+ \subset
V^*$, and a unique $u$ in the interior of $V^*_+$, defined by the
property that $u(\Omega)=1$ (meaning $u(\omega) = 1$ for all $\omega
\in \Omega$).  
\begin{definition}\label{def: cone and order unit} 
Let a compact convex set $\Omega$ of dimension $n$ be embedded in an
affine hyperplane in $V \setminus \{0\}$ where $V$ is a vector space
of dimension $n+1$, as above.  We call $V_+ := \R_+\Omega \subset V$
\emph{the cone over} $\Omega$, \emph{the cone generated by} $\Omega$, or
$\cone{\Omega}$.\footnote{Note that $V_+$ by itself does not determine
  $\Omega$; different choices of $u \in \interior V^*_+$ give rise to
  different bases for $V_+$, which need not in general be affinely
  isomorphic.  For example, a cone with square base also has 
  trapezoidal bases, obtained for example by swinging the hyperplane
  containing the square base around one edge of the square (which we
  can think of as a ``hinge'').  $V_+$, on the other hand, \emph{is}
  determined (up to affine isomorphisms) by $\Omega$ in the above construction 
(independently of the particular embedding of $\aff{\Omega}$ into $V\backslash
  \{0\}$).}  We call the unique $u \in \interior V_+^*$ defined by the
property that $u(\Omega)=1$, the \emph{order unit} associated with
$\Omega$.\footnote{The term comes from the theory of order-unit and base-norm
Banach spaces; see for example \cite{ASBook1,ASBook2}.}
\end{definition}

We also define \emph{the cone generated by} $S$, sometimes writing it as
$\cone{S}$, for general subsets $S$ of a real vector space $V$, as
the set of nonnegative linear combinations of elements of $S$; in the
special case of $S=\Omega$ in the setting of Definition \ref{def: cone and order unit} this
agrees with the definition given there.  For our purposes, a cone 
$C$ is defined to be a subset of a real vector space, closed under addition
and nonnegative scalar multiplication.  It is called \emph{generating} if it 
spans the vector space, and \emph{pointed} if $C \intersect - C = \{0\}$.
The cones obtained from Definition \ref{def: cone and order unit} are 
pointed, generating, and topologically closed.   For reasons that will become
clearer later, sometimes $V^*$ is referred
to as the \emph{observable space}.


An important interpretation of this formalism, which was the
motivation and setting of \cite{BMU},
views $\Omega$ as the space of normalized states of an abstract
physical system.  This may or may not correspond to the state space of
any system in physical theories currently in use. 
Because of this interpretation, we sometimes use the terms ``state''
for elements of $\Omega$, and ``pure state'' for extremal elements of
$\Omega$.  For instance, for finite-dimensional quantum systems
corresponding to a Hilbert space of finite dimension $d$, $\Omega$, of
affine dimension $d^2-1$, is the set of density matrices
(i.e. unit-trace positive semidefinite $d \times d$ complex Hermitian
matrices).  The pure states are the rank-one density matrices (which
are the Hermitian projectors onto one-dimensional subspaces
of the underlying Hilbert space).  The cone $V_+$ is the positive
semidefinite matrices.

An $(n-1)$-\emph{simplex} in an affine space is the convex hull of $n$
affinely independent points; if $x_1,...,x_n$ are such points, we
write $\triangle(x_1,...,x_n)$ for this simplex.  The $(n-1)$ in 
$(n-1)$-simplex refers to its dimension, equal to the dimension of the
affine space it spans.  All
$k$-simplices are affinely isomorphic; the abstract
$k$-simplex, i.e. the affine equivalence class of such simplices,
or an arbitrary such simplex considered only up to affine equivalence,
is often referred to as $\tri_{k}$.  In the ``operational'' or GPT 
interpretation, the $(n-1)$-simplex is often thought of as a
finite-dimensional \emph{classical} state space.  The associated
vector space $V$ of one higher dimension is usually taken to be
$\R^n$, with the extremal points $x_i$ embedded as the unit coordinate
vectors $e_i = (0,0,...,0,1,0,...0)$ with $1$ in the $i$-th place, so
that $\tri_{n-1}$ is identified with the standard probability simplex.

The \emph{barycenter} (also known as \emph{centroid}) $c(\Omega)$ of a
full-dimensional compact convex set $\Omega \subset A$, where $\dim A
= \dim \Omega = n$, is defined by introducing coordinates so as to
identify $A$ with $\R^n$, and letting: \beq c(\Omega) := \int_{\Omega}
d\mu(x) x \eeq where $d\mu$ is Lebesgue measure on $\R^n$.  One shows
that $c(\Omega)$ is
independent of the particular coordinatization of $A$ as $\R^n$, and
(what amounts to the same thing) that $c(\Omega)$ is covariant under
affine transformations, i.e. for any affine transformation $T$,
it holds that $c(T(\Omega)) = T(c(\Omega))$.

The automorphism group, $\aut{\Omega}$, of $\Omega$ is the set of
affine transformations of $\aff{\Omega}$ that take $\Omega$ \emph{onto} itself.  It is a
compact group.  (This is often called $\Omega$'s \emph{symmetry group}
but (mainly for consistency with \cite{FarranRobertson, MaddenRobertson}) 
we'll reserve that term for use in a slightly
different context to be introduced later.)  We'll often denote it with
the letter $K$.  In the GPT literature, its elements are often called
\emph{reversible transformations}.  Any affine space can be viewed as
a vector space by choosing a point to be $0$; it is natural to do this
for $\aff{\Omega}$ by taking the barycenter of $\Omega$ as zero.
(This should not be confused with the extension of $\aff \Omega$ to
the vector space $V$ described above, which is of one greater
dimension.)
The transformations in $\aut{\Omega}$, because they are affine, fix
the barycenter, so with respect to this vector space structure on
$\aff{\Omega}$, they are linear.  They also extend (as do any affine
transformations on $\aff{\Omega}$) linearly to all of $V$.  We may
also refer to the extension as $\aut{\Omega}$, or as $K$ ($K$ and its
extension are of course isomorphic as groups).  
In fact we have:

\begin{proposition}
\label{prop: canonical embedding}
Let $\Omega$ be a compact convex subset of $A \iso \aff{\Omega}$.  $A$
may be equipped with the structure of a Euclidean space $E$ in such a
way that $\aut{\Omega} \subseteq O(E)$.  In doing so, the barycenter
of $\Omega$ becomes $0 \in E$, and $\aut{\Omega}$ is precisely the
subgroup of $O(E)$ that preserves $\Omega$.
\end{proposition}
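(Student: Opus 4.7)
The plan is the standard averaging argument, adapted to this affine setting. First I would translate $A$ so that the barycenter $c(\Omega)$ becomes the origin; this turns $A$ into a real vector space, and since every element of $K := \aut{\Omega}$ is affine and fixes the barycenter (as the text notes, the barycenter is covariant under affine transformations, so $K$ must fix $c(\Omega)$), each element of $K$ acts linearly on $A$ with respect to this vector-space structure. Thus $K$ is realized as a compact subgroup of $GL(A)$.

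Next, pick any positive definite inner product $\langle\cdot,\cdot\rangle_0$ on $A$ and average it over $K$: set
\[
\langle x,y\rangle := \int_K \langle k x,k y\rangle_0\, dk,
\]
where $dk$ is normalized Haar measure on the compact group $K$. The integral converges because $K$ is compact and the integrand is continuous in $k$ for fixed $x,y$. Bilinearity and symmetry pass through the integral, and positive definiteness is preserved because for each fixed $x\neq 0$ the integrand $\langle kx,kx\rangle_0$ is a strictly positive continuous function on $K$. By construction $\langle kx,ky\rangle = \langle x,y\rangle$ for every $k\in K$, so $K\subseteq O(E)$, where $E$ denotes $A$ equipped with this inner product. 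This establishes the Euclidean structure with the required properties, and by construction the barycenter of $\Omega$ sits at $0\in E$.

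Finally, for the characterization of $\aut{\Omega}$ inside $O(E)$: one inclusion $K\subseteq O(E)_\Omega$ (the stabilizer of $\Omega$ in $O(E)$) is built in. Conversely, any $g\in O(E)$ is in particular a linear map $A\to A$, hence affine on $\aff\Omega$; if in addition $g(\Omega)=\Omega$, then $g$ is by definition an element of $\aut{\Omega}=K$. So equality holds.

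I do not expect a real obstacle here: everything is classical (Haar averaging of inner products on compact linear group representations). The only minor points requiring care are verifying that $K$ truly acts linearly after fixing the barycenter as origin (which uses both affineness of elements of $K$ and the affine covariance of the barycenter), and confirming that positive definiteness survives the average, which follows from continuity and compactness as indicated.
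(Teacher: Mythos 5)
Your proposal is correct and follows essentially the same route as the paper: translate so the barycenter (fixed by $\aut{\Omega}$ via affine covariance) is the origin, so that $\aut{\Omega}$ acts linearly and compactly, then average an arbitrary inner product over normalized Haar measure to obtain a $K$-invariant one. The final observation that any $\Omega$-preserving element of $O(E)$ is automatically an affine automorphism, giving the reverse inclusion, is the same (immediate) step the paper relies on.
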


This follows from the well-known fact that if a compact group $K$ acts
linearly on a real vector space $V$, there exists a $K$-invariant inner product
on $V$\footnote{One proves this by verifying that if $\langle . , . \rangle: V \times V \rightarrow \R
:: (x, y) \mapsto \langle x  , y \rangle$ is an arbitrary inner product on $V$, 
then $(x,y) \mapsto \int_K dk \langle kx , ky \rangle$, where $dk$ is normalized Haar measure
on $K$,  is a $K$-invariant inner product.} 
and the fact  that $\aut{\Omega}$ is 
a compact group that fixes $c(\Omega)$ (Proposition \ref{prop: barycenter from group averaging}).

\begin{definition}
\label{def: canonical embedding}
We call the identification of $\aff{\Omega}$ with a Euclidean vector
space $E$ such that $\aut{\Omega} \subseteq O(E)\equiv O(\aff{\Omega})$
a \emph{canonical embedding} of $\Omega$ into Euclidean space.
\end{definition}

Recalling the construction of the cone $V_+ \subset V$ over $\Omega$
in Definition \ref{def: cone and order unit}, we note that we can also
equip $V$ with an inner product so that $K$'s extension to $V$ is a
subgroup of $O(n+1)$.  This subgroup will fix the ray $\R_+
c(\Omega)$, i.e. the ray over the image of $0 := c(\Omega)$ under the
embedding of $\aff{\Omega}$ into $V$. 
With this choice of inner
product, and identifying $V^*$ with $V$ via the inner product, $u \in
V$ is the embedded image of $0 \in E \iso \aff{\Omega}$.  

The following fact is very useful:
\begin{proposition}\label{prop: barycenter from group averaging}
Let $K := \aut{\Omega}$ act transitively on the extreme boundary
$\partial_e \Omega$ of $\Omega$, and let $\omega \in \partial_e
\Omega$.  Then $\int_{K} d\mu(k) k.\omega = c(\Omega)$, where
$d\mu$ is Haar measure on $K$.  $c(\Omega)$ is the \emph{unique}
$K$-invariant point in $\Omega$.  
\end{proposition}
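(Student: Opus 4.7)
Set $p := \int_K k.\omega\, d\mu(k)$. The plan is to establish, in order: (a) $p \in \Omega$; (b) $p$ is $K$-invariant; (c) $c(\Omega)$ is $K$-invariant; and (d) any $K$-invariant point of $\Omega$ coincides with $p$. Step (d) is where the transitivity hypothesis on $\partial_e \Omega$ enters, and combined with (a)--(c) it simultaneously identifies $c(\Omega)$ with $p$ and gives the uniqueness statement.

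Steps (a)--(c) are essentially formal. Point (a) follows because $k \mapsto k.\omega$ takes values in the compact convex set $\Omega$, so its mean against a probability measure lies in $\Omega$ by a standard separating-hyperplane argument. Point (b) follows from bi-invariance of Haar measure on the compact group $K$, via the substitution $h = gk$:
\[
g.p \;=\; \int_K (gk).\omega\, d\mu(k) \;=\; \int_K h.\omega\, d\mu(h) \;=\; p.
\]
Point (c) follows from the affine covariance of the barycenter recorded just before the proposition: since $k.\Omega = \Omega$ for every $k \in K$, one has $k.c(\Omega) = c(k.\Omega) = c(\Omega)$.

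The substance of the proof is in (d), which I would handle as follows. Let $q \in \Omega$ be an arbitrary $K$-invariant point. By Caratheodory's theorem, applicable because $\dim \Omega < \infty$, write $q = \sum_{i=1}^{m} t_i \omega_i$ with $\omega_i \in \partial_e \Omega$, $t_i \ge 0$, $\sum_i t_i = 1$, and use transitivity of $K$ on $\partial_e \Omega$ to pick $k_i \in K$ with $\omega_i = k_i.\omega$. Since each $g \in K$ is affine it distributes over convex combinations, so $g.q = \sum_i t_i (g k_i).\omega$, and averaging the identity $q = g.q$ over $g \in K$ against $d\mu$ yields
\[
q \;=\; \int_K g.q\, d\mu(g) \;=\; \sum_i t_i \int_K (g k_i).\omega\, d\mu(g) \;=\; \Bigl(\sum_i t_i\Bigr) p \;=\; p,
\]
where each inner integral equals $p$ by the substitution $h = g k_i$ and bi-invariance of $d\mu$. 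Specialising to $q = c(\Omega)$ gives $c(\Omega) = p$, and the general form of the argument gives the uniqueness assertion.

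I expect (d) to be the only step of any real substance; (a)--(c) are formal. Transitivity of $K$ on $\partial_e \Omega$ is indispensable in (d), because without it the Haar-averaged images $\int_K g.\omega_i\, d\mu(g)$ would in general depend on $i$ and $\Omega$ could support multiple $K$-fixed points.
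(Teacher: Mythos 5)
The paper states this proposition as a ``very useful'' fact and gives no proof of it, so there is nothing to compare against; judged on its own terms, your argument is correct and complete, and is essentially the canonical one. Steps (a)--(c) are indeed formal (barycenter of a probability measure on a compact convex set lies in the set; affine continuous maps commute with such averages; affine covariance of $c(\Omega)$, which the paper records just before the proposition), and step (d) correctly locates where transitivity is used: an arbitrary $K$-fixed point is a finite convex combination of extreme points by the Minkowski--Carath\'eodory theorem in finite dimension, and Haar-averaging collapses each term to $p$ via the substitution $h = g k_i$. The only hypotheses worth flagging explicitly are that $k \mapsto k.\omega$ is continuous (so the vector-valued integral exists) and that Haar measure on the compact group $K$ is bi-invariant (left-invariance for (b), right-invariance for the substitution in (d)); both hold automatically here since $K$ is compact.
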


The property in the premise of this proposition, transitive action of
$\aut{\Omega}$ on $\partial_e \Omega$, is sometimes called
$\emph{reversible transitivity on pure states}$.  A convex set with
the property is also sometimes called an \emph{orbitope}, following
\cite{Orbitopes}. When the compact group and representation $K$ are
clear, we will sometimes refer to $\conv{K.\omega}$ as \emph{the
  orbitope over} $\omega$ or \emph{the orbitope generated by}
$\omega$.  If the connected identity component of $\aut\,{\Omega}$ acts
transitively on $\partial_e\Omega$, we say $\Omega$ has \emph{continuous
reversible transitivity}.

%

The next two definitions are essential for defining the central
properties we will investigate: spectrality and strong symmetry.

\begin{definition}\label{def: effect, measurement}
An \emph{effect} is an element of the interval $[0,u] \subseteq
V^*_+$, with respect to the (partial) ordering of $V^*$ induced by
$V^*_+$ (namely, $x \ge y := x - y \in V^*_+$).  A \emph{measurement} is a finite sequence $e_i, i \in \{1,...,m\}$ of
effects such that $\sum_{i=1}^m e_i = u$.
\end{definition}

  An equivalent characterization of effects is as precisely the
  elements of $V^*$ that satisfy, for all $\omega \in \Omega$,
  $e_i(\omega) \in [0,1] \subset \R$.  (While the first clause only
  requires them to be in $V^*$, the rest of the definition implies
  they are positive, i.e. in $V^*_+$.)  In the operational
  interpretation effects, since they give probabilities when
  evaluated on states, are used to mathematically represent the probabilities of 
  outcomes of measurements.
  The definition of measurement guarantees that for every state
  $\omega \in \Omega$, the probabilities $e_i(\omega)$ for the
  outcomes $1,\ldots,m$ of a measurement $e_1,...,e_m$ sum to $1$: $\sum_i
  e_i(\omega) = u(\omega) = 1$.

Since we are in finite dimension, a positive definite inner product
$(\,\cdot\, , \, \cdot \,): V \times V \rightarrow \R$ induces (as does any
nondegenerate bilinear form, positive definite or not) an isomorphism
between $V^*$ and $V$.  It is common to use such an inner product to
represent the dual space ``internally'' in the primal space; then the
dual cone becomes $V^{*internal}_+ := \{ y \in V: \forall x \in V_+, (y,x) \ge 0\}$.  We
say that a cone $V_+$ is \emph{self-dual} if there exists an inner
product with respect to which $V_+^{*internal} = V_+$.\footnote{This is a properly
stronger property than affine isomorphism of $V^*_+$ with $V_+$, which
is sometimes called ``weak self-duality''.  The cone with square
base, for example, separates the two properties.}   We will 
call such an inner product \emph{self-dualizing}. 

When, as in much of this paper, we deal with self-dual cones, we will normally fix a  
self-dualizing inner product, use it to identify $V$ with $V^*$ and drop the 
superscript ``\emph{internal}".

A \emph{face} of a convex set $C$ is a convex subset $F$ such that
whenever $x = \sum_i p_i x_i$, with $p_i > 0$, is a finite convex decomposition of $x
\in F$ in terms of $x_i \in C$, all the $x_i$ are in $F$.  An
\emph{exposed face} is the intersection of $C$ with a supporting
hyperplane; it is easily shown to be a face, but the two notions are
not equivalent.  Where necessary we modify these definitions so that
(1) $C$ itself is considered to be an exposed face as well as a face, and
(2) the empty set is considered to be an exposed face, and a face, when $C$ is compact, but
not when $C$ is a cone.
This gives a bijection $\phi$ between the faces of the cone $V_+ \equiv
\R_+ \Omega \subset V$ over $\Omega$, and the faces of a convex
compact set $\Omega$: for each face $G$ of $\R_+ \Omega$, $\phi(G) := G
\intersect \aff{\Omega}$ is a face of $\Omega$.  $\phi$'s inverse takes
each nonempty face $H$ of $\Omega$ to the face $\R_+ H$ of $G$, and
$\emptyset$ to $\{0\}$.  There is an analogous bijection in the case of exposed faces. 
 When ordered by inclusion, the set of faces
and the set of exposed faces are each lattices, in which we write the
least upper bound (``join'') of a pair $x,y$ of elements as $x \join
y$, and their greatest lower bound (``meet'') as $x \meet y$.  
In any lattice, meet and join are associative.  For any subset
$S$ of a convex set $C$, we define the face generated by $S$,
$\face{(S)}$ as the smallest face that contains $S$.  (Of course it
must be shown unique for this to be an admissible definition; it is.)
So for a pair of elements $\face{(\{x,y\})} = x \join y$, and by
associativity of join, for a finite set of elements,
$\face{(\{x_1,....,x_k\})} = x_1 \join x_2 \join \cdots \join x_k$.  A
similar notion applies to the lattice of exposed faces: $\eface{(S)}$,
the exposed face generated by $S$, is the smallest exposed face
containing $S$.  When several convex sets may coexist in the same
space, we may use the name of a set as a subscript to indicate which 
convex set we are generating a face in, e.g. $\face_C(S)$ is the face of $C$
generated by $S$.  

Both the set of faces and the set of exposed faces of a compact
convex set $\Omega$ or of a (pointed, closed, generating) cone $V_+$,
are  bounded lattices.  Their upper bounds are
$\Omega$ (resp. $V_+$), and lower bound $\emptyset$ (resp. $\{0\}$).
\footnote{The face lattice is an affine invariant of compact convex
  sets, but does not fully separate affine equivalence classes of
  compact convex sets, as the fact that a cone and its base have
  isomorphic face lattices, but the cone does not in general determine
  its base up to affine equivalence, illustrates.}

A cone $V_+$ is said to be \emph{perfect} if $V$ can be equipped with
an inner product such that every face $F$ of $V_+$, including $V_+$
itself, is equal to its dual with respect to the restriction of the
inner product to $\lin F$.\footnote{The cone with pentagonal base
  (indeed, the cone over any regular $n$-gon with odd number of
  vertices greater than $3$) illustrates that this is properly
  stronger than self-duality.}

In the literature on general probabilistic theories, systems
are sometimes modeled using an explicit specification of a proper
subset of the sets of effects and of measurements, or positive maps,
subject to reasonable conditions, called the ``allowed'' or
``physical'' effects, measurements, or positive maps.  
When the full set of effects is allowed, the system is said to satisfy
the ``no-restriction hypothesis''.  We are concerned with intrinsic
properties of the compact convex set $\Omega$ which cannot depend on
such a choice, so we have no need of this possibility.  We mention it
because the framework used in \cite{BMU} does allow for the
possibility of a restricted set of allowed effects, and the notions of
spectrality and strong symmetry in \cite{BMU} are defined as we define
them here except that the notion of perfect distinguishability used is
with respect to the set of allowed effects.   In principle this
could affect the notions of spectrality, and of strong symmetry, for a
fixed state space.  However, the conjunction of spectrality and strong
symmetry, even with respect to a restricted set of effects, is shown
in \cite{BMU} to imply no-restriction.  All of the results we use from
\cite{BMU} concern consequences of the conjunction of these two
properties, so they hold equally for strongly symmetric spectral sets
in our more specialized setting.

\section{Distinguishability, spectrality, and strong symmetry}
\label{sec: distinguishability spectrality strong symmetry}

In this section we continue to use $\Omega$ to refer to an arbitrary
finite-dimensional compact convex set.

\begin{definition}\label{def: perfectly distinguishable}
A set of points $\omega_i, i \in \{1,...,k\}$ in $\Omega$ is called
\emph{perfectly distinguishable} if there is a measurement $\{e_i\}, i
\in \{1,...,k\}$ such that $e_i(\omega_j) = \delta_{ij}$.  
\end{definition}


  In the operational interpretation, perfect distinguishability means
  that if we are guaranteed that a physical system has been prepared
  in one of the states $\omega_1,..,\omega_k$, but we do not know
  which one, we may ascertain which one was prepared with perfect
  certainty by doing the measurement $\{e_i\}$.  If we get the result
  $j$, then the state that was prepared must have been $\omega_j$.
  Perfect distinguishability, and approximations to it, are central to
  considerations of the information storage and processing properties
  of abstract systems in this interpretation.

An equivalent characterization of distinguishability will also be
useful to us.  We call an indexed set of effects $E=\{e_i\}$ a
\emph{submeasurement} if $\sum_i e_i \le u$.  For each submeasurement
$E = \{e_1,...,e_r\}$, the indexed set $E' := \{e_1,...,e_r,
e_{r+1}\}$, where $e_{r+1}:= u - \sum_{i=1}^r e_i$, is a measurement.
We could have equally well defined perfect distinguishability of
$\omega_1,...,\omega_r$ as the existence of a submeasurement such that
$e_i(\omega_j) = \delta_{ij}$, for it is easy to see that
$e_{r+1}(\omega_i) = 0$ for all $i \in \{1,..,r\}$, whence there are
many ways of constructing an $r$-outcome measurement
$\{e'_{1},...,e'_{r}\}$ such that $e_i(\omega_j)= \delta_{ij}$---for
instance, let $e'_1 = e_1 + e_{r+1}$, and $e'_i = e_i$ for all $i \in
\{2,...,r\}$.

\begin{definition}\cite{BMU}
\label{def: frame}
A sequence $\omega_1,...,\omega_k$ of perfectly distinguishable 
\emph{pure} states is called a \emph{frame}, or a $k$-frame if we
wish to specify its cardinality.
\end{definition}

In GPT models in which restricted sets of allowed effects are
specified, as done in \cite{BMU}, distinguishability and frames are
usually defined with respect to the \emph{allowed} effects, so fewer
sets of states may be distinguishable, and there may be fewer frames,
than in the no-restriction case.  Although we referenced \cite{BMU}
for the above definitions, we have defined distinguishability, and
hence all concepts dependent on it, with respect to the set of
\emph{all} effects.  So for us, these notions depend only 
the convex geometry of $\Omega$.  

\begin{definition} \label{def: maximal frame}
A frame in $\Omega$ is called \emph{maximal} if it is not a
subsequence of any other frame. 
\end{definition}
 Although the cardinality of a
frame can be no greater than $n+1$ (recall that $n:= \dim (\aff{\Omega}$)),
for a given $\Omega$ the maximal cardinality of a frame may be much
less than $n$.\footnote{An example of the latter phenomenon is the mixed-state
space (the density matrices) of a $d$-dimensional quantum system,
where as mentioned before, $\dim(\aff{\Omega}) = d^2-1$.  The frames are
just lists of mutually orthogonal rank-one projectors, and the
cardinality (length) of a largest such list is $d$, which is far from
achieving the general upper bound of $n+1$ which is here $d^2$.
Modulo reorderings a simplex has a unique maximal frame, the set of
vertices of the simplex; the frames of a simplex are just the ordered
subsets of the vertices, i.e. finite sequences, without repetition, of
vertices.  In this case, the general upper bound is achieved.}


We are now ready to introduce the two properties of convex sets that
we will use in our characterization theorem.

\begin{definition}[\cite{BMU}]
A convex compact set $\Omega$ is called \emph{spectral} if, for each
point $\omega \in \Omega$, there is some frame whose convex hull
contains $\omega$.\footnote{There are other notions of spectrality for convex compact sets in the
literature, for instance Alfsen and Shultz's (\cite{ASBook2},
Definition 8.74).  Although related, Alfsen and Shultz's notion should not be
confused with the one used here.  However the conjunction of our weak
notion of spectrality with strong symmetry also implies \cite{BMU}
that $\Omega$ is spectral in Alfsen and Shultz's sense.  Riedel
\cite{RiedelSpectral} also introduced a notion of spectral ordered linear
space, but we will not use it here.}
\end{definition}

This is a very strong property of convex compact sets.
Jordan algebraic state spaces satisfy a stronger property, called
\emph{unique spectrality}, which requires that for any two convex 
decompositions $\omega = \sum_{i=1}^k p_i \omega_i = \sum_{i=1}^r q_i
\tau_i$ of $\omega$ into the elements $\omega_i, \tau_i$ of two
frames, $k=r$ and there exists a permutation $\sigma$ of $\{1,...,k\}$
such that $q_{\sigma(i)} = p_i$.  Unique spectrality was not assumed in
\cite{BMU} and will not be assumed in our characterization theorem
either, but it follows \cite{BMU} from the conjunction of
spectrality and the next property, strong symmetry.

\begin{definition}
A convex compact set $\Omega$ of dimension $n$ is called
\emph{strongly symmetric} if, for each $k \in \{1,...,n+1\}$,
$\aut{\Omega}$ acts transitively on the set of $k$-frames.
\end{definition}

Note that the set of $k$-frames may well be empty for many values of $k$
(in which case, trivially, $\aut{\Omega}$ acts transitively on it).  

``Strong symmetry'' was the term introduced in \cite{BMU}.
``Frame-symmetric'' or ``frame-transitive'' might have been a better
choice.  The notion of frame that is involved in the definition of
strong symmetry is not itself obviously symmetry-related.  And for
sets having few frames, it is actually not a very strong symmetry
requirement.  

Since a $k$-frame is a finite \emph{sequence} of perfectly
distinguishable pure states, strong symmetry implies that one can
arbitrarily permute the elements of any set of perfectly
distinguishable pure states, via symmetries.  It is therefore at least
\emph{prima facie} a stronger property than requiring that every
\emph{set} of perfectly distinguishable pure states (which we might
call an \emph{unordered frame}) can be mapped onto any other such set
of the same size by a symmetry.\footnote{The pentagon (like any regular $n$-gon with
  $n>3$) provides an example of a non-spectral but strongly symmetric
  convex compact set.  Spectral but not strongly symmetric convex
  compact sets also abound: for example, any ball is easily smoothly
  deformed to another smooth, strictly convex set with trivial
  automorphism group; all smooth, strictly convex sets are spectral.}

We collect some more consequences of the conjunction of spectrality
and strong symmetry in the following proposition.  

\begin{proposition}[Mostly from \cite{BMU}] \label{prop: consequences}
For a convex compact set $\Omega$ that is spectral and strongly symmetric, the
following hold:
\begin{enumerate}
\item \label{item: frames and faces} Every face of $\Omega$ is
  generated by a frame.  Any two frames that generate the same face
  $F$ have the same cardinality, which we call the \emph{rank}, $|F|$,
  of the face.  If the face $G$ is a proper subset of $F$, then $|G| <
  |F|$.
\item \label{item: faces exposed} 
Every face of $\Omega$ is exposed.
\item \label{item: perfection} The cone $V_+ := \R_+\Omega$ over
  $\Omega$ is a perfect self-dual cone.  The self-dualizing
  inner product $(.,.)$ can be chosen to be $\aut{\Omega}$-invariant, and
such that $(\omega, \omega) = 1$ for all
  pure states (extremal points) $\omega$ of $\Omega$.
\item \label{item: frames orthogonal} With respect to the
  self-dualizing inner product on $V$, the elements of any frame are
  an orthonormal set.  The states of a frame, viewed as elements of
  the dual space via this inner product, are effects, and are
  therefore a distinguishing submeasurement for that frame.  If
  $\omega_1,...,\omega_n$ is a maximal frame, i.e. a frame for
  $\Omega$, then $\sum_{i=1}^n \omega_i$ is the order unit.  
\item \label{item: barycenters of faces} If $F = \omega_1 \join \cdots
  \join \omega_k$ for some frame $\omega_1,...,\omega_k$ (equivalently, $F$
  is the face generated by that frame) then the barycenter of $F$ is
  $\sum_{i=1}^k \omega_i /k$.  (It follows that 
$\sum_{i=1}^k \omega_i$ does not depend on which frame 
$\omega_1,..., \omega_k$ for $F$ is summed over.)  
\item \label{item: orthocomplement} If $F$ is a face of $V_+$, 
(resp. $\Omega$) then $F' := F^\perp \intersect V_+$ (resp. $F^\perp
  \intersect \Omega$)  is a face of $V_+$
  (resp. $\Omega$) such that $F \meet F' = \{0\}$ (resp. $F
  \meet F' = \emptyset$) and $F \join F' = V_+$ (resp. $F \join
  F' = \Omega$). (Here all $\perp$'s and linear spans are taken in
  $V$, with respect to the self-dualizing, invariant inner product.)
In a lattice, these two conditions define what it means for an element $F'$ to be a 
\emph{complement} of $F$, so we call $F'$ the face complementary to $F$, or simply 
$F$'s complement.
\item \label{item: frame extensions and maximal frames}
The face generated by a maximal frame is $\Omega$ itself.  Every
  frame $A$ of $\Omega$, generating a face $F$, extends to a maximal
  frame $M$, by appending a frame $B$ for $F'$.  Similarly if $F < G$
(i.e. $F \subsetneq G$), every frame for $F$ extends to a frame
for $G$.  
\item \label{item: orthomodularity} The map $F \mapsto F'$ on the face
  lattice of $V_+$ (equivalently of $\Omega$) is an orthocomplementation,
  with respect to which the lattice is orthomodular, i.e. for $F \le
  G$, $G = F \join (F' \meet G)$.  The additional states appended to a
  frame on $F$ in order to extend it to a frame on $G$ (cf. item \ref{item:
    frame extensions and maximal frames} above), are a frame for $F' \meet
  G$.
\end{enumerate}
\end{proposition}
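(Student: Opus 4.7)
The plan is to leverage spectrality to reduce every statement about a face or a state to one about a frame, and then to use strong symmetry for normalization and comparison. First I would prove item~\ref{item: frames and faces}. Given a point $\omega$ in a face $F$, spectrality yields a frame $\omega_1,\ldots,\omega_k$ with $\omega \in \conv\{\omega_1,\ldots,\omega_k\}$ and all convex coefficients positive; the face property then forces each $\omega_i$ to lie in $F$. Applying this to a relative-interior point of $F$ gives a frame that generates $F$. Well-definedness of the rank would come by first establishing item~\ref{item: frame extensions and maximal frames} (every frame extends to a maximal one, via spectrality applied inside the complementary face), and then using strong symmetry on maximal frames to match up any two frames generating $F$.

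Second, I would establish items~\ref{item: perfection} and~\ref{item: frames orthogonal} in tandem, which is the heart of the argument. Averaging any inner product over the compact group $K := \aut{\Omega}$ yields a $K$-invariant one; transitivity of $K$ on pure states (from strong symmetry on $1$-frames) makes $(\omega,\omega)$ constant, so I can rescale to $1$. For orthogonality within a frame $\omega_1,\ldots,\omega_k$, strong symmetry applied to $(k-1)$-frames inside the complementary face of $\omega_1$ supplies a subgroup of $K$ that fixes $\omega_1$ and acts transitively on $\omega_2,\ldots,\omega_k$; group-averaging a distinguishing effect $e_1$ over this subgroup, combined with uniqueness of the appropriate invariant functional, forces the identification $e_1 \leftrightarrow \omega_1$ under $V^* \simeq V$, whence $(\omega_i,\omega_j) = e_i(\omega_j) = \delta_{ij}$. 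Summing a maximal frame then recovers the order unit. Self-duality of $V_+$ follows because every element of $V_+$ is, by spectrality, a nonnegative combination of orthonormal pure states; perfection comes from applying the same argument face by face.

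Third, for items~\ref{item: faces exposed},~\ref{item: barycenters of faces},~\ref{item: orthocomplement}, and~\ref{item: orthomodularity}: exposedness of $F$ is obtained by extending a frame for $F$ to a maximal frame, since the sum of the complementary frame states is then an effect vanishing exactly on $F$. The barycenter formula is immediate because strong symmetry supplies symmetries of $F$ that permute any fixed frame arbitrarily, making $\tfrac{1}{k}\sum_i \omega_i$ the unique invariant point of $\conv\{\omega_1,\ldots,\omega_k\}$, which is its barycenter by Proposition~\ref{prop: barycenter from group averaging}. Given a face $F$ with frame $\omega_1,\ldots,\omega_k$ extended to a maximal frame $\omega_1,\ldots,\omega_n$, the face generated by $\omega_{k+1},\ldots,\omega_n$ equals $F^\perp \cap V_+$, and the identities $F \meet F' = \{0\}$, $F \join F' = V_+$ follow from orthonormality together with the fact that a maximal frame spans $V$. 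Orthomodularity is obtained analogously: for $F \le G$, extend a frame for $F$ inside $G$ to a frame for $G$; the added states form a frame for $F' \meet G$, whose join with $F$ is $G$.

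The main obstacle I expect is the coupled argument behind items~\ref{item: perfection} and~\ref{item: frames orthogonal}. One must simultaneously produce a single inner product that is $K$-invariant, normalizes pure states, self-dualizes $V_+$, and makes every frame orthonormal, and then identify the distinguishing effects of a frame with the frame states themselves under $V^* \simeq V$. Strong symmetry is critical here: it converts the abstract duality between states and distinguishing effects---which \emph{a priori} depends on arbitrary choices---into genuine geometric orthogonality under one fixed invariant inner product, without which none of the orthocomplement structure in items~\ref{item: orthocomplement}--\ref{item: orthomodularity} would be available.
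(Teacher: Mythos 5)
First, a point of comparison: the paper does not actually reprove these items; its ``proof'' is almost entirely a list of citations to Propositions 2, 3, 6, 7 and Theorems 8 and 9 of \cite{BMU} (plus Iochum's equivalence between perfection and positivity of face projections), with only item \ref{item: barycenters of faces} argued in an appendix of the present paper. Your reconstruction follows the same architecture as the arguments in \cite{BMU}---invariant inner product by group averaging, orthonormality of frames, self-duality, then the orthocomplementation and orthomodularity---so it is not a genuinely different route so much as an attempt to fill in what the paper outsources. Judged on those terms, two of your steps do not go through as stated.

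(1) There is a circularity in your ordering. You prove orthogonality within a frame by ``strong symmetry applied to $(k-1)$-frames inside the complementary face of $\omega_1$'', and you prove rank well-definedness in item \ref{item: frames and faces} by first extending frames ``via spectrality applied inside the complementary face''. But the complementary face $F'$ is defined (item \ref{item: orthocomplement}) as $F^\perp \intersect V_+$ with respect to the very inner product whose properties you are in the middle of establishing. To break the circle you need an inner-product-free surrogate for the complement---e.g.\ the exposed face $\{\omega : e(\omega)=0\}$ cut out by an effect $e$ with $e(F)=1$---together with an argument that its pure states are exactly those distinguishable from every state of $F$; none of this appears, and the phrase ``uniqueness of the appropriate invariant functional'' is carrying the entire identification $e_1 \leftrightarrow \omega_1$ without justification. (2) Your self-duality argument is incomplete: orthonormality gives $(\omega_i,\omega_j)=\delta_{ij}$ only for pure states belonging to a \emph{common} frame, so writing two cone elements as nonnegative combinations over two different frames does not yet show they pair nonnegatively. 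You need $(\omega,\tau)\ge 0$ for arbitrary pure $\omega,\tau$, which requires first knowing that every pure state, viewed as a functional via the inner product, is an effect. Even granting the inclusion $V_+\subseteq V_+^{*}$, the reverse inclusion needs a real-coefficient ``spectral'' decomposition over frames of arbitrary vectors of $V$ (not merely of states in $\Omega$), which is itself a nontrivial consequence you never establish. These are precisely the places where \cite{BMU} has to work hardest, so the gaps are not cosmetic.
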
 

We may think of orthomodularity as stating that $F' \meet G$ behaves
as a ``relative orthocomplement'' of $F$ in $G$, i.e. an orthocomplement
in the sublattice consisting of elements below or equal to $G$.  

\begin{proof}
Item \ref{item: frames and faces} is Proposition 2 of \cite{BMU}.  
Item \ref{item: faces exposed} follows easily from item 
\ref{item: frames and faces}.

Item \ref{item: perfection} is established in \cite{BMU} in the course
of proving Theorem 8 of that paper, which states that for every face
$F$, the orthogonal projection (with respect to the self-dualizing $\aut{\Omega}$-invariant
inner product) $P_F$ onto the linear span of a face $F$, is a positive
map.  Iochum \cite{IochumBook} showed that for self-dual cones,
positivity of all such projections with respect to a self-dualizing
inner product is equivalent to perfection.  The
self-duality of the cones $\R_+ \Omega \subset V$ over strongly
symmetric spectral convex sets, with respect to an inner product with the
stated properties, was established as Proposition 3 of
\cite{BMU}.\footnote{In fact self-duality does not require spectrality, nor
does it require the full strength of strong symmetry: in
\cite{MuellerUdudecSelfDuality} it was shown that transitivity of
$\aut{\Omega}$ on $2$-frames (ordered pairs of perfectly distinguishable
states) implies self-duality. 
}

Item \ref{item: frames orthogonal} is
Proposition 6 of \cite{BMU}.

Item \ref{item: orthocomplement} is
partly stated in Proposition 7 of \cite{BMU}, and the rest can be
extracted from the proof of that Proposition. 

Item \ref{item: frame extensions and maximal frames} is part of
Proposition 7 of \cite{BMU}.    

Item \ref{item: orthomodularity} begins with the claim that the
map $'$ is an orthocomplementation.  A complementation is an
involutive map of a bounded lattice such that $F \join F' = 1$ and $F
\meet F' = 0$.  It is called an orthocomplementation if it is
order-reversing: $F \le G \Leftrightarrow G' \le F'$.  The
involutiveness of $'$ is also part of Proposition 7 of \cite{BMU}.
The other two conditions on a complementation are part of item
\ref{item: orthocomplement} above.  The last part of
orthocomplementation, order-reversingness, is shown as part of the
proof of Theorem 9 in \cite{BMU}; it follows directly from the
extendibility of frames to maximal frames.

The second part of item \ref{item: orthomodularity},
i.e. orthomodularity, is Theorem 9 of \cite{BMU}.  The crucial element
in its proof (given that we have already established that $'$ is an
orthocomplement) is the ``relative frame extension property'',
i.e. the last sentence in item \ref{item: frame extensions and
  maximal frames}.  The last sentence of item \ref{item:
  orthomodularity} is a step in this proof from \cite{BMU}.

The only item we have not covered yet is item \ref{item: barycenters
  of faces}. It does not appear to be explicitly stated in \cite{BMU},
although it is likely known.  We include a proof in Appendix
\ref{sec: faces of sss sets are sss}, as part of a proof that the
faces of strongly symmetric spectral sets are strongly
symmetric and spectral.
\end{proof}
 
\section{A class of examples: strongly symmetric spectral convex sets from simple Euclidean Jordan algebras}
\label{sec: Euclidean Jordan algebras}

Finite-dimensional Euclidean Jordan algebras were introduced by
Pascual Jordan around 1932 \cite{Jordan}, as a possible algebraic
setting for the formalism of quantum theory.  The notion abstracts
properties of the complex Hermitian matrices, which are the
observables\footnote{As noted in Section \ref{sec: background}, in the setting of general probabilistic theories,
  a reasonable generalization of the space of observables is the
  vector space $V^*$.} of a finite-dimensional quantum system; in
particular, properties of the \emph{symmetrized} product $A \bullet B
:= (AB + BA)/2$ which, unlike the ordinary associative matrix product,
preserves Hermiticity.  Our main result asserts that we can identify,
up to affine isomorphisms, the spaces of normalized states of a
certain class of these algebras with the strongly symmetric spectral
convex compact sets.  So in this section, we will introduce these
algebras, define their normalized state spaces, and give their
classification so that we can identify them when they appear in the
classification of regular convex bodies (as defined in \ref{def: regular}) .  We also give the proof,
mainly consisting of references to known results, that they are
strongly symmetric and spectral, since this is one direction (the
already-known one) of our main theorem.  Finally, we explain that the
action of the automorphism group of the normalized state space on the
subspace where it acts nontrivially, is via a polar representation,
indeed a symmetric space representation, and identify a point in the
representation, the convex hull of whose orbit is the normalized state
space. This will be used in Section \ref{sec: main result} to connect
these state spaces with the Madden-Robertson classification of regular
convex bodies, since the latter proceeds by showing that their
symmetry groups' actions can be induced by polar representations.

Our main reference for facts about Euclidean Jordan algebras and the
associated cones will be Chapters I-V of the book of Faraut and
Koranyi \cite{FarautKoranyi}, which deals with symmetric cones in
finite dimension.  We also refer to Chapter I of
\cite{Satake}.\footnote{\cite{FarautKoranyi} is the most elementary
  and the most focused on our concerns, but because it is very
  detailed, relevant material is occasionally somewhat scattered and
  one sometimes needs to chase chains of definitions and theorems.
  Satake's Chapter I \cite{Satake} is succinct and extremely well
  organized, but uses notions somewhat more general than we need
  (e.g. Jordan triple systems), and also uses some
  machinery from algebraic geometry and algebraic groups.  Another
  good reference, but less focused than the other two on the concerns
  of this paper, and somewhat more involved because it also covers
  infinite-dimensional cases, is the first part of \cite{ASBook2}.
  The spectral theorem is Theorem 2.20 on p. 46.  Neither this nor
Satake's chapter is as explicit as \cite{FarautKoranyi} about
  frames and frame-transitivity.}

\label{subsec: Euclidean Jordan algebras}

A Jordan algebra is a real algebra (i.e., a real vector space $V$
equipped with a bilinear product $\bullet: V \times V \rightarrow V$)
that is commutative and that, while not in general associative, satisfies a
special case of associativity, the \emph{Jordan property}: $a^2\bullet
(a \bullet b) = a \bullet (a^2 \bullet b)$, where we use the notation
$a^2 := a \bullet a$.  It need not have a unit, but one can always be
adjoined if it is absent.  We will consider only unital
finite-dimensional Jordan algebras.  A Jordan algebra is called
\emph{formally real} if $a^2 + b^2 = 0$ implies that $a=b=0$.    In
finite dimension, formal reality coincides with another property,
Euclideanity.
A Jordan algebra $(V, \bullet)$ is said to be 
Euclidean if it is possible to introduce an inner product $( \, \cdot
\, ,\, \cdot \,): V \times V \rightarrow \R$ that is ``associative'',
i.e. $(a \bullet b, c) = (a, b \bullet c)$.  

The set of squares in a Jordan algebra is obviously closed under
nonnegative scalar multiplication.  In a formally real (equivalently
Euclidean), Jordan algebra, it is also closed under addition, so it 
is a convex cone (cf. 
e.g. \cite{FarautKoranyi}, Ch III $\S$2), which we call $V_+$.   It is immediate from 
formal reality that $V_+$ is pointed.  Since it is in
addition topologically closed, it has a compact convex base, giving an
example of the formalism of compact convex sets embedded as bases
of regular cones, described in Section \ref{sec: background}, and
permitting an ``operational'' interpretation as the state space of a
physical system.  $V_+$  is self-dual with respect to 
the associative inner product.

Soon after Jordan introduced them, Jordan, von Neumann and Wigner
\cite{JNW} classified the finite-dimensional formally real Jordan
algebras. They are precisely the $n \times n$ self-adjoint matrices
with entries in $\R, \C$, or $\H$ and the $3 \times 3$ octonionic
self-adjoint matrices, equipped in each case with symmetrized matrix 
multiplication $x \bullet y = (xy + yx)/2$ as Jordan product, and the \emph{spin factors} $\R^n \oplus \R$ for
every $n \ge 1$, equipped with the product 
\beq
\label{eq: spin factor
  Jordan product} (\x,s)\bullet (\y,t) = (t\x + s\y , \langle \x, \y
\rangle + st).  \eeq Here $\x, \y \in \R^n$, $s, t \in \R$.  Self-adjoint
(``Hermitian'' is also used) means $M = M^\dagger$, where $M^\dagger
:= \overline{M}^t$, and $\overline{M}$'s entries are the conjugates of
$M$'s with respect to the canonical conjugation on $\R, \C$, $\H$, or 
$\Oct$.
The conjugation is the identity in the case of $\R$, thus the
self-adjoint real matrices are just the real symmetric matrices.

Although we do not make direct use of them in obtaining our results, 
important facts about the positive cones of Euclidean Jordan algebras are (1) the Koecher-Vinberg theorem \cite{Koecher,
  Vinberg} \footnote{Proofs may also be found in \cite{Satake}, Ch. I
  Theorem 8.5, \cite{FarautKoranyi}, Theorem III.3.1, and
  \cite{BarnumWilceLocalTomography}.} that the cones of squares in
finite-dimensional formally real Jordan algebras are precisely the
homogeneous self-dual cones (homogeneity being defined as transitive
action of the automorphism group of the cone on the interior), and (2)
the result that they are precisely the \emph{symmetric cones},
i.e. the regular cones whose interiors are Riemannian symmetric
spaces (\cite{Koecher:57}; see also \cite{Vinberg}).

\subsection{Normalized Jordan algebra state spaces: spectrality and strong symmetry}
 
Now we define the normalized state spaces of Euclidean Jordan algebras
(which we will henceforth sometimes abbreviate as ``EJAs''), and
explain how to show the known fact that they are spectral and strongly
symmetric.  

Recall that an idempotent in an algebra is an element $c$ such that 
$c^2 = c$, and it is called \emph{primitive} if it is not a nontrivial
sum of other idempotents.  

\begin{definition}\label{def: Jordan frame}
A \emph{Jordan frame} is defined to be a complete set of orthogonal
primitive idempotents $c_i$ in a Euclidean Jordan algebra, where
orthogonality means that $c_i \bullet c_j = \delta_{ij} c_i$ and
completeness means $\sum_i c_i = e$, the unit of the algebra.
\end{definition} 

\begin{theorem}[Spectral theorem for finite-dimensional Euclidean Jordan algebras]
\label{theorem: Jordan spectral theorem}
Every element $x$ of a Euclidean Jordan algebra has a decomposition
\beq \label{eq: Jordan spectral decomposition 1} x = \sum_{i=1}^r
\lambda_i c_i \eeq where $\lambda_i \in \R$ and $c_i$ are a Jordan
frame.  When we rewrite this as \beq \label{eq: Jordan spectral
  decomposition 2} x = \sum_\alpha \lambda_\alpha c_\alpha, \eeq where
$c_\alpha := (\sum_{i \in \alpha} c_i)$ and the sets $\alpha \subseteq
\{1,...,r\}$ are a partition of the indices into the largest subsets
within which $\lambda_i =: \lambda_\alpha$ is constant, then the
decomposition (\ref{eq: Jordan spectral decomposition 2}), into
not-necessarily-primitive idempotents, is unique.
\end{theorem}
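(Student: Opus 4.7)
The plan is to first prove a coarse spectral decomposition using the minimal polynomial of $x$ inside the commutative associative subalgebra it generates with the unit, then refine the resulting idempotents to primitive ones via induction on dimension, and finally derive uniqueness of the coarse form from the fact that the idempotents can be expressed as polynomials in $x$.

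First I would establish \emph{power associativity}: the Jordan identity $a^2\bullet(a\bullet b) = a\bullet(a^2\bullet b)$ is exactly the statement that the operators $L(a)$ and $L(a^2)$ commute, and a short induction shows that the subalgebra $\R[x] := \spn\{e,x,x^2,x^3,\ldots\}$ is associative and commutative, so that monomials $x^n$ are unambiguously defined. Restrict the associative inner product to $\R[x]$; this makes $\R[x]$ a finite-dimensional commutative associative real algebra with a positive definite invariant bilinear form. Let $m(t) \in \R[t]$ be the minimal polynomial of $x$ in this subalgebra.

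The key step is to show $m(t)$ factors as $\prod_{\alpha}(t-\lambda_\alpha)$ with distinct real $\lambda_\alpha$. For \emph{simplicity} of roots, suppose $m(t) = q(t)^2 r(t)$ with $\deg q \ge 1$; then $(q(x)\cdot \sqrt{r(x)})$-type manipulations fail because we do not have square roots, so instead I would argue that $q(x)^2 r(x) = 0$ forces $(q(x))^2 r(x) = 0$; multiplying by $r(x)$ once more and using the associative inner product gives $(q(x)r(x), q(x)r(x)) = 0$, hence $q(x)r(x)=0$, contradicting minimality of $m$. For \emph{reality} of roots, suppose $m$ has an irreducible real quadratic factor $t^2+at+b$ with $a^2<4b$; then the element $y := x+\tfrac{a}{2}e$ satisfies, after dividing out the factor, $y^2 = (-b+\tfrac{a^2}{4})e = -\gamma e$ with $\gamma>0$. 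Setting $z := \sqrt{\gamma}\,e$, we have $y^2 + z^2 = 0$ with $z \ne 0$, violating formal reality. Hence $m(t) = \prod_{\alpha=1}^s(t-\lambda_\alpha)$ with distinct $\lambda_\alpha \in \R$. By the Chinese Remainder Theorem applied to $\R[t]/(m(t))$, the Lagrange interpolants $p_\alpha(t) := \prod_{\beta\neq\alpha}(t-\lambda_\beta)/(\lambda_\alpha-\lambda_\beta)$ give elements $c_\alpha := p_\alpha(x)$ in $\R[x]$ that are pairwise orthogonal idempotents summing to $e$ and satisfy $\sum_\alpha \lambda_\alpha c_\alpha = x$. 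This yields the coarse decomposition (\ref{eq: Jordan spectral decomposition 2}), and its uniqueness is automatic: any decomposition $x=\sum_\beta \mu_\beta d_\beta$ with distinct $\mu_\beta$ and orthogonal idempotents $d_\beta$ summing to $e$ forces $\{\mu_\beta\}$ to be the roots of $m$ and $d_\beta = p_\beta(x) = c_\beta$.

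To obtain (\ref{eq: Jordan spectral decomposition 1}), I would refine each $c_\alpha$ into primitive pieces. Proceeding by induction on $\dim V$: if an idempotent $c$ is not primitive, write $c = c' + c''$ with nonzero orthogonal idempotents, then work inside the Peirce-$1$ subalgebras $V_1(c') := \{v : c'\bullet v = v\}$ and $V_1(c'')$, each of which is a Euclidean Jordan algebra with unit $c'$, $c''$ respectively, of strictly smaller dimension; apply the inductive hypothesis to split each further. The process terminates after finitely many steps, producing a Jordan frame $\{c_i\}$ refining $\{c_\alpha\}$, and grouping the $c_i$ within each $c_\alpha$ and assigning them the common value $\lambda_\alpha$ gives the expansion (\ref{eq: Jordan spectral decomposition 1}). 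The main obstacle I anticipate is the reality-of-roots argument in the middle paragraph, because it is the only place where formal reality is used in an essential and non-routine way; once that hurdle is cleared, the existence, refinement, and uniqueness steps are all standard commutative-algebra manipulations.
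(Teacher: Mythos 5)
The paper does not actually prove this theorem: it is stated with a one-line citation to Theorems III.1.1 and III.1.3 of Faraut--Koranyi, and your argument is essentially the standard proof found there (power associativity of $\R[x]$, analysis of the minimal polynomial via the associative inner product, CRT idempotents, Peirce-decomposition refinement to a Jordan frame, and uniqueness from $c_\alpha = p_\alpha(x)$). Your existence, refinement, and uniqueness steps are all sound, granting the standard facts that $V_1(c')$ is a Euclidean Jordan subalgebra with unit $c'$ of strictly smaller dimension and that refinements of orthogonal idempotents remain mutually orthogonal.

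One step needs repair as written: in the reality-of-roots argument, if $m(t)=(t^2+at+b)\,s(t)$ with $s\not\equiv 1$, the element $y=x+\tfrac{a}{2}e$ does \emph{not} satisfy $y^2=-\gamma e$ in $V$; ``dividing out the factor'' yields an identity only in the quotient ring $\R[t]/(t^2+at+b)$, whereas formal reality is a hypothesis about elements of $V$, not of an abstract quotient. The fix is routine given what you have already established: since $m$ is squarefree, let $c=p(x)\in\R[x]$ be the idempotent corresponding to the CRT factor $\R[t]/(t^2+at+b)\cong\C$, so that this factor is realized as the subalgebra $c\bullet\R[x]\subseteq V$ with unit $c$; then $y:=(x+\tfrac{a}{2}e)\bullet c$ satisfies $y^2=-\gamma c$ with $\gamma=b-\tfrac{a^2}{4}>0$, and taking $z:=\sqrt{\gamma}\,c\neq 0$ gives $y^2+z^2=0$ inside $V$, contradicting formal reality. (Alternatively, both simplicity and reality of the roots follow in one stroke from the observation that $L(x)|_{\R[x]}$ is symmetric for the associative inner product, hence diagonalizable over $\R$, so its minimal polynomial --- which coincides with that of $x$ --- has distinct real roots; this is the route Faraut--Koranyi take.)
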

This is a combination of Theorems III.1.1
and III.1.3 in \cite{FarautKoranyi}.  

The values $\lambda_\alpha$ are the \emph{spectrum} of $x$.
We can define the \emph{trace} of $x \in V$ as $\sum_i \lambda_i$, 
and the determinant as $\Pi_i \lambda_i$, 
where $\lambda_i$ are the coefficients in the spectral decomposition
(\ref{eq: Jordan spectral decomposition 1}).  Using the trace we define
a bilinear form $(x,y) := \tr (x \bullet y)$ on the Jordan algebra.
Proposition III.1.5 of \cite{FarautKoranyi} states that the positive
definiteness of this form (making it an inner product) is equivalent
to Euclideanity.  Indeed, in a simple Euclidean Jordan algebra every
associative inner product is a positive scalar multiple of this one.  

\begin{definition}
In a Euclidean Jordan algebra $V$, 
the squares satisfying $\tr x = 1$ form a compact convex
base $\Omega$ for the cone $V_+$ of squares.  We call this 
the \emph{normalized state space} of $V$.
\end{definition}

\begin{proposition}[e.g. \cite{FarautKoranyi}, Corollary IV.3.2]
\label{prop: primitive idempotents extremal}
The primitive idempotents are precisely the extremal points of
$\Omega$.
\end{proposition}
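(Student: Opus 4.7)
The plan is to combine the spectral theorem (Theorem \ref{theorem: Jordan spectral theorem}) with the self-duality of the cone $V_+$ and the positive definiteness of the associative inner product $(x,y) := \tr(x \bullet y)$. For the direction \emph{extremal implies primitive idempotent}, I would start with an arbitrary $\omega \in \Omega$ and apply Theorem \ref{theorem: Jordan spectral theorem} to obtain a Jordan frame $c_1, \ldots, c_r$ and coefficients $\lambda_i \in \R$ with $\omega = \sum_i \lambda_i c_i$. Positivity of $\omega$ forces $\lambda_i \ge 0$ and $\tr \omega = 1$ forces $\sum_i \lambda_i = 1$. Each $c_i$ is a primitive idempotent and, reading off its own spectral decomposition from the same Jordan frame, has $\tr c_i = 1$, so $c_i \in \Omega$. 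Thus $\omega$ is a convex combination of elements of $\Omega$, and extremality collapses the combination to $\omega = c_j$ for some $j$.

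For the converse, let $c$ be a primitive idempotent and suppose $c = \lambda a + (1-\lambda)b$ with $a, b \in \Omega$ and $\lambda \in (0,1)$. I would first establish the key inequality
\[
(c, x) \le \tr(x) \qquad \text{for every } x \in V_+.
\]
Extending $c = c_1$ to a Jordan frame $c_1, \ldots, c_r$, the completeness $\sum_i c_i = e$ gives $\sum_i (c_i, x) = (e, x) = \tr(x)$, and each summand $(c_i, x)$ is nonnegative by self-duality of $V_+$. Since $(c, c) = \tr(c) = 1$, pairing $c = \lambda a + (1-\lambda)b$ with $c$ yields $1 = \lambda(c, a) + (1-\lambda)(c, b)$; combined with $(c, a), (c, b) \le 1$, this forces $(c, a) = (c, b) = 1$.

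It then remains to show that any $x \in \Omega$ with $(c, x) = 1$ equals $c$. Spectrally decompose $x = \sum_j \beta_j p_j$ and apply the same inequality to each $(c, p_j)$; together with $\sum_j \beta_j = 1$ and $\beta_j \ge 0$, this forces $(c, p_j) = 1$ whenever $\beta_j > 0$. For primitive idempotents $c, p$ with $(c, p) = 1$, positive definiteness of the inner product gives
\[
(c - p, c - p) = (c, c) - 2(c, p) + (p, p) = 1 - 2 + 1 = 0,
\]
so $p = c$. Hence $x = c$, and therefore $a = b = c$, confirming extremality.

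The main obstacle is the inequality $(c, x) \le \tr x$ for $x \in V_+$ and primitive $c$; everything else is bookkeeping with the spectral theorem. That inequality couples three ingredients—the trace arising from spectral decomposition, self-duality of the cone expressed via the associative inner product, and completeness of Jordan frames—so the crux is to deploy them in the correct order. Once it is in hand, positive definiteness of $(\cdot,\cdot)$ converts the equality case of the Cauchy-Schwarz-like computation into the uniqueness statement needed to close the argument.
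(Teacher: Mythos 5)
Your argument is correct, but note that the paper does not actually prove this proposition---it is imported wholesale as Corollary IV.3.2 of Faraut--Koranyi, so there is no in-paper proof to match against. What you have produced is a self-contained derivation from facts the paper does record elsewhere in Section \ref{sec: Euclidean Jordan algebras}: the spectral theorem, nonnegativity of the spectrum of a square, $\tr c = 1$ for primitive idempotents, self-duality of $V_+$ with respect to the associative trace form, and positive definiteness of that form. The ``extremal $\Rightarrow$ primitive idempotent'' direction is exactly the expected collapse of the spectral decomposition under extremality (you should add the one-line observation that the $c_i$ in a Jordan frame are distinct, so extremality forces all but one coefficient to vanish). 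The converse direction, via the inequality $(c,x) \le \tr x$ for $x \in V_+$ and the equality case $(c-p,c-p)=0$, is clean and closer in spirit to the ``transition probability'' arguments of Alfsen--Shultz than to Faraut--Koranyi's Peirce-decomposition route; it buys you independence from Chapter IV of \cite{FarautKoranyi} at the cost of invoking self-duality of the cone. The only step you lean on without justification is that every primitive idempotent extends to a Jordan frame, but the paper itself asserts this (footnote in the proof of Proposition \ref{prop: Jordan state spaces are spectral}) as an easy consequence of the spectral theorem, so this is an acceptable dependency rather than a gap.
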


In a Euclidean Jordan algebra $V$ the spectrum of a square is
nonnegative.  One usually represents the dual space internally using
the trace inner product.  Since $\tr x = \tr (e \bullet x) \equiv (e,
x)$, we then have that $e$ is the order unit for this choice of base.

\begin{proposition}\label{prop: Jordan state spaces are spectral}
The normalized state spaces of finite-dimensional 
Euclidean Jordan algebras are spectral.
\end{proposition}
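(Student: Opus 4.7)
The plan is to apply the Jordan spectral theorem (Theorem \ref{theorem: Jordan spectral theorem}) directly and then verify that the primitive idempotents appearing in the decomposition constitute a frame in the GPT sense. Given $\omega \in \Omega$, the spectral theorem produces a Jordan frame $c_1,\dots,c_r$ and real scalars $\lambda_i$ with $\omega = \sum_i \lambda_i c_i$. I would first argue that $\lambda_i \ge 0$ for all $i$: since $\omega$ lies in the cone $V_+$ of squares, and since the $c_i$ are primitive idempotents with $c_i \bullet c_j = \delta_{ij} c_i$, the spectrum of a square must be nonnegative. Next, I would use that the trace of each primitive idempotent equals $1$ in the normalization implicit here (this follows by expanding $\tr(e) = \tr(\sum_i c_i) = r$ together with the convention that makes $\Omega$ the set of squares of trace $1$; equivalently it is Proposition II.4.4 of \cite{FarautKoranyi}). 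This gives $\sum_i \lambda_i = \tr \omega = 1$, so $\omega = \sum_i \lambda_i c_i$ is a convex combination of the $c_i$. By Proposition \ref{prop: primitive idempotents extremal}, each $c_i$ is a pure state of $\Omega$.

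The second step is to exhibit a measurement that perfectly distinguishes the $c_i$. The natural choice is to use the $c_i$ themselves, viewed as functionals via the trace inner product: set $e_i(x) := (c_i, x) = \tr(c_i \bullet x)$. Orthogonality of the Jordan frame gives
\[
e_i(c_j) = \tr(c_i \bullet c_j) = \delta_{ij}\tr(c_i) = \delta_{ij},
\]
and completeness gives
\[
\sum_i e_i(x) = \tr\bigl((\textstyle\sum_i c_i)\bullet x\bigr) = \tr(e \bullet x) = \tr(x) = u(x),
\]
so $\sum_i e_i = u$. To verify that each $e_i$ is indeed an effect, I would use that $V_+$ is self-dual with respect to the trace form: since $c_i = c_i^2 \in V_+$, we have $e_i \ge 0$, and since $e - c_i = \sum_{j\ne i} c_j$ is again a sum of idempotents and hence in $V_+$, we get $u - e_i \ge 0$. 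Thus $\{e_i\}$ is a measurement that distinguishes the pure states $\{c_i\}$, so $\{c_1,\dots,c_r\}$ is a frame whose convex hull contains $\omega$, proving spectrality.

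There is no real obstacle here; the whole argument is a translation between Jordan-algebraic and GPT language. The one small point that deserves care is the normalization conventions for the trace, which must be consistent so that $\tr c_i = 1$ for primitive $c_i$ and $\tr \omega = 1$ for $\omega \in \Omega$ simultaneously; this is just a matter of citing the correct trace (the canonical Jordan-algebraic trace on a simple EJA, and the direct-sum extension in the semisimple case) and remarking that $\Omega$ in the definition is defined relative to this same trace.
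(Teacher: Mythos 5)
Your proof is correct and follows essentially the same route as the paper's: apply the Jordan spectral theorem, note via Proposition \ref{prop: primitive idempotents extremal} that the primitive idempotents are pure states, and use self-duality together with the trace-orthogonality of a Jordan frame to see that the same elements, viewed as effects, form a distinguishing (sub)measurement. You simply spell out a few steps the paper leaves implicit (nonnegativity of the spectrum of a square, $\tr c_i = 1$, and the explicit verification that $\sum_i e_i = u$), all of which are fine.
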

\begin{proof}
The primitive idempotents of an EJA $V$ are precisely the extremal
points of its normalized state space $\Omega$.  Since the cone is
self-dual, and all idempotents are below or equal to the order unit
$e$,\footnote{An easy argument from the spectral theorem gives that
  every primitive idempotent is part of a Jordan frame, and it is part
  of the definition of Jordan frame that it sums to the order unit.}
they are also effects.  Orthogonality of primitive idempotents in the
sense $e_i \bullet e_j = 0$ of Definition \ref{def: Jordan frame}
implies orthogonality with respect to the inner product, $\tr e_i
\bullet e_j = 0$, so any subset of a Jordan frame, considered as a set
of effects, is a submeasurement that perfectly distinguishes the same
subset, considered as states.  Consequently, an ordered subset of a
Jordan frame (and in particular, an ordered Jordan frame itself) is a
frame in the sense of Definition \ref{def: frame}.  So the spectral
theorem implies spectrality.
\end{proof}

In order to complete the proof of strong symmetry, we also show: 
\begin{proposition}\label{prop: all frames are Jordan}
All the frames in an EJA state space are ordered subsets of Jordan frames.  
\end{proposition}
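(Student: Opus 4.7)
The plan is to show that a frame $\omega_1,\ldots,\omega_k$ in the EJA state space consists of pairwise Jordan-orthogonal primitive idempotents, and then to extend it to a complete Jordan frame using the spectral theorem (Theorem \ref{theorem: Jordan spectral theorem}). By Proposition \ref{prop: primitive idempotents extremal}, each $\omega_i$ is a primitive idempotent, so the task reduces to proving $\omega_i \bullet \omega_j = 0$ for $i \ne j$.

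First I would fix a distinguishing submeasurement $a_1,\ldots,a_k$, with $a_i \in V_+$, $(a_i, \omega_j) = \delta_{ij}$ in the trace inner product, and $\sum_i a_i \le e$. The key tool will be the classical correspondence between faces of $V_+$ and idempotents of $V$: the face $\R_+\omega_j$ generated by the primitive idempotent $\omega_j$ has, in the self-dual and perfect cone $V_+$, its complementary face equal to $V_0(\omega_j) \cap V_+$, where $V_0(\omega_j) := \{x \in V : \omega_j \bullet x = 0\}$; this complement coincides with $\{x \in V_+ : (x, \omega_j) = 0\}$. So $(a_i, \omega_j) = 0$ will immediately give $a_i \bullet \omega_j = 0$ for $i \ne j$, and analogously $a' := e - \sum_i a_i \in V_+$ satisfies $(a', \omega_j) = 0$ and hence $a' \bullet \omega_j = 0$.

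Combining these via the identity $e = a_j + \sum_{i \ne j} a_i + a'$ and $e \bullet \omega_j = \omega_j$, I will obtain $a_j \bullet \omega_j = \omega_j$. Peirce-decomposing $a_j = a_j^{(1)} + a_j^{(1/2)} + a_j^{(0)}$ relative to $\omega_j$, together with the one-dimensionality $V_1(\omega_j) = \R\omega_j$, this forces $a_j^{(1)} = \omega_j$ and $a_j^{(1/2)} = 0$, so $a_j = \omega_j + a_j^{(0)}$ with $a_j^{(0)} \in V_0(\omega_j)$. The Peirce orthogonality rule $V_1(\omega_j) \bullet V_0(\omega_j) = 0$ gives $a_j^n = \omega_j + (a_j^{(0)})^n$, so the spectrum of $a_j$ is the disjoint union of those of $\omega_j$ and of $a_j^{(0)}$ viewed as an element of the Jordan subalgebra $V_0(\omega_j)$. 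Therefore $a_j \in V_+$ will force $a_j^{(0)} \in V_+$, yielding $a_j \ge \omega_j$ in the cone order.

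Finally, for $i \ne j$, from $a_j \ge \omega_j \ge 0$ and $\omega_i \ge 0$ together with self-duality, $0 = (a_j, \omega_i) \ge (\omega_j, \omega_i) \ge 0$, so $(\omega_i, \omega_j) = 0$, and the face-complement argument then gives $\omega_i \bullet \omega_j = 0$. At this point $c := \sum_i \omega_i$ is an idempotent, and applying Theorem \ref{theorem: Jordan spectral theorem} to $e - c$ produces primitive idempotents $\omega_{k+1},\ldots,\omega_r \in V_0(c)$ summing to $e - c$, automatically Jordan-orthogonal to every $\omega_i$ with $i \le k$; this extends the given ordered frame to a full Jordan frame. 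The main points requiring care will be the face-idempotent correspondence and the Peirce-positivity statement $V_+ \cap (V_1(c) \oplus V_0(c)) = (V_+ \cap V_1(c)) + (V_+ \cap V_0(c))$; both are standard facts from \cite{FarautKoranyi} but must be cited explicitly.
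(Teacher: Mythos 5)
Your proof is correct, but it takes a genuinely different route from the paper's. The paper first replaces the distinguishing submeasurement by one consisting of \emph{extremal} effects (noting that any convex decomposition of a distinguishing effect again distinguishes), identifies these with idempotents via Proposition 1.40 of \cite{ASBook2}, invokes Proposition 2.18 of \cite{ASBook2} to get that idempotents summing to at most $u$ are pairwise orthogonal, and then uses compressions and their neutrality (Lemma 1.39 and Proposition 1.41 of \cite{ASBook2}) to conclude $c_i \le e_i$ and hence $c_i \perp c_j$. You instead work with the given positive submeasurement directly: the fact that a positive element with vanishing trace inner product against an idempotent $c$ lies in $V_0(c)$ (equivalently $U_c x = \tr$-null implies $U_c x=0$ implies $x\in V_0(c)$ for $x\ge 0$) lets you kill all terms but $a_j$ in $e\bullet\omega_j=\omega_j$, the Peirce decomposition then forces $a_j = \omega_j + a_j^{(0)}$ with $a_j^{(0)}\ge 0$, and self-duality converts $a_j \ge \omega_j$ into $(\omega_i,\omega_j)=0$ and thence Jordan orthogonality. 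Your route stays entirely within finite-dimensional Euclidean Jordan algebra theory \`a la \cite{FarautKoranyi} (Peirce decomposition, spectral theorem, positivity of the quadratic representation) and avoids the JB-algebra facial machinery of \cite{ASBook2}; it also makes explicit the final completion of the mutually orthogonal primitive idempotents to a full Jordan frame, a step the paper's proof leaves implicit. As you note, the two lemmas you rely on (the face--idempotent--Peirce correspondence for positive elements, and the splitting of positivity across $V_1(c)\oplus V_0(c)$) must be cited precisely, but both are standard.
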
 
This is known, and implicitly assumed
in \cite{BMU}, but we give a proof.  
 \begin{proof}
Since $\partial_e \Omega$ is the set of primitive idempotents, and $V_+$ is self-dual
with respect to the inner product $\langle a , b \rangle = \tr a \bullet b$, a frame is a sequence 
$c_i, ~ i \in \{1,\ldots,s \}$ of 
primitive idempotents such that there exists a submeasurement $e_i$ for which 
$\langle e_i , c_j \rangle = \delta_{ij}$.  In a general setting, not only in Jordan
state spaces, it follows immediately from the condition 
$\langle e_i , \omega_j \rangle = \delta_{ij}$ on a frame that if $e_i = \sum_k p_k f_k$ is a convex decomposition of 
$e_i$ into effects $f_k$, each of the $f_k$ also has the property $f_k (\omega_j) = \delta_{kj}$.  So 
the condition that $\omega_i$ is a frame may be restated as the existence of a submeasurement
consisting of \emph{extremal} (in the convex body $[0,u]$) effects.  Proposition 1.40 of \cite{ASBook2} states
that the extreme points of the positive part $[0,u]$ of the unit ball of a JB-algebra are the idempotents.  The finite-dimensional JB-algebras are the EJAs.      
It is also known (cf. \cite{ASBook2}, Proposition 2.18) that for idempotents $p_i$, 
$\sum_{i=1}^k p_i \le u$ implies that $p_i  \perp p_j$ for all $i,j \in \{1,\ldots,k\}$ with $i \ne j$.
So in the condition for $c_i$ to be a frame, we may take the $e_i$ to be a mutually orthogonal
set of idempotents.   We show that $\langle e_i , c_i \rangle = 1$ for an idempotent $e_i$ and
a primitive idempotent $c_i$ implies that $c_i \le e_i$.
To do so we use the fact, from \cite{ASBook2}, that JB-algebras $V$
are equipped with normalized self-adjoint idempotent positive linear maps   $P_{p}: V \rightarrow V$ 
called \emph{compressions}, in bijection with the idempotents $p$, such  that $p = P_p e$ and the exposed faces (all faces in finite dimension) are the positive parts of the images of compressions.   We have $1 = \langle e_i , c_i \rangle = \langle P_{e_i} e , c_i \rangle
= \langle e, P_{e_i} c_i \rangle$.  Since it follows from Proposition 1.41 
(in finite dimensions, where the dual space may be identified with the primal space) of \cite{ASBook2} that compressions on 
an EJA are \emph{neutral}, i.e. $||P \omega || = ||\omega|| \implies  P \omega = \omega$, we 
have that $P_{e_i} c_i = c_i$, hence $c_i \in \im_+ P_{e_i}$, where the latter is defined as 
$\im \, P_{e_i} \intersect V_+$.  By Lemma 1.39 of \cite{ASBook2}, 
$\im_+ P_{e_i} \intersect [0, e] = [0, e_i]$, and since $c_i \in [0, e]$, we have $c_i \le e_i$.

With $c_i \le e_i$, and $e_i \perp e_j$ for all $i \ne j$, it follows that $c_i \perp c_j$ for all $i \ne j$, and 
consequently that the $c_i$ are a subsequence of an ordered Jordan frame.
\end{proof}

\begin{proposition}\label{prop: Jordan state spaces are strongly symmetric}
The normalized state space of a Euclidean Jordan algebra is 
strongly symmetric.
\end{proposition}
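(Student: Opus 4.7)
The plan is to reduce strong symmetry to two ingredients: a classical structural fact about Jordan frames, and the preceding proposition identifying all frames in $\Omega$ with ordered subsets of Jordan frames. Recall that a linear automorphism $T$ of an EJA $V$ (one preserving the Jordan product) necessarily fixes the unit, since $T(e) = T(e \bullet e) = T(e)\bullet T(e)$ together with invertibility forces $T(e) = e$; such a $T$ permutes idempotents and preserves the cone of squares, and hence its restriction to the affine slice $\{x : \tr x = 1\}$ is an element of $\aut{\Omega}$. So $\aut(V) \subseteq \aut{\Omega}$, and it suffices to exhibit transitivity already within $\aut(V)$.

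First I would invoke the classical theorem (\cite{FarautKoranyi}, Theorem IV.2.5) asserting that for a simple Euclidean Jordan algebra, $\aut(V)$ acts transitively on the set of ordered Jordan frames of full length $r$, where $r$ is the rank. For non-simple EJAs, the result descends via the decomposition into simple ideals: $\aut(V)$ contains the product of the automorphism groups of the simple factors together with the permutations of mutually isomorphic factors, and every Jordan frame of $V$ is a concatenation of Jordan frames in the summands, whence transitivity on full ordered Jordan frames follows.

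Next, I would extend this to arbitrary $k$-frames. By Proposition \ref{prop: all frames are Jordan}, every $k$-frame $(\omega_1,\ldots,\omega_k)$ in $\Omega$ is an ordered subsequence of a Jordan frame. Given two $k$-frames $(\omega_1,\ldots,\omega_k)$ and $(\tau_1,\ldots,\tau_k)$, I would extend each to a full ordered Jordan frame of length $r$ by adjoining primitive orthogonal idempotents completing $e - \sum_i \omega_i$ (respectively $e - \sum_i \tau_i$); this extension exists by applying the spectral theorem (Theorem \ref{theorem: Jordan spectral theorem}) inside the Peirce subalgebra associated with the complementary idempotent. The transitive action on full ordered Jordan frames then provides a $T \in \aut(V) \subseteq \aut{\Omega}$ sending one extension to the other, and in particular sending $(\omega_1,\ldots,\omega_k)$ to $(\tau_1,\ldots,\tau_k)$.

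The only delicate point is the invocation of Theorem IV.2.5 of \cite{FarautKoranyi}, which rests on structure theory (the classification, or alternatively a direct argument via the exponential action of the structure group), and on the passage from simple to general EJAs in the non-simple case. Everything else is routine bookkeeping: verifying that Jordan automorphisms restrict to elements of $\aut{\Omega}$, and that partial Jordan frames extend to full ones via the spectral theorem in an orthogonal Peirce subalgebra.
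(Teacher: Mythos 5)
Your treatment of the simple case is essentially the paper's own argument: cite Faraut--Koranyi (Theorem IV.2.5 and its Corollary IV.2.7) for transitivity of the unit-stabilizer $K$ on ordered Jordan frames, identify the frames of $\Omega$ with ordered subsets of Jordan frames via Proposition \ref{prop: all frames are Jordan}, extend a partial frame to a full ordered Jordan frame inside the Peirce subalgebra of the complementary idempotent, and transport one extension onto the other. The extra bookkeeping you supply (Jordan automorphisms fix $e$, preserve trace and cone, hence restrict to $\aut{\Omega}$; the explicit frame-extension step) is correct and only makes explicit what the paper leaves implicit.

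The problem is your reduction of the non-simple case. The claim that transitivity on full ordered Jordan frames ``descends via the decomposition into simple ideals'' is false: every automorphism of $V$ preserves the decomposition $V=\bigoplus_i V_i$ into simple ideals up to a permutation of mutually isomorphic factors, so it cannot realize an arbitrary reordering of a Jordan frame whose members are interleaved among several factors. Concretely, in $Herm(2,\C)\oplus\R$ a primitive idempotent of the first summand and the unit of the second are both $1$-frames lying in different orbits of $\aut{\Omega}$; in $Herm(2,\C)\oplus Herm(2,\C)$ the ordered Jordan frames $(c_1,c_2,d_1,d_2)$ and $(c_1,d_1,c_2,d_2)$ (with $c_i$ in the first factor, $d_j$ in the second) are not related by any automorphism. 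Strong symmetry genuinely fails for non-simple EJAs unless every simple factor is the trivial one-dimensional algebra (the simplices, where each factor contributes a single primitive idempotent and factor permutations suffice) --- which is exactly what Theorem \ref{theorem: main} forces. The transitivity theorem you invoke from Faraut--Koranyi is itself a statement about \emph{simple} algebras, and the paper's proof, like yours, really only establishes the simple case; together with the separate treatment of simplices in Section \ref{sec: simplices}, that is all the main theorem needs. You should restrict your statement to simple EJAs and delete the non-simple paragraph rather than assert a reduction that is not available.
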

\begin{proof}
Corollary IV.2.7 of Theorem IV.2.5 in \cite{FarautKoranyi} states that
the compact group $K$, defined as the subgroup of $\aut_0(V_+)$ that
fixes the Jordan unit $e$, acts transitively on the set of Jordan
frames.  Since we've adopted a canonical inner product, this is a
subgroup of $\aut{\Omega}$.  It is clear from the proof of Theorem
IV.2.5 in \cite{FarautKoranyi} that this transitive action is on
\emph{ordered} Jordan frames.  Since we showed, in the proof of
Proposition \ref{prop: Jordan state spaces are spectral} and in
Proposition \ref{prop: all frames are Jordan}, that the frames (in the
sense of Definition \ref{def: frame}) are precisely the ordered
subsets of Jordan frames, the group $K$, and hence $\aut{\Omega}$,
acts transitively on the set of $k$-frames for each $k$.  
\end{proof}

In particular, we have the following: 
\begin{corollary}
The normalized state space of a Euclidean Jordan algebra is an
orbitope, i.e. $\Omega = \conv{K.\omega_0}$, for any $\omega_0 \in
\partial_e \Omega$.  In particular, for $Herm(n,\D), 
\D \in \{\R,\C,\H,\Oct\}$, $\Omega = \conv{K.e_{11}}$, where $e_{11}$ is the
matrix unit $diag (1,0,0,...,0)$.
\end{corollary}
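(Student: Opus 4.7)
The plan is to combine the spectral theorem for Euclidean Jordan algebras (Theorem~\ref{theorem: Jordan spectral theorem}) with strong symmetry (Proposition~\ref{prop: Jordan state spaces are strongly symmetric}). The inclusion $\conv{K.\omega_0} \subseteq \Omega$ is immediate: $K \subseteq \aut \Omega$ preserves $\Omega$, which is convex, so $K.\omega_0 \subseteq \Omega$ and hence $\conv{K.\omega_0} \subseteq \Omega$.

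For the reverse inclusion, I would take an arbitrary $\omega \in \Omega$ and apply the spectral theorem to write $\omega = \sum_{i=1}^r \lambda_i c_i$, where $(c_i)_{i=1}^r$ is a Jordan frame and the $\lambda_i \in \R$ are the eigenvalues of $\omega$. Since $\omega$ lies in the cone of squares $V_+$, its spectrum is nonnegative, so $\lambda_i \ge 0$; and since $\tr c_i = 1$ for each primitive idempotent $c_i$, one has $1 = \tr \omega = \sum_i \lambda_i$. By Proposition~\ref{prop: primitive idempotents extremal}, each $c_i$ is a pure state of $\Omega$, i.e.\ a $1$-frame. Strong symmetry, applied with $k=1$, gives that $K$ acts transitively on pure states, so there exist $k_i \in K$ with $c_i = k_i.\omega_0$ for each $i$. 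Hence $\omega = \sum_i \lambda_i (k_i.\omega_0) \in \conv{K.\omega_0}$.

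For the specific case of $Herm(n,\D)$ with $\D \in \{\R,\C,\H,\Oct\}$, the matrix $e_{11} = \mathrm{diag}(1,0,\ldots,0)$ is a rank-one self-adjoint idempotent, hence a primitive idempotent of trace $1$; it therefore lies in $\partial_e \Omega$ by Proposition~\ref{prop: primitive idempotents extremal}, and the first part of the corollary yields $\Omega = \conv{K.e_{11}}$.

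Since every ingredient has already been established earlier in the section, there is no genuine obstacle here; the argument is essentially a direct assembly of the spectral theorem and strong symmetry. The only thing one must verify en route is that the eigenvalues appearing in the spectral decomposition of an element of $\Omega$ are nonnegative reals summing to $1$, which is immediate from the fact that $\Omega$ is the trace-$1$ slice of the cone of squares.
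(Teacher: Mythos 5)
Your proof is correct, but it takes a slightly different route from the paper's. The paper disposes of the first claim in one line: strong symmetry gives transitivity of $K$ on $1$-frames, i.e.\ on $\partial_e\Omega$, so $\partial_e\Omega = K.\omega_0$, and then $\Omega = \conv{\partial_e\Omega} = \conv{K.\omega_0}$ follows from the general fact (Minkowski/Krein--Milman in finite dimension) that a compact convex set is the convex hull of its extreme points; spectrality is not needed at all. You instead obtain the reverse inclusion by running the spectral theorem explicitly: every $\omega\in\Omega$ decomposes as $\sum_i\lambda_i c_i$ with $\lambda_i\ge 0$ summing to $\tr\omega=1$ and each $c_i$ a primitive idempotent, hence in the single $K$-orbit of $\omega_0$. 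This buys you a completely explicit convex decomposition and avoids any appeal to Minkowski's theorem, at the cost of re-deriving what Proposition~\ref{prop: Jordan state spaces are spectral} already packaged as spectrality; the paper's argument is more economical and shows the orbitope property needs only transitivity on pure states. One small point worth being careful about: strong symmetry as defined is a statement about $\aut{\Omega}$, whereas the $K$ in the corollary is the stabilizer of the Jordan unit $e$ in $\aut_0(V_+)$; to place your $k_i$ in this $K$ you should cite the transitivity of that specific subgroup on Jordan frames (Corollary IV.2.7 of \cite{FarautKoranyi}, as used in the proof of Proposition~\ref{prop: Jordan state spaces are strongly symmetric}) rather than strong symmetry of $\Omega$ in the abstract. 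The treatment of the $Herm(n,\D)$ case matches the paper's.
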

This is so because extremal states are $1$-frames, and the last
sentence follows from Proposition \ref{prop: primitive idempotents
  extremal} and the fact that $e_{11}$ is a primitive idempotent.

We note here that the state space of a spin factor, $V=\R^n \oplus \R$,
is the unit $n$-ball $\{(x,t): t = 1, |x|^2 \le 1 \}$ in the affine subspace
$t=1$.

\subsection{Classification of Euclidean Jordan algebras, their cones,
state spaces, and automorphism groups}

In this section we give a more detailed description of
the Euclidean Jordan algebras, their cones of squares, their
normalized state spaces, and the automorphism groups of these objects.
We also give a representation-theoretic description of normalized
Jordan algebra state spaces that will allow us to identify them with
the convex hulls of certain orbits in polar representations.  

The automorphism group of the cone of squares of a Jordan algebra $V$,
like the automorphism group of any self-dual cone, is reductive
\cite{Mostow}.  If the algebra is simple, then $\aut{V_+} =
G^s \times \R_+$, where $G^s$ is simple.  Also, $\aut_0(V_+) = G^{s}_0 \times
\R_+$.\footnote{In the non-simple case, $\aut_0{V_+} \iso G_0^{s}
  \times \R_+^k$, where $k$ is the number of simple factors of $V$,
  each copy of $\R_+$ acts as dilations on simple factors, and
  $G_0^{s}$ is a product of the groups $\aut_0^{s}{V_+^i}$ acting on
  simple factors $V_i$ (in other words, $\aut_0{V_+} = \Pi_i
  \aut_0{V_+^{i}}$); the full (not necessarily connected) automorphism
  group allows, besides non-identity components in each factor,
  permutations of isomorphic factors.}  We will also write $\aut^s(V_+)$ and $\aut_0^s(V_+)$ for
the groups $G^s$ and $G_0^s$.  $V$ is an irreducible
representation space for $G^s$ (cf. e.g.  \cite{Satake}, p. 42), and for
its connected identity component $G^s_0$.  
Like any semisimple Lie group, $G^s$ has
Cartan decompositions $\fg = \fk \oplus \fp$ of its Lie algebra and,
correspondingly, $G^s = K \exp \fp$ of the group.  We may choose one such
that $K$ is the subgroup of $G^s_0$ that fixes the identity $e$, since this
is a maximal compact subgroup of the linear group $G^s$.   $V$ is an irreducible \emph{spherical}
representation of $G^s$, and of its connected identity component $G^s_0$,  
which means that $G^s$'s (and also $G^s_0$'s) maximal compact subgroup, 
$K$, has a one-dimensional fixed-point space (in this case, $\R e$).  

In Table \ref{table: EJAs} (essentially from \cite{FarautKoranyi}) we
list the finite-dimensional simple Euclidean Jordan algebras and associated
data.  $V$ is the algebra, with positive cone $V_+$, $\fg$ is the Lie
algebra of $\aut{V_+}$, and $\fk$ the Lie algebra of $\aut{\Omega}$
(where $\Omega$ is the normalized state space).  We use the notation
$Herm(m, \D)$ for the Jordan algebra of self-adjoint matrices over a
classical division algebra $\D \in \{\R, \C, \H, \Oct\}$, $PSD(m,X)$
for the positive semidefinite matrices, and $Lorentz(1,n-1)$ for the
Lorentz cone $\{(z,x) \in \R \oplus \R^{n-1}: |z|^2 \ge |x|^2 , z \ge 0\}$.  Also
we write $Sym(m, \D)$ for the space of symmetric matrices with entries
in $\D$, and note in particular that $Herm(m, \R) \equiv Sym(m,\R)$. 
\begin{table}
\caption{Euclidean Jordan algebras with associated cones and Lie algebras}
\label{table: EJAs}
\[
\begin{array}{c}
\begin{array}{lccccc} 
\hline
  V  & V_+ & \fg & \fk & \dim V & \rank ~V \\
 \hline 
Sym(m,\R) & PSD(m,\R) & \fsl(m, \R) \oplus \R & \fo(m)  & m(m+1)/2 & m  \\
Herm(m,\C) & PSD(m,\C) & \fsl(m, \C) \oplus \R & \fsu(m) & m^2 & m \\
Herm(m,\H) & PSD(m,\H) & \fsl(m,\H) \oplus \R & \fsu(m,\H) & m(2m-1) & m \\
\R \oplus \R^{n-1}& Lorentz(1,n-1) & \fo(1,n-1) & \fo(n) & n & 2 \\
Herm(3,\Oct) & PSD(3,\Oct) & \fe_{6(-26)} & \ff_4 & 27 & 3 \\
\hline 
\end{array} \\
\\ 
\end{array}\]
\end{table}

In the three infinite families of matrix cases, $\aut_0{V_+}$ is the
group of transformations $X \mapsto AXA^\dagger$, where $A$ is any
nonsingular matrix over the relevant division algebra $\D$, i.e. 
$A \in GL(m, \D)$.\footnote{$\aut_0{V_+} = \aut{V_+}$ except in the
  following cases where $\aut_0{V_+}$ is of index two, with a
  representative of the nonidentity component as stated: $Herm(n,\R)$
  for $n$ even, $X \mapsto MXM^\dagger$ with $M =
  \diag{(-1,1,...,1)}$; $Herm(n,\C)$, transpose; $\R \oplus \R^{n-1}$
  for $n \ge 3$, $x \mapsto Mx$ with $M = \diag{(1,-1,1,1,..,1)}$
  \cite{Satake}.  The groups generated by $X \mapsto MXM^\dagger$ with
  $M \in SL(n,\C), SL(n,\H)$ are not precisely $SL(n,\C), SL(n,\H)$ in
  general, although they have the same Lie algebras as those groups;
  they are homomorphic images.}
The identity component, $\aut_0(\Omega)$, of
the maximal compact subgroup consists of those  transformations for
which $A$ is drawn from the subgroups $SO(m), SU(m, \C), SU(m, \H)$
respectively.  This
preserves the identity matrix in $V \iso Herm(m, \D)$, which is the
unit $e$ of the Jordan algebra.  The space orthogonal to this consists
of the traceless self-adjoint matrices over $\R, \C, \H$ respectively,
and it is an irreducible representation space for $K$.

The set of such transformations for
which $A$ is drawn from the subgroups $SO(m)$, $SU(m, \C)$, $SU(m, \H)$
respectively is the identity component of a maximal compact subgroup of $\aut_0(V_+)$.
 It preserves the identity matrix in $V \iso Herm(m, \D)$, which is the
unit $e$ of the Jordan algebra.  The subspace of $V$ orthogonal to the identity matrix consists
of the traceless self-adjoint matrices over $\R, \C, \H$ respectively,
and it is an irreducible representation space for $K$.

A similar description is not obviously possible in the case of $Herm(3,
\Oct)$, because $\Oct$ is nonassociative. 
However, one can
deal with this by observing that the transformations $X \mapsto
AXA^\dagger$ for $A \in SL(m,\D), \D \in \{ \R, \C, \H \}$ are
determinant-preserving, and symmetric octonionic matrices have
well-defined determinant (indeed, there is a Jordan-algebraic
definition of determinant, cf. \cite{FarautKoranyi}).  
 Then one can show
(cf. \cite{DrayManogue}) that for arbitrary $m$ and $\D \in \{\R, \C,
\H\}$, and for $m= 3$ and $\D = \Oct$, the 
determinant-preserving linear transformations of $V$
are $\aut_0^{s}{(V_+)}$.\footnote{Even though in the case of $\D =
  \Oct$ we can identify a subset of $3 \times 3$ octonionic matrices
  $M$ such that $X \mapsto MXM^\dagger$ is determinant-preserving on
  octonionic-hermitian matrices $X$, and this generates the group of
  determinant-preserving linear transformations, this group is not
  \emph{identical} to the maps $X \mapsto MXM^\dagger$ because of the
  nonassociativity of octonionic matrix multiplication (see
  \cite{DrayManogue}, where the group is dubbed $SL(3,\Oct)$).}  The
identity-preserving subgroup, in the cases $\D \in \{\R,\C,\H\}$, is
as stated in the previous paragraph.  In the octonionic case it is the
compact real form of $F_4$.  (In \cite{DrayManogue} this is dubbed $SU(3, \Oct)$.)

Returning to consideration of general EJAs, the Lie algebra of the 
subgroup that fixes the Jordan unit $e$ is the compact part $\fk$ in a
Cartan decomposition $\fg = \fk \oplus \fp$ of the reductive
group $\aut{V_+}$, and $V$ itself can be identified with $\fp$ in this
decomposition.\footnote{For example, this is part of Theorem 8.5 of
  \cite{Satake}, with the Cartan decomposition given in Lemma 8.6 of that book and the
  proof.  It also follows from the conjunction of Theorem III.2.1 and
  Theorem III.3.1 in \cite{FarautKoranyi} (this conjunction is,
  roughly, Satake's Theorem 8.5).} As with any Cartan decomposition,
there is a natural $K$-invariant
inner product on $V$, such that $\fk$ is represented by real antisymmetric
matrices and $\fp$ by symmetric ones.  Furthermore for a simple
Jordan algebra $\fp$ decomposes as $\R \oplus \fp_0$, where $\fp_0 =
\fp \intersect \fg^{s}$, $\fg^{s}$ is the semisimple
part of the Lie algebra of $\aut{V_+} = \R_+ \times G^s$ (so $\fg^s = \flie{G^s}$) 
and $\R$ is generated by the
Jordan unit $e$.  Although the actions of $\aut{V_+}$, and of $G^s$, on
$V$ are not their adjoint actions (on, i.e. corestricted to, $\fp$),
the restriction of these actions to $K$ does coincide (on $\fp$) with the
restriction of the adjoint action.\footnote{In the matrix cases
with $\D \in \{\R,\C,\H\}$, this is manifested in the fact that while
in general $A^\dagger \ne A^{-1}$, equality does hold when $A$ is
respectively orthogonal, unitary, or quaternionic-unitary; this
exemplifies the general fact that for compact groups, representations
are isomorphic to their duals, which establishes the claim for the
spin-factor and octonionic cases too.}  In other words, $\fp_0$ is the
representation space of a polar representation (Definition \ref{def: polar representation}), 
called a symmetric space representation (Definition \ref{def: symmetric space representation}), 
of the compact group $K$.  The Jordan trace gives the
component in the $\R$ factor of the Jordan algebra $V \iso \fp = \R
\oplus \fp_0$.  Recall that the normalized state space, $\Omega := V_+
\intersect \{x \in V: \tr x = 1\}$, is also $\conv{K.\omega}$ for any
extremal $\omega \in \Omega$ (from strong symmetry).  Everything in
the affine plane $\{x \in V: \tr x = 1\}$ has the form $c \oplus
\omega_0$, where $c = e/\rank \, {V}$ is the unit-trace element of the
fixed-point space $\R e$, and $\omega_0 \in \fp_0$.  Thus $\Omega$ is
affinely isomorphic to $\conv{K.\omega_0}$, an orbitope in the polar
representation of $\fk$ on $\fp_0$.  This is what will permit us, in
Section \ref{sec: main result}, to identify these Jordan algebra state
spaces with certain regular convex bodies, since regular bodies are
described in the Madden-Robertson classification \cite{MaddenRobertson} as convex hulls of
orbits in polar representations.

\section{Simplices are strongly symmetric and spectral}
\label{sec: simplices}
In the previous section, we saw that the state spaces of simple
Euclidean Jordan algebras are strongly symmetric and spectral, part of
the ``if'' direction of our main theorem.  In this section, we
establish the other part of the ``if'' direction: the easy fact that
simplices are strongly symmetric and spectral.

The standard presentation of an $n$-simplex as embedded in a vector
space $V$ of one higher dimension takes the $n+1$ vertices of the
simplex as the unit vectors $e_i$ of $V = \R^{n+1}$ considered as a
Euclidean space in the usual way, i.e. with ``dot product'' as inner
product, and represents effects in the same space, with evaluation
given by this inner product.  $V_+$ is then the nonnegative orthant,
$\R^n_+$, and it is manifestly self-dual with respect to this inner
product.  So the order unit $u$ is the all-ones vector $(1,1,...,1)$,
the dual cone $V_+^*$ is equal to the primal cone, and the set of
effects is the unit hypercube (the convex hull of the $2^{n+1}$ vectors
of length $n+1$ with entries in $\{0,1\}$).  The unit vectors are
therefore not only the pure states, but are also effects, and they sum
to $u$, so they constitute a measurement.  Every ordered subset of the
unit vectors is a submeasurement, and perfectly distinguishes the same
subset considered as states, so is a frame; since by definition frames are 
ordered subsets of the pure states, and in a simplex, the pure states are the unit 
vectors, these are all the frames.
The full set of unit vectors is the unique maximal frame, up to order.
Since the simplex is defined as its convex hull, the simplex is spectral.
The group $\aut{\Omega}$ is the symmetric group 
$S_n$ acting to permute the vertices, so it is obviously transitive 
on $k$-frames for each $k \in \{1,...,n+1\}$, i.e. the simplex is 
strongly symmetric.   

It is relatively easy to show that the simplices are the only strongly
symmetric spectral polytopes.  For example, from item \ref{item: perfection} of Proposition \ref{prop:
  consequences}  we know that the cone over a
strongly symmetric spectral body must be perfect.  Theorem 1 of
\cite{BarkerForan} states that the only self-dual polyhedral cones in
$\E^n$ whose maximal faces are all self-dual in their spans (with
respect to the restriction of the inner product) are isometric (hence
also affinely isomorphic) to the simplicial cone $\R^n_+$.

\section{Strongly symmetric spectral bits are balls}
\label{sec: sss bits are balls}

In this section, we prove a special case of our main theorem, for
convex sets whose largest frame has cardinality $2$.  We show that
they are all affinely isomorphic to balls, which are the normalized
state spaces of the Euclidean Jordan algebras known as spin factors.
This case can be handled without the Farran-Madden-Robertson theory
\cite{FarranRobertson, MaddenRobertson} of regular convex bodies,
and we can use it in the proof of our main theorem, where it allows us
to avoid some tedious case-checking involving the Madden-Robertson
classification.

We call a convex body $\Omega$ a \emph{bit} if the largest frame it
contains has cardinality $2$.  We say $\Omega$ has \emph{reversible
  transitivity} if $\aut{\Omega}$ acts transitively on the set
$\partial_e \Omega$ of pure states of $\Omega$.  All strongly
symmetric convex bodies have this property, since extremal states are
$1$-frames.

In \cite{DakicBruknerQuantumBeyond} (Section IV) B. Daki{\'c} and C. Brukner claimed 
that spectral bits that have reversible transitivity on pure states (i.e. on 
$1$-frames) are necessarily balls, and give an argument for this claim.  The claim may well be
true, but their argument makes an implicit assumption.  When this 
assumption is made explicit, one immediately sees that it follows
from transitivity on $2$-frames, which in the case of bits is 
identical to strong symmetry.  Once this is observed, one sees that
Dakic and Brukner's argument establishes the following result, which is weaker than Daki{\'c} and
Brukner's claim.

\begin{theorem}
\label{theorem: sss bits are balls}
Let $\Omega$ be a strongly symmetric spectral compact convex body whose
largest frame is of cardinality $2$.  Then $\Omega$ is affinely
isomorphic to a ball.
\end{theorem}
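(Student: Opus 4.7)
The plan is to use Proposition \ref{prop: consequences} to equip the ambient space with a canonical Euclidean structure in which each $2$-frame becomes a pair of antipodal pure states lying on a common sphere centred at the barycentre; spectrality then forces every point of $\Omega$ onto such an antipodal diameter, and a full-dimensionality argument fills out the whole sphere. Concretely, I would first fix the self-dualizing, $\aut{\Omega}$-invariant inner product on $V$ supplied by item~\ref{item: perfection} of Proposition~\ref{prop: consequences}, under which pure states have norm $1$; item~\ref{item: frames orthogonal} gives that any $2$-frame $(\omega_1,\omega_2)$ is orthonormal with $\omega_1+\omega_2 = u$, and item~\ref{item: barycenters of faces} gives $c(\Omega)=u/2$. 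Identifying $\aff\Omega$ canonically with $E := u^\perp$ by translating by $-u/2$, each pure state becomes $\omega^E := \omega - u/2$; using $(\omega,u)=1$ together with $|u|^2 = |\omega_1+\omega_2|^2 = 2$ one finds $|\omega^E|^2 = 1/2$, so all pure states lie on a common sphere of radius $r:=1/\sqrt 2$ in $E$, and each $2$-frame becomes an antipodal pair $\{\omega^E,-\omega^E\}$.

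Second I would show that these antipodes are unique. Because the largest frame has cardinality $2$, $\Omega$ has rank $2$; for a pure state $\omega$ the complementary face $\omega'$ of item~\ref{item: orthocomplement} therefore has rank $1$ by orthomodularity (item~\ref{item: orthomodularity}), so it is itself a single pure state. Hence every pure state has a \emph{unique} pure-state antipode in $E$, and every $2$-frame has the form $(\omega,\omega')$.

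Finally, spectrality says every $p \in \Omega$ lies in the convex hull of some frame of cardinality at most $2$, so in $E$ every point of $\Omega_E$ lies on the segment from some pure state $\omega^E$ to its antipode $-\omega^E$; in particular every nonzero point of $\Omega_E$ is a positive scalar multiple of a pure state. Since $0 = c(\Omega)$ lies in the interior of the full-dimensional body $\Omega_E$ and $\Omega_E$ is bounded, every unit vector $d \in E$ is the direction of some nonzero point of $\Omega_E$, hence of a pure state; combined with the uniform norm $r$ this forces the pure states to be \emph{exactly} the sphere of radius $r$ in $E$, and Krein--Milman then identifies $\Omega_E$ with the closed ball of radius $r$. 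The step most worth pinning down is this last ``every direction is a pure-state direction'' argument: it is where all three hypotheses must be combined, and it is precisely the gap in Daki\'c and Brukner's treatment, since it requires the common norm and antipodal pairing of pure states that strong symmetry (not merely transitivity on $1$-frames) supplies via Proposition~\ref{prop: consequences}.
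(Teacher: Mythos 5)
Your proof is correct, and it arrives at the same two structural facts that drive the paper's proof --- all pure states lie on a common sphere about the barycenter, and every $2$-frame is an antipodal pair on that sphere --- but by a different route to each. The paper (following Daki{\'c} and Brukner) obtains the common sphere from the fact that $\partial_e\Omega$ is a single orbit of a compact subgroup of $O(E)$, produces one antipodal pair by the chord-through-the-center argument, propagates antipodality to all pure states via transitivity on $1$-frames, and then must invoke transitivity on $2$-frames explicitly to see that \emph{every} $2$-frame is antipodal (the step it identifies as the gap in Daki{\'c}--Brukner). You instead read everything off the quantitative items of Proposition~\ref{prop: consequences}: unit norm of pure states together with $(\omega,u)=1$ gives the common radius $1/\sqrt 2$ after recentering at $c(\Omega)=u/2$, and $\omega_1+\omega_2=u$ for a maximal frame gives both the antipodality and the uniqueness of the antipode at once, with strong symmetry entering only through that proposition rather than through an explicit appeal to $2$-frame transitivity. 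The final step --- scaling an arbitrary direction into the interior, applying spectrality, and concluding that the whole sphere is extremal --- is the same in both proofs. Your version is tighter and pins down the radius explicitly; the paper's version stays closer to the original geometric argument and makes the role of $2$-transitivity visible, which is a point the authors wish to emphasize. (One small attribution note: the fact that the complementary face of a pure state has rank $1$ follows most directly from the frame-extension property in item~\ref{item: frame extensions and maximal frames}, though your appeal to orthomodularity also suffices.)
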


The proof, which is essentially Daki{\'c} and Brukner's argument done in 
slightly more detail and with an explicit assumption of  
transitivity on $2$-frames, is in Appendix \ref{sec: proof that sss bits are balls}.
 
\section{Regular convex bodies and regular convex compact sets}
\label{sec: regular convex bodies}

In this section we will use some definitions and results from
\cite{FarranRobertson} and \cite{MaddenRobertson} concerning compact
convex sets embedded in Euclidean space, which the authors of 
\cite{FarranRobertson} and \cite{MaddenRobertson} call \emph{convex
  solids} or, when they are full-dimensional, \emph{convex bodies}.
Some of the notions studied in \cite{FarranRobertson} and
\cite{MaddenRobertson} are sensitive to the particular embedding of a
compact convex set in Euclidean space, and fail to be invariant under
affine transformations, whereas our concern is with affine-invariant
structure.
Nevertheless their results are immediately applicable to our situation, 
because of the canonical embedding (Definition \ref{def: canonical embedding} of compact convex
sets of dimension $n$ into $\E^n$.  In the following, the group of 
\emph{rigid transformations} of $\E^n$ is defined as the group generated by
rotations, translations, and reflections. 

\begin{definition}[\cite{FarranRobertson}]
A \emph{solid} is a compact convex subset of Euclidean space $\E^n$.
It has affine dimension $m \le n$, and if we wish to indicate its
dimension, we may call it an $m$-solid in $\E^n$ or simply $m$-solid.
A \emph{convex body} is an $n$-solid in $\E^n$, i.e. a
full-dimensional solid; it may also be called an $n$-body.  The
\emph{symmetry group} $GB$ (sometimes $G$, for short) of a convex body
$B$ is the subgroup of the group of rigid transformations of $\E^n$
consisting of those transformations that take $B$ into (so in fact,
onto) itself.
\end{definition}

The symmetry group $GB$ of $B$ consists of affine automorphisms, but
in general it may be a proper subgroup of the group $\aut{B}$ of
affine automorphisms of $B$.  For example, a nonsquare rectangle has a
smaller symmetry group than a square, and a parallelogram whose sides
are not all the same length has a smaller symmetry group than a
rectangle, whereas $\aut{\Omega}$ is the same in all three cases.
Similarly, a general solid ellipsoid has a smaller symmetry group than
a (spherical) ball, but the two are affinely isomorphic.

The notion of symmetry group, and the associated results of
\cite{FarranRobertson} and \cite{MaddenRobertson} are nevertheless
useful to us because of the following (which is immediate from 
Proposition \ref{prop: canonical embedding}):

\begin{proposition}\label{prop: canonically embedded cc sets}
Let $\Omega$ be canonically embedded in a Euclidean space $E$, in 
the sense of Definition \ref{def: canonical embedding}.  Then
$\aut{\Omega}$ is the symmetry group $G\Omega$.
\end{proposition}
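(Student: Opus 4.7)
The plan is to unwind the definitions and show both inclusions. By the definition of canonical embedding, we have $\aut{\Omega} \subseteq O(E)$, and by Proposition \ref{prop: canonical embedding}, the barycenter $c(\Omega)$ is identified with $0 \in E$. Since $O(E)$ consists of rigid transformations of $E$ (they fix $0$ and preserve distance), and every element of $\aut{\Omega}$ takes $\Omega$ onto itself by definition, we obtain $\aut{\Omega} \subseteq G\Omega$ immediately.

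For the reverse inclusion, the plan is to take an arbitrary $g \in G\Omega$ and show it lies in $\aut{\Omega}$. By definition, $g$ is a rigid transformation of $\E^n$ preserving $\Omega$, so in particular it is an affine map. The key step is to observe that $g$ must fix $c(\Omega) = 0$: this follows from the affine covariance of the barycenter, $g(c(\Omega)) = c(g(\Omega)) = c(\Omega)$, which holds because $g$ is affine and $g(\Omega) = \Omega$. A rigid transformation of $\E^n$ that fixes the origin is linear and lies in $O(E)$, and in particular is an affine transformation of $\aff{\Omega}$ onto itself that preserves $\Omega$. Hence $g \in \aut{\Omega}$.

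Combining the two inclusions gives the desired equality $\aut{\Omega} = G\Omega$. The whole argument is a direct unwrapping of definitions; the only substantive ingredient is the affine covariance of the barycenter, which is already recalled in Section \ref{sec: background}. There is no real obstacle, and no case analysis is required.
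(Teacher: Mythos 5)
Your proof is correct and is essentially the same argument the paper has in mind: the paper simply declares the proposition ``immediate from Proposition \ref{prop: canonical embedding}'', and your two inclusions (using $\aut{\Omega}\subseteq O(E)$ for one direction, and barycenter covariance to show a rigid transformation preserving $\Omega$ fixes $0$ and hence is an orthogonal, in particular affine, automorphism for the other) are exactly the unwinding of that claim.
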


In \cite{FarranRobertson} and \cite{MaddenRobertson} the term ``face''
is used to mean ``\emph{exposed} face'', that is, the intersection of
the set with a supporting hyperplane.  Since we will use some
definitions and results from \cite{FarranRobertson} and
\cite{MaddenRobertson}, and since all faces in
strongly symmetric spectral convex bodies are exposed,\footnote{In
fact this is true of all regular convex bodies as well, by  
Proposition \ref{prop: regular bodies and polar representations}  and the fact \cite{BiliottiGhigiHeinznerPolarOrbitopes, BiliottiGhigiHeinznerInvariantPolar} that all faces of orbitopes in polar representations 
are exposed.   We suspect it would remain 
true if the definition of regularity were changed to refer not to exposed faces, but just to
faces.} that is how we will use the
term from now on.  It is a fairly standard convention to include
$\emptyset$ and $\Omega$ as faces, and even as exposed faces, of
$\Omega$; they are called \emph{improper} faces and the others are
called \emph{proper} faces.

\begin{definition}\label{def: flag}
A \emph{flag} of a convex compact set $\Omega$ is a
sequence $F_1,...,F_r$ of distinct nonempty exposed faces of $\Omega$
such that $F_0 \subset F_1 \subset \cdots F_r$.
\end{definition}

In other words it is a chain in the face lattice, not containing the
empty face.\footnote{This is almost identical to the definition in
\cite{FarranRobertson}, except that they also exclude the face
$\Omega$, whereas we prefer to allow (but not require) its inclusion.
When we need their notion we will use the term ``short flag''.}

Every convex body may be considered as a compact convex set relative
to the natural affine space structure of $\E^n$ (obtained by
forgetting about the inner product and $0$).  So the notion of flag
also applies to convex bodies.

\begin{definition}
A \emph{maximal flag} is a flag 
that is  not a
subsequence of any other flag. 
\footnote{We define a maximal flag exactly as in
  \cite{MaddenRobertson}, except that our flags contain the full set, 
whereas theirs do not.   Maximal flags of these two types are obviously 
in bijection with each other, by deleting or appending $\Omega$ at the end of
  the sequence of faces.
In \cite{FarranRobertson} a slightly different definition of
maximal flag is
  given, but the definitions give rise to the same notion of
  regularity, and are equivalent for regular convex bodies.
  Explicitly, $\sigma_\Omega \subset \N$, which we may  call the
  ``signature'' of $\Omega$, is the ordered (by restriction from the
  usual ordering of $\N$) set of integers $i$ such that there is a
  proper face of $\Omega$ of dimension $i$.  Farran and Robertson call
  a flag maximal if the sequence $\dim{F_0},...,\dim{F_r}$ of
  dimensions of faces in the flag is equal to $\sigma_\Omega$.  
  Once \emph{regular} convex bodies are defined, it will be clear the two
  definitions coincide for such bodies (though not in general, cf. the
examples in Fig. 2 of \cite{FarranRobertson}).}
\end{definition}

We have the following elementary fact.
\begin{proposition}\label{prop: action on flags}
Let $g$ be an automorphism of $\Omega$.  If $F$ is a face of $\Omega$, $g.F$ is also a face of 
$\Omega$, and $g.c(F) = c(g.F)$.  Let $\Phi$ be a flag of $\Omega$.  Then $g.\Phi$ is 
a flag of $\Omega$; it is maximal if and only if $\Phi$ is. 
\end{proposition}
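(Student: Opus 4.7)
The plan is to unpack the definitions, using only the fact that $g$ is an affine bijection of $\aff{\Omega}$ that takes $\Omega$ onto itself, hence preserves all intrinsic affine and convex structure.

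First I would show that $g.F$ is a face. If $F$ is a (not necessarily exposed) face, take any convex combination $x = \sum_i p_i x_i$ with $p_i > 0$ and $x_i \in \Omega$, where $x \in g.F$. Applying $g^{-1}$ (also an affine automorphism of $\Omega$) gives $g^{-1}(x) = \sum_i p_i\, g^{-1}(x_i)$, a convex decomposition in $\Omega$ of the point $g^{-1}(x) \in F$. Since $F$ is a face, each $g^{-1}(x_i) \in F$, so each $x_i \in g.F$, showing $g.F$ is a face. If we are using the convention that face means exposed face, the same conclusion holds because $g$ sends supporting hyperplanes to supporting hyperplanes (an affine map preserves half-space containments), so $g.F = g.(\Omega \cap H) = \Omega \cap g.H$ is exposed.

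Next, for the barycenter: the covariance relation $c(T(\Omega)) = T(c(\Omega))$ for affine transformations $T$, noted in Section~\ref{sec: background}, applies to the face $F$ viewed as a convex compact set in $\aff{F}$; the restriction $g|_{\aff F}: \aff F \to \aff{g.F}$ is an affine isomorphism, and so $g.c(F) = c(g.F)$.

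Now let $\Phi = (F_0 \subsetneq F_1 \subsetneq \cdots \subsetneq F_r)$ be a flag. Then $g.\Phi = (g.F_0 \subsetneq g.F_1 \subsetneq \cdots \subsetneq g.F_r)$ consists of faces by the first step, and the strict inclusions are preserved because $g$ is a bijection of $\aff{\Omega}$. Hence $g.\Phi$ is a flag. Finally, the map $F \mapsto g.F$ is an order-preserving bijection of the face lattice of $\Omega$ with inverse $F \mapsto g^{-1}.F$; in particular, a flag $\Phi'$ strictly refines $g.\Phi$ iff $g^{-1}.\Phi'$ strictly refines $\Phi$. So $g.\Phi$ is maximal iff $\Phi$ is maximal.

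The argument is entirely formal, and no step is a real obstacle; the only thing to watch is that if the convention in force is that faces be \emph{exposed}, one must observe explicitly (as above) that affine automorphisms map supporting hyperplanes to supporting hyperplanes, so $g$ does act on the lattice of exposed faces and not just on the lattice of all faces.
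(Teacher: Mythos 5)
Your proof is correct and is exactly the elementary unpacking the paper has in mind (the paper states this as an "elementary fact" without supplying a proof): transport of faces via $g^{-1}$, preservation of supporting hyperplanes for exposedness, barycenter covariance applied to $F$ inside $\aff F$, and the induced order-isomorphism of the face lattice giving preservation of maximality. Your explicit remark that one must check $g$ acts on the lattice of \emph{exposed} faces, since that is the convention in force for flags, is a worthwhile point the paper leaves implicit.
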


\begin{definition}\cite{FarranRobertson}\label{def: regular}
A convex body $B$ is called \emph{regular} if its symmetry group
$G{B}$ acts transitively on the set of maximal flags of $B$.
\footnote{Once again, this is essentially the definition from Farran
  and Robertson \cite{FarranRobertson}, except that their notion of
  flag omits $B$ itself.  It obviously gives the same notion of
  regularity.}
\end{definition}

We give an analogous definition for arbitrary convex compact sets:
\begin{definition}\label{def: regular compact convex set}
A convex compact set $\Omega$ is called \emph{regular} if its affine
automorphism group $\aut{\Omega}$ acts transitively on the set of
maximal flags of $\Omega$.
\end{definition}

Unlike the notion of regular compact convex set, the notion of regular
convex body is not affine-invariant, because it is sensitive to the
embedding in Euclidean space.  But we have the following elementary
relation between the notions of regular convex body (Definition
\ref{def: regular}) and of regular compact convex set (Definition
\ref{def: regular compact convex set}).

\begin{proposition}\label{prop: regular bodies and sets}
Every regular convex body $B$, considered as a compact convex set, is 
regular.  Not every convex body that is regular when considered
as a compact convex set, is a regular convex body, but every 
\emph{canonically embedded} regular compact convex set is a regular convex
body.
\end{proposition}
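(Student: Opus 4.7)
The plan is to exploit the simple fact that rigid transformations are affine, so $GB$ is always a subgroup of $\aut{B}$, together with Proposition \ref{prop: canonically embedded cc sets} which says the two groups coincide for a canonically embedded convex compact set.

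For the first claim, suppose $B$ is a regular convex body. Every element of $GB$ is an affine transformation, so $GB \subseteq \aut{B}$. By Proposition \ref{prop: action on flags}, both groups act on the set of maximal flags of $B$. Transitivity of the smaller group $GB$ on maximal flags immediately forces transitivity of the larger group $\aut{B}$, so $B$ is regular as a compact convex set in the sense of Definition \ref{def: regular compact convex set}.

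For the second claim, I would use a non-circular ellipse $E \subset \R^2$ as a counterexample. Its maximal flags are the one-element chains $\{p\}$ with $p \in \partial_e E$ (together with $E$, depending on the convention), so there are uncountably many of them. The symmetry group $GE$ of a non-circular ellipse is the Klein four-group generated by the two axial reflections, which is finite and therefore cannot act transitively on $\partial_e E$; hence $E$ is not a regular convex body. On the other hand $E$ is affinely equivalent to a disk, so its affine automorphism group $\aut{E}$ contains (a conjugate of) the full circle group, which does act transitively on $\partial_e E$. Thus $E$ is regular as a compact convex set but not as a convex body.

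For the third claim, let $\Omega$ be a canonically embedded regular compact convex set. By Proposition \ref{prop: canonically embedded cc sets} we have $G\Omega = \aut{\Omega}$ as subgroups acting on $\aff{\Omega} = E$. Regularity as a compact convex set means $\aut{\Omega}$ acts transitively on maximal flags of $\Omega$; by the identification this is precisely transitivity of $G\Omega$ on maximal flags, which is regularity as a convex body. There is no real obstacle here since everything reduces to invoking the canonical embedding result and the trivial inclusion $GB \subseteq \aut{B}$; the only care needed is to produce an honest counterexample to the converse of the first implication, for which the non-circular ellipse suffices.
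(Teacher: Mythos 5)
Your argument is correct and follows essentially the same route as the paper: the first and third claims reduce to the trivial inclusion $GB \subseteq \aut{B}$ and to Proposition \ref{prop: canonically embedded cc sets}, respectively. The only difference is the counterexample for the middle claim --- the paper uses a non-equilateral triangle (affinely a simplex, hence regular as a compact convex set, but with rigid symmetry group too small to be flag-transitive), whereas you use a non-circular ellipse; both are valid.
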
  

\begin{proof}
The first sentence holds because because the symmetry group $GB$ is a
subgroup of $\aut{B}$, and transitive action by a subgroup trivially
implies transitive action by the group.  The second holds because
a regular compact convex set may be embedded in such a way that its symmetry 
group is too small a subgroup of the automorphism group to act transitively
on frames (consider a triangular simplex, embedded as a non-equilateral
triangle), but in its canonical embedding, the symmetry group is equal 
to the automorphism group (cf. Proposition \ref{prop: canonically embedded cc sets}).  
\end{proof}

The following is Farran and Robertson's version (probably originating, for the case
of groups generated by a finite number of reflections, with
Coxeter or earlier), adapted to this
setting, of a standard notion from the theory of group actions on topological spaces, that of a
\emph{fundamental set} for an action.  
\begin{definition}\label{def: fundamental region}
Let $B$ be a convex body of affine dimension $n$ in 
$V \iso \E^n$, with barycenter $0$, and let $G$ be a compact subgroup
of $O(\E^n)$ that 
preserves $B$.  
A convex solid (not necessarily full-dimensional) $D \subset \E^n$, 
 is called a \emph{fundamental
  region} for the action of $G$ on $B$ if
\begin{enumerate}
\item $B \subseteq GD$, and
\item  Every $G$-orbit in $B$ meets the relative interior of 
$B$ in at most one point.
\end{enumerate}
\end{definition}

\begin{theorem}[Farran and Robertson (Theorem 7 of \cite{FarranRobertson})]
\label{theorem: fundamental region}
Let $B$ be a convex body embedded in $\E^n$.  Suppose in addition
that $B$ is regular.  Let $\Phi = (F_1,....,F_r)$ be a maximal flag of
$B$, and let $c_i$ be the barycenter of $F_i$.  Then the
$(r-1)$-simplex $\triangle_\Phi := \triangle(c_1,..., c_{r-1}, c_r)$ is a
fundamental region for the action of $B$'s symmetry group $G$ on
$B$.
\end{theorem}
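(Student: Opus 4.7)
The plan is to proceed by induction on $n = \dim B$, with the trivial case $n = 0$ (where $B = \{0\} = \triangle_\Phi$) as the base. Before carrying out the step I would record two preliminary consequences of regularity. First, by transitivity of $G$ on maximal flags, the sequence of dimensions occurring in a maximal flag is a $B$-invariant, so all maximal proper faces of $B$ share a common dimension and lie in a single $G$-orbit. Second, the face $F_{r-1}$, viewed as a convex body in its own affine span, is itself regular: the stabilizer $G_{F_{r-1}}$ acts transitively on maximal flags of $B$ ending at $F_{r-1}$, and these flags correspond bijectively, by deleting the top entry $B$, with maximal flags of $F_{r-1}$. Writing $\Phi' := (F_1, \ldots, F_{r-1})$ and $\triangle_{\Phi'} := \triangle(c_1, \ldots, c_{r-1})$, I then have $\triangle_\Phi = \mathrm{conv}(\{0\} \cup \triangle_{\Phi'})$, since $c_r = 0$.

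For the covering condition $B \subseteq G\triangle_\Phi$, I would take $x \in B \setminus \{0\}$ and let $\tilde x \in \partial B$ be the unique boundary point on the ray from $0$ through $x$ (unique because $B$ is convex and $0 \in \mathrm{int}\,B$). I would choose a maximal proper face $F$ with $\tilde x \in F$; by the first preliminary fact, some $g_0 \in G$ maps $F$ onto $F_{r-1}$, so $g_0(\tilde x) \in F_{r-1}$. The inductive hypothesis, applied to $F_{r-1}$ with the maximal flag $\Phi'$, supplies $h \in G_{F_{r-1}}$ with $h(g_0(\tilde x)) \in \triangle_{\Phi'}$. Since $g_0$ and $h$ both fix $0$, the composite sends the segment $[0, \tilde x]$ to the segment $[0, hg_0(\tilde x)] \subset \mathrm{conv}(\{0\} \cup \triangle_{\Phi'}) = \triangle_\Phi$, placing $hg_0(x) \in \triangle_\Phi$.

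For uniqueness, suppose $g(x) = y$ with $x, y \in \mathrm{relint}\,\triangle_\Phi$ and $g \in G$. The key observation is that the minimal face of $B$ containing any point $z \in \mathrm{relint}\,\triangle_{\Phi'}$ is exactly $F_{r-1}$: such $z$ is a strict convex combination of $c_1, \ldots, c_{r-1}$, each $c_i$ lies in $\mathrm{relint}\,F_i$, and any face of $B$ containing $z$ must contain each $c_i$ and hence each $F_i$, hence their join $F_{r-1}$. I would then decompose $x = tx^\ast$ and $y = sy^\ast$ with $x^\ast, y^\ast \in \mathrm{relint}\,\triangle_{\Phi'} \subset F_{r-1} \subset \partial B$ the (unique) boundary points of $B$ on the rays from $0$ through $x$ and $y$ respectively, and $t, s \in (0,1)$. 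Since $g(x^\ast) \in \partial B$ lies on the ray from $0$ through $y$, it must equal $y^\ast$; then $|x| = |y|$ together with $|x^\ast| = |y^\ast|$ (an isometry consequence) gives $t = s$. Since $g$ takes minimal containing faces to minimal containing faces, $g(F_{r-1}) = F_{r-1}$, so $g \in G_{F_{r-1}}$, and the inductive hypothesis applied to $F_{r-1}$ forces $x^\ast = y^\ast$, whence $x = y$.

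The main obstacle I anticipate is the regularity-and-transitivity reduction to $F_{r-1}$: carefully verifying that $F_{r-1}$ is a regular convex body in its affine span, and that its own symmetry group (as required by the induction hypothesis on fundamental regions in the ambient Euclidean space of $F_{r-1}$) is adequately realized by the restriction of $G_{F_{r-1}}$. Both the covering and uniqueness arguments pass through this reduction; once it is securely in place, the radial structure $\triangle_\Phi = \mathrm{conv}(\{0\} \cup \triangle_{\Phi'})$ and the convexity of $B$ handle the rest almost mechanically.
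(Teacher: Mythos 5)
First, a point of reference: the paper does not prove this statement --- it is imported verbatim as Theorem 7 of Farran and Robertson \cite{FarranRobertson} --- so there is no in-paper proof to compare against. Your induction is the natural barycentric-subdivision argument (the fundamental region as the cone from the barycenter over a fundamental region of a maximal proper face), and the individual steps are sound as sketched: the radial decomposition works because every element of $G$ fixes the barycenter $0$ and acts orthogonally, so rays from $0$ go to rays from $0$ and the unique boundary point on a ray is preserved; the identification of $F_{r-1}$ as the minimal face of $B$ through any point of $\relint \triangle_{\Phi'}$ is correct and is exactly what reduces uniqueness to $g \in G_{F_{r-1}}$; and the transitivity of $G$ on maximal proper faces does follow from flag-transitivity once one checks that every maximal proper face is the penultimate entry of some maximal flag.

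The one substantive issue is the one you yourself flag at the end, and it needs an actual fix, not merely a careful verification: the inductive hypothesis, applied literally to $F_{r-1}$, concerns the \emph{full} symmetry group of $F_{r-1}$ in its affine span, which may be strictly larger than the image of $G_{F_{r-1}}$ under restriction. In the covering step you need the element $h$ to lie in $G_{F_{r-1}}$, so that it extends to a symmetry of $B$ fixing $0$, and the hypothesis for the full symmetry group of $F_{r-1}$ does not supply such an $h$. The standard repair is to prove the stronger statement by induction: for \emph{any} compact group $H$ of isometries preserving $B$ and acting transitively on the maximal flags of $B$, the simplex $\triangle_\Phi$ is a fundamental region for $H$ acting on $B$. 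The restriction of $G_{F_{r-1}}$ to $\aff{F_{r-1}}$ is flag-transitive on $F_{r-1}$ (by your bijection between maximal flags of $B$ with penultimate entry $F_{r-1}$ and maximal flags of $F_{r-1}$), so it qualifies, and both the covering and the uniqueness steps then close. Two smaller points: Farran and Robertson's flags are chains of \emph{exposed} faces, so the flag bijection and the ``minimal containing face'' step require checking that the relevant faces of $F_{r-1}$ are again exposed faces of $B$ (routine but not free of content at this stage, since the fact that all faces of regular bodies are exposed is downstream of the classification); and the induction should be stated as strong induction on dimension (or induction on the flag length $r$), since $\dim F_{r-1}$ need not be $n-1$.
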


When we work with a fixed flag $\Phi$, we often write $\tri$ for $\tri_\Phi$, and
 omit the subscripts indicating the dependence
of other objects on $\Phi$ as well.  
The automorphism group, $\aut{B} =: K$, of any convex compact set is a
compact Lie group.  If it has trivial Lie algebra ($\fk = \{0\}$),
then it is a finite group.  The same two statements hold for the
symmetry group of a convex body $B$ as defined by Farran and Robertson
(since their definition of convex body requires $B$ to affinely span
the Euclidean space in which it is embedded). 
We write $\tri$ for $\tri_\Phi$ and 
$\triangle'$ for the $r-2$-simplex $\triangle(c_1, ..., c_{r-1})$ in
$\aff{B} \equiv \R^n$.  This differs from the definition of
$\triangle$ only by omitting the barycenter $c_r\equiv 0$ of $B$; its affine
dimension is $r-2$, whereas $\triangle$'s is $r-1$.  

The facial structure of the simplex $\tri$ encodes important information about how generic the orbits  are, via
the following theorem.

\begin{theorem}[Theorem 8 of \cite{FarranRobertson}]
\label{theorem: isotropy subgroups and fundamental domain}
Let $\tri$ be the fundamental domain for the regular convex body $B$, as
defined in Theorem \ref{theorem: fundamental region}, and let $G = GB$, $B$'s symmetry 
group.  Let $F$ be a face of $\tri$ of nonzero dimension.  If $x$ is in the relative interior of $F$, 
and $y \in F$ is arbitrary, then $G_x \le G_y$.  In particular, if $x$ and $y$ are both in the
relative interior, $G_x = G_y$.
\end{theorem}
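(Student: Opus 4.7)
My plan is to reduce the theorem to the stronger equality $G_x = G^{\aff F}$, the pointwise stabilizer of the affine span of $F$. This suffices: for any $y\in F\subseteq\aff F$, $G^{\aff F}$ fixes $y$, so $G_x = G^{\aff F}\subseteq G_y$; and applying the equality at both endpoints when $x,y\in\relint F$ gives $G_x = G^{\aff F}=G_y$. The inclusion $G^{\aff F}\subseteq G_x$ is trivial, so the content is the reverse. Because $G$ acts affinely and the $c_i$ with $i$ in the index set $S\subseteq\{1,\ldots,r\}$ labeling $F$'s vertices affinely span $\aff F$, it suffices to prove that every $g\in G_x$ fixes each $c_i$ with $i\in S$.

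I would then proceed in two steps. The first is to show that $g.F = F$ as subsets of $\E^n$. The image $g.\tri$ is another fundamental region for $G$ on $B$, and $g.F$ is a face of $g.\tri$; the point $x=g.x$ lies in $\relint F\cap\relint g.F$. Using a structural property of regular convex bodies whereby two translates of the simplex fundamental region intersect in a common face of both (a chamber-intersection property that must be extracted from regularity together with the uniqueness-in-$\relint\tri$ axiom), I conclude $F=g.F$. The second step is a distinctness argument. Since $g$ is affine and $F$ is the simplex on $\{c_i:i\in S\}$, $g$ permutes these extremal vertices. Suppose $g.c_i=c_j$ for some $i,j\in S$. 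Because $c_i\in\relint F_i$, $G$-equivariance of the face decomposition of $B$ gives $g.c_i\in\relint g.F_i$; and $c_j\in\relint F_j$. Hence both $\relint g.F_i$ and $\relint F_j$ contain the common point $g.c_i=c_j$. Since the relative interiors of distinct faces of a convex body are disjoint, $g.F_i = F_j$, so $\dim F_i = \dim F_j$. But the chain $F_1\subsetneq\cdots\subsetneq F_r$ has strictly increasing dimensions, so $i=j$; the induced permutation of $\{c_i:i\in S\}$ is trivial, and $g.c_i=c_i$ for each $i\in S$.

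The main obstacle is the first step: establishing $g.F=F$ when $g$ merely fixes a single interior point $x$. There is no reflection-group hypothesis available on $G$, so the chamber-intersection property must be extracted purely from regularity, the fundamental-region axiom (orbits meet $\relint\tri$ at most once), and the simplicial shape of $\tri$. The argument I anticipate is local-to-global: equivariance of $g$ together with the uniqueness-in-$\relint\tri$ axiom on a neighborhood of $x$ inside $\tri$ forces $g$ to respect the face of $\tri$ whose relative interior contains $x$, after which affine rigidity of the simplex $\tri$ promotes this local compatibility to the required global equality $g.F=F$. Once that is in hand, the distinctness step closes the proof, and the final ``in particular'' assertion of the theorem follows immediately by applying $G_x=G^{\aff F}$ symmetrically to $x$ and $y$ in $\relint F$.
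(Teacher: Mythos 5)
The paper gives no proof of this statement---it is imported verbatim as Theorem 8 of \cite{FarranRobertson}---so I can only assess your argument on its own terms. Your reduction (it suffices that every $g\in G_x$ fix each vertex $c_i$, $i\in S$, of $F$, since these affinely span $\aff F$) and your second step (a $g$-induced permutation of $\{c_i:i\in S\}$ must be trivial, because $g.c_i=c_j$ puts $c_j$ in $\relint(g.F_i)\cap\relint F_j$, forcing $g.F_i=F_j$ and hence $\dim F_i=\dim F_j$, hence $i=j$) are both correct. The gap is your first step, and unfortunately that is where essentially all of the content lives: you never establish $g.F=F$. Since an affine bijection of the simplex $F$ onto itself must permute its vertices, ``$g.F=F$'' is \emph{equivalent} to ``$g$ permutes $\{c_i:i\in S\}$'', which by your own second step is equivalent to the full conclusion; so the chamber-intersection property you invoke is not a lemma feeding the proof but a restatement of the theorem. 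Moreover, the local-to-global argument you anticipate does not go through as described: the fundamental-region axiom only constrains a $G$-orbit where it meets $\relint\tri$, so for $z\in\relint\tri$ near $x$ it yields information only if $g.z$ happens to land back in $\relint\tri$, and nothing guarantees that.

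The missing idea is to argue with the faces $F_i$ of $B$ rather than with translates of $\tri$. Write $x=\sum_{i\in S}\lambda_i c_i$ with all $\lambda_i>0$ and set $i^*=\max S$. Since $c_{i^*}\in\relint F_{i^*}$ while $c_i\in F_i\subseteq F_{i^*}$ for the remaining $i\in S$, the point $x$ is a convex combination, with strictly positive weight on a relative-interior point, of points of $F_{i^*}$, hence $x\in\relint F_{i^*}$. Thus $F_{i^*}$ is the unique face of $B$ whose relative interior contains $x$, so $g.x=x$ forces $g.F_{i^*}=F_{i^*}$ (disjointness of relative interiors again) and therefore $g.c_{i^*}=c(g.F_{i^*})=c_{i^*}$ by Proposition \ref{prop: action on flags}. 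As $g$ is affine and fixes both $x$ and $c_{i^*}$, it fixes $x':=(x-\lambda_{i^*}c_{i^*})/(1-\lambda_{i^*})$, which lies in the relative interior of the face of $\tri$ spanned by $\{c_i:i\in S\setminus\{i^*\}\}$; induction on $|S|$ then gives $g.c_i=c_i$ for all $i\in S$. Note that this uses only the flag structure, the centroid covariance, and elementary convexity---neither regularity nor the fundamental-region property is needed. With this in place, your reduction and the final ``in particular'' clause go through verbatim.
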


We now assume without loss of generality that $B$ is canonically
embedded in $\E^n$, so in particular $c_r = 0$.  
The points $c_1,....,c_{r-1}$ linearly 
generate---equivalently, $\triangle'$ linearly generates---an $(r-1)$-dimensional
vector space in $\aff{B} \equiv \R^n$, which we call $L$.  
(Of course, $\tri$ also (both linearly and affinely) generates $L$.)

Following \cite{FarranRobertson} we define a map $\pi: B \mapsto \pi(B) := B \intersect L$.    
In \cite{FarranRobertson}, $\pi$ is called a projection, presumably because it is idempotent: if
$B$ is a regular polytope, then $\pi(B) = B$.

\begin{definition}
\label{def: farran-robertson section}
For a regular convex body $B$ embedded in Euclidean space $\E^n$, we call the
subspace $L$ defined in the preceding paragraph a \emph{Farran-Robertson section},
and the polytope $\pi(B) := L \intersect B$  its \emph{Farran-Robertson polytope}.  
We define the \emph{Farran-Robertson polytope of a regular compact convex set}
$\Omega$ as the Farran-Robertson polytope of $\Omega$ when $\Omega$ is 
considered as a convex body by canonically embedding it in Euclidean space.
\end{definition}

As embedded in $\E^n$, the convex solid $\pi(B)$ depends on the choice of a maximal flag of $B$,
but as we shall see in Propositions \ref{prop: regular bodies and polar representations} and
\ref{prop: regular compact convex sets and polar representations}, all $\pi(B)$ are $G$-conjugate, 
hence isometric.
  
\begin{theorem}[Theorem 9 of \cite{FarranRobertson}]
\label{theorem: Farran-Robertson polytope is regular}
Let $B$ be a regular convex body with symmetry group $K$.  The
Farran-Robertson polytope $\pi(B)$ is a
regular polytope, of affine dimension $k = \dim L$, whose symmetry group $W$ is a
finite group generated by reflections, isomorphic to $K_L/K^L$.
\end{theorem}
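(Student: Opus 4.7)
The plan is to (i) establish the dimension of $\pi(B)$ and the induced action of $W := K_L/K^L$ on it, (ii) identify $\triangle$ as a fundamental region for $W$ on $\pi(B)$, (iii) deduce finiteness and reflection-generation of $W$, and (iv) transfer transitivity on maximal flags from $B$ to $\pi(B)$.

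For (i): since $c_r = 0$ and $c_1,\ldots,c_{r-1}$ span $L$ by the definition of $L$, the simplex $\triangle = \triangle(c_1,\ldots,c_r)$ is contained in $\pi(B) = B \cap L$ and has affine dimension $r-1 = \dim L$; as $\pi(B) \subseteq L$, its affine dimension is exactly $\dim L$. By construction $K_L$ preserves both $L$ and $B$, hence preserves $\pi(B)$, and since $K^L$ acts trivially on $L$, the quotient $W$ acts faithfully and orthogonally on $L$, preserving $\pi(B)$.

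For (ii): the ``at most one orbit representative in the relative interior of $\triangle$'' half is immediate from Theorem \ref{theorem: fundamental region} applied to $K$, since every $W$-orbit is contained in a $K$-orbit and $\triangle$ already enjoys this property for the larger group $K$. The covering property $\pi(B) \subseteq W.\triangle$ is the crux: given $y \in \pi(B) \subseteq B$, Theorem \ref{theorem: fundamental region} writes $y = k.z$ for some $k \in K$ and $z \in \triangle$, and one must show that $k$ can be arranged to lie in $K_L$ modulo $K^L$. This is a polar-type statement, to be extracted from the isotropy information in Theorem \ref{theorem: isotropy subgroups and fundamental domain} combined with the fact that both $y$ and $z$ lie in $L$.

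Granted (ii), $W$ must be finite, for any positive-dimensional connected component of $W$ would produce positive-dimensional orbits in $L$ meeting the relative interior of $\triangle$ in more than one point, contradicting orbit uniqueness. To identify $W$ as a reflection group, observe that for each codimension-one face of $\triangle$ there is an adjacent maximal flag of $B$ obtained by replacing a single $F_j$; regularity of $B$ supplies a $K$-element realizing this adjacency, and the heart of the argument is showing that such an element can be chosen to preserve $L$ and to act on $L$ as the reflection across the corresponding hyperplane. These reflections generate a subgroup of $W$ whose orbit of $\triangle$ already covers $\pi(B)$, hence generate all of $W$; and the Weyl chambers of this reflection group are in natural bijection with the maximal flags of $\pi(B)$, yielding transitivity of $W$ on the latter and so regularity of $\pi(B)$.

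The two main obstacles are the covering property within (ii) and the realization of reflections in (iii)---both instances of the polar-type assertion that a $K$-transformation carrying one point of $L$ to another can be taken to preserve $L$ setwise. If one can establish the infinitesimal version (the tangent space to a $K$-orbit at a point of $L$ is perpendicular to $L$), then both follow by standard arguments from the theory of polar representations, which is in any case the framework invoked in the Madden--Robertson classification and so presumably usable here.
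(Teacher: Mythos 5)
First, note that the paper does not prove this statement at all: it is imported verbatim as Theorem~9 of \cite{FarranRobertson}, so there is no internal proof to compare against. Judged on its own terms, your proposal is a correct outline of what such a proof must accomplish, but it is not a proof: you explicitly leave open the two load-bearing steps, namely the covering property $\pi(B) \subseteq W.\triangle$ and the claim that the $K$-elements realizing flag adjacencies restrict to reflections of $L$. Both reduce, as you say, to the assertion that for $y \in L$ the tangent space $\fg.y$ to the $K$-orbit is perpendicular to $L$, and you do not establish that assertion. That orthogonality is precisely the hard kernel of Farran and Robertson's argument --- indeed it is the very fact that this paper later quotes from their proof of Theorem~9 in order to establish Proposition~\ref{prop: regular bodies and polar representations} (polarity of the representation). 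Consequently your suggested escape route, ``derive it from the theory of polar representations,'' is circular in the logical architecture used here: polarity is obtained \emph{from} the proof of this theorem, not the other way around, and you supply no independent proof that the symmetry representation of a regular convex body is polar.

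The parts you do carry out are sound: the dimension count for $\pi(B)$, the faithful orthogonal action of $W = K_L/K^L$ on $L$, the ``at most one orbit representative in $\relint \triangle$'' half of the fundamental-region property (inherited from the $K$-statement of Theorem~\ref{theorem: fundamental region}), and the deduction of finiteness of $W$ from the fundamental-region property once the covering half is granted. But without an actual argument for the orthogonality of $K$-orbits to $L$ --- which in Farran--Robertson is obtained by a direct geometric analysis of barycenters of faces in a maximal flag and the isotropy data of Theorem~\ref{theorem: isotropy subgroups and fundamental domain}, not by citing polar-representation theory --- the proposal is a roadmap with its two bridges missing rather than a completed proof.
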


As an example, consider a 3-dimensional ball $\Omega$ centered on
the origin.  $\aut{\Omega} \equiv G \Omega$ is $O(3)$.  $L$ may be taken to be any
line through the origin and $\pi(\Omega)$ the simplex $\tri_1$ consisting of
the diameter $L \intersect \Omega$.  $\aut{\pi(\Omega)} \iso \Z_2$ is generated
by the reflection through the plane orthogonal to this diameter.

Examples where $\Omega \ne \pi(\Omega)$ but with a more interesting $\pi(\Omega)$, for
example where $\pi(\Omega)$ is the three-vertex simplex $\tri_2$, are harder to
visualize because the affine span of $\Omega$ will tend to be too
large.  In the lowest-dimensional example (as it will turn out)
$\Omega$ is the unit-trace positive semidefinite real symmetric $3 \times 3$ 
matrices, with $L$ the diagonal unit-trace matrices (two dimensions)
and $L \intersect \Omega \iso \tri_2$.  $\aff{\Omega}$, the unit-trace
real symmetric matrices, is 5-dimensional in that case.  $\Omega$ in
this second example is affinely isomorphic to the example on
pp. 378-379 of \cite{FarranRobertson}, where it is presented as the
convex hull of the Veronese surface, an embedding of $\PP_2(\R)$
into $\E^6$.  This surface is the extreme boundary of its convex hull
$\Omega$, which spans a 5-dimensional subspace.

The following theorem allows us to understand the facial structure of
$\Omega$ by understanding that of $\pi(\Omega)$.  We will need it
later to show that frames in $\pi(\Omega)$ correspond to frames in
$\Omega$ in the strongly symmetric spectral case; it is of course also
of intrinsic interest.  This theorem is stated near the top of p. 369
of \cite{MaddenRobertson}.\footnote{An essentially identical (except for its
more restricted premise) theorem is
  stated for a more restricted setting (the subclass of polar
  representations occuring within adjoint representations of real
  semisimple groups) in \cite{FarranRobertson}.}

\begin{proposition}[\cite{MaddenRobertson}; see also \cite{FarranRobertson}, Theorem 10, and its proof]
\label{prop: flags and faces of B and P}
Let $B$ be a regular convex body with symmetry group $G$, and
Farran-Robertson polytope $\pi(B)$.  Let $F$ be a face of $\pi(B)$ with centroid
$c(F)$, and write $G^{c(F)}$ for the isotropy subgroup of $G$ at
$c(F)$.  Then the orbit $G^{c(F)} F$ is a face of $B$, which we call
$H_F$, and each face of $B$ is of the form $gH_F$ for some face $F$ of
$\pi(B)$ and some $g \in G$.  Moreover, if $F_1,F_2,...,F_r$ is a maximal
flag of $P$, then $H_{F_1}, H_{F_2},....,H_{F_r}$ is a maximal flag of
$B$, and every maximal flag of $B$ arises from a flag of $\pi(B)$ in this way.  
\end{proposition}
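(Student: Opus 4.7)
My plan is to reduce the proposition to the single identity $H_F = \face_B(c(F))$, the unique smallest face of $B$ whose relative interior contains $c(F)$. Once this identity is established, I expect the surjectivity of $F \mapsto H_F$ modulo $G$ and the maximal-flag correspondence to follow by combining it with the polar-representation property that every $G$-orbit on $B$ meets the section $L$.

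For the inclusion $H_F \subseteq \face_B(c(F))$, my plan is: first, $\face_B(c(F)) \cap L$ is a face of $\pi(B) = B \cap L$ (using the general fact that the intersection of a face of a convex set with an affine subspace is a face of the intersection), and since it contains $c(F)$, it contains the minimal face of $\pi(B)$ with $c(F)$ in its relative interior, which is $F$; second, the identity $g \cdot \face_B(x) = \face_B(gx)$ shows that $G^{c(F)}$ stabilizes $\face_B(c(F))$. The reverse inclusion is the substantive step. I would first show $L \cap \face_B(c(F)) = F$, using the standard relative-interior formula $\relint(L \cap C) = L \cap \relint C$ (valid because $c(F) \in L \cap \relint \face_B(c(F))$) to place $c(F)$ in the relative interior of the face $L \cap \face_B(c(F))$ of $\pi(B)$, identifying it as $F$. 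Then for arbitrary $y \in \face_B(c(F))$, I would invoke the slice version of the polar property to produce $h \in G^{c(F)}$ with $hy \in L$; combined with the previous step this gives $hy \in F$ and hence $y \in h^{-1} F \subseteq H_F$.

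For the surjectivity and flag statements, given any face $H$ of $B$ I would find $g \in G$ with $g c(H) \in L$ (using the polar property) and set $F := \face_{\pi(B)}(g c(H))$, obtaining $H = g^{-1} H_F$ from the identification $H_F = \face_B(c(F))$. Inclusions $F_i \subset F_{i+1}$ in a maximal flag of $\pi(B)$ give $H_{F_i} \subseteq H_{F_{i+1}}$ immediately (since $c(F_i) \in F_{i+1} \subseteq H_{F_{i+1}}$), with strictness because any face of $B$ strictly between them would intersect $L$ in a face of $\pi(B)$ strictly between $F_i$ and $F_{i+1}$, contradicting maximality. The converse direction of the flag statement follows by applying surjectivity to the minimal member of an arbitrary maximal flag of $B$ and iterating up the flag, using that strict inclusions on the $B$-side force strict inclusions on the $\pi(B)$-side by the same intersection argument.

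The hardest part will be the slice property used in the reverse inclusion: that the $G^{c(F)}$-orbit of each $y \in \face_B(c(F))$ meets $L$. This is a standard feature of polar representations and ultimately rests on the polar-representation structure underlying the Madden-Robertson setup, but extracting it cleanly using only the tools of the previous sections---especially Theorem \ref{theorem: isotropy subgroups and fundamental domain} to control isotropy groups along faces of the fundamental region $\tri$---is where the technical heart of the argument lies.
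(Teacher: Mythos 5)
The paper does not actually prove this proposition---it is imported from Madden--Robertson (and Farran--Robertson, Theorem 10), so there is no in-paper argument to compare against; your attempt has to be judged as a free-standing reconstruction. Your reduction to the single identity $H_F = \face_B(c(F))$ is the right skeleton (it is essentially how Biliotti--Ghigi--Heinzner \cite{BiliottiGhigiHeinznerPolarOrbitopes} organize the facial structure of polar orbitopes), and the easy inclusion $H_F \subseteq \face_B(c(F))$ and the identification $L \cap \face_B(c(F)) = F$ are argued correctly. But the proposal has a genuine gap exactly where you flag it: the ``slice property'' that every $G^{c(F)}$-orbit in $\face_B(c(F))$ meets $L$. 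This does not follow formally from Proposition~\ref{prop: polar} or Theorem~\ref{theorem: isotropy subgroups and fundamental domain}, which concern $G$-orbits in all of $V$ and isotropy groups along the fundamental simplex $\tri$, not orbits of the isotropy subgroup $G^{c(F)}$ inside a face of $B$. To obtain it you would need two further facts: (i) that $\face_B(c(F))$ is contained in $c(F) + (\fg\cdot c(F))^\perp$, i.e.\ the face lies in the normal space to the orbit through $c(F)$---a nontrivial statement about faces of orbitopes; and (ii) Dadok's theorem \cite{Dadok1985} that the slice representation of $G^{c(F)}$ on that normal space is again polar with the same Cartan subspace $L$. Neither is established, and since the reverse inclusion, the surjectivity statement, and both directions of the flag correspondence all route through this property, the proof as written is incomplete at its core.

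There is a second, smaller gap in the flag correspondence. Strictness of $H_{F_i} \subsetneq H_{F_{i+1}}$ does follow from $L \cap H_F = F$, but your maximality argument asserts that a face $H$ of $B$ strictly between $H_{F_i}$ and $H_{F_{i+1}}$ meets $L$ in a face \emph{strictly} between $F_i$ and $F_{i+1}$; a priori $L \cap H$ could collapse to $F_i$ or $F_{i+1}$ even when $H$ is strictly intermediate, because you have not shown that a face of $B$ is determined by its intersection with $L$ (only that $H_F \cap L = F$ for faces of the special form $H_F$). Likewise, ``iterating up the flag'' in the converse direction requires, at each step, a group element that moves the next face into standard position while fixing all the smaller faces already normalized---which is again the slice property in disguise, this time for the stabilizer of a partial flag. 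Both points are repairable once the slice property is genuinely in hand, but as written they are asserted rather than proven.
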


Since the symmetry group of a convex body is its automorphism group,
we also have the analogous statement for convex compact sets and their
automorphism groups.

\section{Classification of regular convex bodies via polar representations}
\label{sec: Madden-Robertson classification}

In this section, we sketch the classification of regular convex bodies
from \cite{MaddenRobertson}, via polar representations as defined,
 classified (in the irreducible case), and related to noncompact symmetric spaces in
\cite{Dadok1985}.  The results of this section will be used through
the classification set out in \cite{MaddenRobertson}, Tables 2, 3, and
4.  Besides the tables themselves we need to understand which orbit,
in the polar representation listed in the table, gives rise to the
regular convex body.

Let us write $\rho$ for the representation of the action of the
symmetry group $G\Omega$ on $\E^n$.  Proposition 2.1 of
\cite{MaddenRobertson} states that for a regular convex body $\Omega$,
this representation is irreducible.  \footnote{In \cite{MaddenRobertson} the
  proof is attributed to M. Pinto in a paper ``to appear'' which we
  have not been able to find.  It involves showing that regular convex
  bodies belong to the wider class of \emph{perfect} (not to be
  confused with the notion of perfection of cones) convex bodies
  \cite{FarranRobertson}, and the observation that this implies $\rho$
  is irreducible.  It is not hard to reconstruct a simple proof along
  these lines.  We note that Markus M{\"u}ller
  \cite{MarkusMuellerIrreduciblePersonalCommunication} has shown more
  directly that the automorphism group of a strongly symmetric
  spectral convex body $\Omega$ acts irreducibly on $\aff{\Omega}$.}
Proposition 2.2 of \cite{MaddenRobertson} states that for regular
convex bodies $\Omega$, $\rho$ is a \emph{polar} representation,
defined as one for which the subspace normal to a principal
(i.e. highest-dimensional) orbit of the action meets every orbit
orthogonally.  

More formally, let $G$ be a compact Lie group with Lie algebra $\fg$ and let 
$\rho: G \rightarrow O(V)$ be a  representation on a real
inner product space.  (As usual we often write $g$ for $\rho(g)$ where
$g \in G$.)  $v \in V$ is called \emph{regular} if the orbit $G.v$ is
of maximal dimension (i.e. no orbit has higher dimension).  The
tangent space at $v$ to an orbit $G.v$ is $\fg.v$.  It is a linear subspace
of $V$.
We define  a \emph{linear cross-section} of the
set of $G$-orbits in $V$ to be a subspace of $V$ that intersects every $G$-orbit.
Although this term is not explicitly defined in \cite{Dadok1985}, this is the sense in which it is used there.
We will usually omit the qualifier \emph{linear}, since we make no use of any other kind of cross-section.  

\begin{definition}[following \cite{Dadok1985}]
\label{def: a_nu}
For any $v \in V$, define $\fa_v$ to be the subspace of $V$ normal 
to the $G$-orbit at $v$, i.e. $\fa_v := (\fg.v)^\perp$.
\end{definition}

\begin{proposition}[\cite{Dadok1985}, Lemma 1]
 For any $v$, $\fa_v$ is a cross-section of the set of $G$-orbits.
\end{proposition}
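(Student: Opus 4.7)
The plan is a standard variational argument using compactness of $G$. Fix $v \in V$ and let $w \in V$ be arbitrary; we must show that the orbit $G.w$ meets $\fa_v$. Consider the continuous function $f: G \to \R$ defined by
\[
f(g) := \|g.w - v\|^2.
\]
Since $G$ is compact, $f$ attains a minimum at some $g_0 \in G$, and we will show that $g_0.w \in \fa_v$.

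Next I would impose the first-order optimality condition at $g_0$. For any $X \in \fg$, the curve $t \mapsto e^{tX}g_0$ lies in $G$, so $t \mapsto f(e^{tX}g_0)$ is minimized at $t=0$ and hence has vanishing derivative there. Because $G$ acts on $V$ by orthogonal transformations, $\|e^{tX}g_0.w\|^2 = \|w\|^2$ is constant in $t$, and $\|v\|^2$ is of course constant; thus expanding $f$ shows the derivative condition reduces to
\[
\frac{d}{dt}\bigg|_{t=0} (e^{tX}g_0.w,\, v) \;=\; (X.(g_0.w),\, v) \;=\; 0
\quad \text{for all } X \in \fg.
\]

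The last step is to convert this into the desired orthogonality. Since the $G$-action is orthogonal, the infinitesimal action of $\fg$ is by skew-symmetric operators on $V$, so $(X.u, v) = -(u, X.v)$ for all $u, v \in V$ and $X \in \fg$. Applying this to the identity above yields
\[
(g_0.w,\, X.v) \;=\; 0 \quad \text{for every } X \in \fg,
\]
which means $g_0.w \perp \fg.v$, i.e. $g_0.w \in (\fg.v)^\perp = \fa_v$. Since $g_0.w \in G.w$ and $w$ was arbitrary, every $G$-orbit meets $\fa_v$, so $\fa_v$ is a cross-section.

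There is no real obstacle: the only subtleties are choosing the correct functional to minimize (the squared distance to $v$, which isolates the component of $g.w$ along directions tangent to $G.v$) and using orthogonality of the action to get both constancy of $\|g.w\|$ and skew-symmetry of $X$. Both are routine once the compactness argument is in place.
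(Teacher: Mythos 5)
Your proof is correct and is precisely the standard argument from Dadok's Lemma 1, which the paper cites without reproving: minimize $\|g.w-v\|^2$ over the compact group, note this amounts to maximizing $(g.w,v)$ since the action is orthogonal, and use skew-symmetry of the infinitesimal action to turn the first-order condition into $g_0.w\perp\fg.v$. Nothing is missing.
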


Note that by Definition \ref{def: a_nu}, $V$ itself is a cross-section.\footnote{Sometimes a definition is used in which a cross section 
must intersect each orbit finitely may times, which would rule out $V$ except in the case of $\fg = 0$.}  More interesting are the minimal-dimensional cross sections.

\begin{definition}[\cite{Dadok1985}]
\label{def: Cartan subspace}
Cross-sections of the form $\fa_v$ for
regular $v$ are called \emph{Cartan subspaces}.\footnote{It seems likely that
in \cite{Dadok1985} this definition is meant to apply
  only in the polar case, but in the proof of Proposition 2.2 of
  \cite{MaddenRobertson} (Proposition \ref{prop: regular bodies and polar representations} below), it is clearly meant to apply prior to
  establishing the representation is polar.
}
\end{definition}

Such cross-sections are of minimal dimension.


\begin{proposition}[\cite{Dadok1985}]
\label{prop: polar}
Let $v_0$ be regular. The following are equivalent:
\begin{enumerate}
\item For any regular $v \in V$, there is a $k \in G$
such that $\fg.v = k.(\fg.v_0)$.
\item For any regular $v \in V$, there is a $k \in G$ 
such that $\fa_v = k.\fa_{v_0}$.
\item For any $u \in \fa_{v_0}$, $(\fg.u, \fa_{v_0}) = 0$.
\end{enumerate}
Thus we have uniqueness (up to the action of $G$) of minimal cross-sections if and only if 
the orbits intersect one such cross-section orthogonally.
\end{proposition}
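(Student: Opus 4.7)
The plan is to prove the three conditions equivalent by handling $(1)\Leftrightarrow(2)$, $(3)\Rightarrow(2)$, and $(2)\Rightarrow(3)$ in turn. The equivalence $(1)\Leftrightarrow(2)$ is immediate: since $G$ acts on $V$ by orthogonal transformations, it commutes with taking orthogonal complements, so the condition $\fg.v = k(\fg.v_0)$ holds if and only if $\fa_v = (\fg.v)^\perp = k(\fg.v_0)^\perp = k\fa_{v_0}$.

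For $(3)\Rightarrow(2)$, set $W := \fa_{v_0}$. I would first observe that for any regular $u \in W$ one has $\fa_u = W$: by skew-symmetry of $\fg$ acting orthogonally on $V$, $(X.u, w) = -(u, X.w)$ for any $X \in \fg$ and $w \in W$; applying (3) to $w \in W$ gives $X.w \in W^\perp$, hence $(u, X.w) = 0$ since $u \in W$. Thus $W \subseteq \fa_u$, and equality follows by a dimension count since $u$ is regular. It then remains to show $GW = V$, which is the main technical obstacle. I would attempt this via a closest-point argument: for regular $v \in V$, minimize $g \mapsto \lVert (I - P_W)(gv) \rVert^2$ over the compact group $G$, let $v_1 = g_0 v$ be a minimizer, and decompose $v_1 = w + \eta$ with $w := P_W v_1 \in W$ and $\eta \in W^\perp$. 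Stationarity with respect to $\fg$-variations forces $\eta \perp \fg.v_1 = \fg.w + \fg.\eta$, so $\eta \in (\fg.w)^\perp \cap W^\perp = \fa_w \cap W^\perp$. When $w$ is regular, the preceding step gives $\fa_w = W$, hence $\fa_w \cap W^\perp = \{0\}$ and $\eta = 0$, so $v_1 \in W$; the non-regular case is then handled by a density argument, perturbing $v$ into an open dense set where the projection $P_W v_1$ is regular.

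For $(2)\Rightarrow(3)$, pick a regular $u \in W$. By (2) there is $k \in G$ with $\fa_u = kW$. Since $u \in W$ and $u \in \fa_u = kW$, both $u$ and $k^{-1}u$ lie in $W$; I would argue that $k$ may be chosen to preserve $W$ setwise---equivalently, that the Cartan subspace through a regular point is unique---yielding $\fa_u = W$. Once this is established, $\fg.u \subseteq W^\perp$ follows from the definition $\fa_u = (\fg.u)^\perp$, so (3) holds on the dense subset of regular $u \in W$ and then extends to all of $W$ by continuity of the pairing $(u, X.w) \mapsto 0$.

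The main obstacle, common to both nontrivial directions, is the rigidity linking a Cartan subspace to the orbit it cuts out: in $(3)\Rightarrow(2)$ this manifests as the surjectivity $GW = V$, and in $(2)\Rightarrow(3)$ as uniqueness of the Cartan subspace through a regular point. Both rest on the same core geometric fact about polar representations, and standard density and compactness arguments allow one to reduce to the regular case, where the linear algebra (orthogonality of $\fg.v_0$ and $\fa_{v_0}$, together with matching dimensions) closes the argument.
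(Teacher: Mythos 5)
First, note that the paper does not prove this proposition at all: it is quoted from Dadok \cite{Dadok1985}, so there is no in-paper argument to compare against, and your attempt has to stand on its own.

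Your $(1)\Leftrightarrow(2)$ is correct. Your $(3)\Rightarrow(2)$ is essentially right, but you are working much harder than necessary on the surjectivity $G\fa_{v_0}=V$: that is exactly the cross-section property stated as the proposition immediately preceding this one (Dadok's Lemma~1, valid for \emph{any} orthogonal representation of a compact group), so it can simply be invoked. Your independent derivation of it via minimizing $\lVert (I-P_W)(gv)\rVert^2$ has an unresolved issue --- the foot point $P_W v_1$ of the minimizer need not be regular, the minimizer need not depend continuously on $v$, and so the proposed ``perturb $v$ into an open dense set'' repair is not obviously available --- but none of this matters once Lemma~1 is used. (Dadok's own route is also shorter: once the orbit of a regular $v$ meets $W:=\fa_{v_0}$ at a regular point $u$, condition (3) gives $\fg.u\subseteq W^{\perp}=\fg.v_0$, and equality of dimensions forces $\fg.u=\fg.v_0$, which is (1) directly.)

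The genuine gap is in $(2)\Rightarrow(3)$. Everything there rests on the assertion that ``$k$ may be chosen to preserve $W$ setwise --- equivalently, that the Cartan subspace through a regular point is unique,'' and you offer no derivation of this from hypothesis (2). This is not a standard fact that can be imported: it is \emph{false} for general non-polar representations, hence it must be extracted from (2) itself, and doing so is the entire substance of this implication. For a concrete failure, let $G=SO(2)$ act on $\R^2\oplus\R^2$ with weights $1$ and $2$; every nonzero point is regular, and the two distinct Cartan subspaces $\fa_{(e_1,0)}=\{(a e_1,y)\}$ and $\fa_{(0,e_1)}=\{(x,b e_1)\}$ both contain the regular point $(e_1,e_1)$. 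So knowing that $u$ lies in both $W$ and $\fa_u=kW$ does not by itself let you conclude $\fa_u=W$; you need an argument that actually uses the global conjugacy hypothesis (2), and as written the proof of this direction is circular at its key step. The reduction of (3) from regular $u\in W$ to all of $W$ by density and continuity at the end is fine.
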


\begin{definition}[\cite{Dadok1985}]\label{def: polar representation}
A representation satisfying any (and hence all) of the three equivalent
conditions in Proposition \ref{prop: polar} is called \emph{polar}.  
\end{definition}

Thus the polar representations are ones for which the minimal
dimensional cross-sections of the form $\fa_v$ for regular $v$,
i.e. normals to maximal-dimensional orbits, are all $G$-conjugate
(item 1 in Proposition \ref{prop: polar}), or equivalently ones for
which, for every maximal-dimensional orbit, the subspace normal to
that orbit (at any point) intersects every orbit orthogonally (item 3
in Proposition \ref{prop: polar}).

\begin{remark}
A cross-section that meets all orbits orthogonally is minimal.  Hence 
an equivalent definition (cf. \cite{GichevPolar, GozziPolar, EschenburgHeintzePolar}) of polar representation is that it is a representation for which there exists a cross-section that meets all orbits orthogonally.
\end{remark}

\begin{proposition}[\cite{MaddenRobertson}, Proposition 2.2 and its proof]
\label{prop: regular bodies and polar representations}
Let $B$ be a regular $n$-solid in
$E \iso \E^n$, with centroid $0$, and let $\pi$ be the inclusion of 
the symmetry group $G$ of $B$ in $O(n) \iso O(E)$.   Then the representation $\pi$ 
is polar, and for any maximal flag $\Phi$ of
$B$, the centroids of the faces in $\Phi$ are a basis for a Cartan subspace, $L_\Phi$.
\end{proposition}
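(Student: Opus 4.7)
My plan is to verify that, for a regular $v$ in the relative interior of the simplex $\triangle_\Phi$, the normal subspace $\fa_v$ equals $L_\Phi$, and that $L_\Phi$ meets every $G$-orbit orthogonally; this is condition 3 of Proposition \ref{prop: polar}, which both yields polarity and identifies $L_\Phi$ as a Cartan subspace. Theorem \ref{theorem: fundamental region} gives $B \subseteq G \cdot \triangle_\Phi \subseteq G \cdot L_\Phi$, and since $G \subseteq O(E)$ preserves norms while $0 = c_r$ is interior to $B$, any $w \in E$ can be rescaled into $B$, so $L_\Phi$ meets every $G$-orbit. Because $c_r = 0$ is itself a vertex of $\triangle_\Phi$, the remaining centroids $c_1,\ldots,c_{r-1}$ are affinely independent together with $c_r$, hence linearly independent and spanning $L_\Phi$, so they form a basis.

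Next, fix $v$ in $\relint(\triangle_\Phi)$. Applying Theorem \ref{theorem: isotropy subgroups and fundamental domain} with $F = \triangle_\Phi$ itself, $G_v$ is constant on $\relint(\triangle_\Phi)$ and $G_v \le G_{c_i}$ for each vertex $c_i$, so $G_v$ fixes every $c_i$ and hence the whole of $L_\Phi$ pointwise; thus $G_v = G^{L_\Phi} =: H$. The main technical step is to show $L_\Phi \subseteq \fa_v$. For $u \in L_\Phi$ and $\xi \in \fg$, using that $hu = u$, $hv = v$ for all $h \in H$ and that $G \subseteq O(E)$ acts by isometries,
\[
(\xi \cdot v, u) = (\mathrm{Ad}(h)(\xi) \cdot v, u) \quad \text{for all } h \in H.
\]
Averaging over $H$ yields $(\xi \cdot v, u) = (\bar\xi \cdot v, u)$ with $\bar\xi \in Z_{\fg}(H)$. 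Since $\bar\xi$ commutes with $H$, the curve $\exp(t\bar\xi)v$ lies in the fixed subspace $E^H \supseteq L_\Phi$ and remains in the $G$-orbit of $v$; if the $L_\Phi$-component of $\bar\xi \cdot v$ were nonzero, pushing infinitesimally along it and projecting back into $\triangle_\Phi$ would produce a second orbit representative in $\relint(\triangle_\Phi)$, contradicting the fundamental region property. Hence $(\bar\xi \cdot v, u) = 0$.

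Having shown $L_\Phi \subseteq \fa_v$, a dimension count via the surjective action map $G \times L_\Phi \to E$ (which forces $\fg \cdot v + L_\Phi = E$) gives $L_\Phi = \fa_v$, and shows $v$ has maximal orbit dimension, i.e.\ is regular. Applying the same averaging argument at arbitrary $u \in L_\Phi$ (whose stabilizer contains $H$ by the order-reversing behavior of isotropy on $\triangle_\Phi$) then verifies condition 3 of Proposition \ref{prop: polar}, so $\rho$ is polar and $L_\Phi$ is a Cartan subspace with basis $c_1,\ldots,c_{r-1}$. The main obstacle is the orthogonality step: turning the ``push back into $\triangle_\Phi$'' heuristic into a rigorous contradiction requires careful use of the local slice structure at $v$, since $\exp(t\bar\xi)v$ lies in $E^H$ but not a priori in $L_\Phi$, so one must track its $L_\Phi$-component under the inverse of the fundamental-region map and extract the contradiction from uniqueness of orbit representatives in $\relint(\triangle_\Phi)$.
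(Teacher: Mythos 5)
Your overall architecture differs from the paper's: the paper verifies condition 2 of Proposition \ref{prop: polar} (conjugacy of the normal spaces $\fa_v$), importing the key orthogonality fact ``$G.y$ is perpendicular to $L$ for all $y\in L$'' wholesale from the proof of Theorem 9 of \cite{FarranRobertson}, and then only needs Dadok's minimality of $\fa_x$ together with $g.L_\Phi = L_{g.\Phi}$ to finish. You instead aim at condition 3 and try to \emph{prove} that orthogonality from scratch. Your reductions are fine as far as they go: the identification $G_v = G^{L_\Phi}$ for $v\in\relint(\tri_\Phi)$ via Theorem \ref{theorem: isotropy subgroups and fundamental domain}, the averaging identity $(\xi\cdot v,u)=(\bar\xi\cdot v,u)$ with $\bar\xi$ fixed by $\mathrm{Ad}(H)$, and the concluding dimension count are all correct (and the averaging step is a genuinely nice observation not in the paper).

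The gap is exactly where you flag it, and it is a real one, not a matter of polish. The fundamental-region property only forbids \emph{two distinct points of the same orbit} in $\relint(\tri_\Phi)$. The curve $c(t)=\exp(t\bar\xi)v$ stays in $E^H\cap G.v$, but $E^H$ may strictly contain $L_\Phi$ (indeed $E^{G^{L_\Phi}}=L_\Phi$ is essentially equivalent, for these representations, to the polarity you are trying to establish), so $c(t)$ need not re-enter $L_\Phi$, let alone $\tri_\Phi$. Neither of the two repairs you gesture at produces a second orbit representative: the orthogonal projection $P_{L_\Phi}c(t)$ is generally not in the orbit $G.v$, and the ``inverse of the fundamental-region map'' sends \emph{every} point of $G.v$ to the single representative $v$, so tracking its $L_\Phi$-component yields the constant $v$ and no contradiction. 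Closing this requires an independent input --- e.g.\ the slice theorem at a principal point, a triviality argument for the induced one-parameter group on the polytope $L_\Phi\intersect B$ (which first requires knowing $\bar\xi$ preserves $L_\Phi$, not just $E^H$), or simply the convex-geometric argument of \cite{FarranRobertson} that the paper cites. As written, condition 3 of Proposition \ref{prop: polar} is not verified, so neither polarity nor $L_\Phi=\fa_v$ is established.
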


\begin{proof}[Proof (greatly expanded version of proof in \cite{MaddenRobertson})]
Let $x$ and $v$ each be on principal orbits (not assumed to be the same orbit) of this $G$-action, 
and recall that 
$\fa_x := \{u \in \E^n: \langle u, \fg.x \rangle = 0\}$
and 
$\fa_v := \{u \in \E^n: \langle u, \fg.v \rangle = 0\}$.
We will show:  
\begin{claim} \label{claim: 1}
There is a $g \in G$ such that 
$\fa_v = g . \fa_x$.
\end{claim}   
This will prove Proposition 
\ref{prop: regular bodies and polar representations}, because 
transitive action of $G$ on the set $\{a_y: y \in V\}$ of Cartan subspaces is one of the equivalent conditions
that defines (via Proposition \ref{prop: polar}) Dadok's notion of polar representation.  


To do this we show that the Farran-Robertson sections $L_\Phi$
are Cartan subspaces; recall that $L_\Phi$ is the linear space spanned by the centroids
of the faces in the maximal flag $\Phi$, and hence the linear span of the simplex 
$\tri$ whose vertices are those centroids, and indeed of the simplex $\tri'$ whose
vertices are those of $\tri$, with the exception of $0$ (the centroid of $\Omega$).
Since the topological boundary $\partial B$ contains representatives of all nonzero orbits,  
$L$ is a cross-section of the representation.  In the proof of Theorem 9 in \cite{FarranRobertson}, it 
is established that ``$G.y$ is perpendicular to $L$ for all $y \in L$."  In other words,  for all $y \in L$, 
$L \subseteq \fa_y$.  Since $L$ is a cross-section it contains representatives of principal orbits, i.e. 
regular elements; letting one such be 
$x$, we have $L \subseteq \fa_x$.  But since $x$ is regular 
$\fa_x$ is a minimal cross-section by Lemma 1 of \cite{Dadok1985}, so $L = \fa_x$, and $L$ is a Cartan 
subspace.  

To show 
that $G$ acts transitively on the set of Cartan subspaces, we fix one such subspace $\fa_x = L_\Phi$.   
Since $\tri_\Phi$ is a fundamental region for the action of $G$ on $B$, every nonzero $v$ on a principal orbit 
has the form $v=g.x$ for some $g \in G$ and some  $x$ in an element 
of $\tri_\Phi$.  We have $x \in  L_\Phi$.  By Proposition \ref{prop: action on flags} $g.\Phi$ is also a maximal flag, and (since by that Proposition $g$ takes centroids of face to centroids of faces) $g.L_{\Phi} = L_{g.\Phi}$.  Of course 
$g.x \in L_{g.\Phi}$; moreover, by the same argument used to establish that $L_{\Phi}$ is a Cartan subspace,
$L_{g.\Phi} = \fa_{g.x}$, and hence $L_{g.\Phi}$ is a Cartan subspace.  Since $L_{g.\Phi} = g.L_{\Phi}$, this shows that $\fa_v = g.\fa_x$, establishing Claim \ref{claim: 1}.
\end{proof}
 
\begin{proposition}[Corollary of Proposition \ref{prop: regular bodies and polar representations}]
\label{prop: regular compact convex sets and polar representations}
Let $\Omega$ be a regular compact convex set canonically embedded in
a Euclidean space $E$.  The representation of $\aut{\Omega}$ in $O(E)$
is polar.  The centroids of a maximal flag of $\Omega$ are a basis for
a Cartan subspace.
\end{proposition}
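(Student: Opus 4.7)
The plan is to reduce this directly to Proposition~\ref{prop: regular bodies and polar representations} via the canonical embedding. First, I would observe that the hypothesis of canonical embedding places $\Omega$ as a full-dimensional compact convex subset of $E \iso \E^n$ with barycenter at the origin, and with $\aut{\Omega} \subseteq O(E)$ (Proposition~\ref{prop: canonical embedding} and Definition~\ref{def: canonical embedding}). Moreover, by Proposition~\ref{prop: canonically embedded cc sets}, under canonical embedding the symmetry group $G\Omega$ (in the sense of Farran--Robertson, i.e.\ rigid transformations of $E$ preserving $\Omega$) coincides with $\aut{\Omega}$.

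Next, I would apply Proposition~\ref{prop: regular bodies and sets}: since $\Omega$ is regular as a compact convex set and is canonically embedded, it is a regular convex body in the sense of Definition~\ref{def: regular}. Thus $\Omega$, viewed as the body $B$ in the statement of Proposition~\ref{prop: regular bodies and polar representations}, satisfies all the hypotheses of that proposition: it is a regular $n$-body in $E \iso \E^n$ with centroid $0$, and its symmetry group $G = G\Omega = \aut{\Omega}$ is included in $O(E)$.

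Invoking Proposition~\ref{prop: regular bodies and polar representations} then gives the two desired conclusions at once: the inclusion representation $\pi : \aut{\Omega} \hookrightarrow O(E)$ is polar, and for any maximal flag $\Phi$ of $\Omega$, the centroids of the faces in $\Phi$ form a basis for the Cartan subspace $L_\Phi = \fa_{x}$ (for any regular $x \in L_\Phi$). No additional work is required; the content of the corollary is entirely the translation of ``regular convex body'' into ``canonically embedded regular compact convex set'' via Propositions~\ref{prop: canonically embedded cc sets} and~\ref{prop: regular bodies and sets}. The only potential subtlety is ensuring that the notion of ``maximal flag'' agrees in the two settings, but this is immediate since the face lattice of $\Omega$ is unchanged by the choice of embedding, and regularity ensures that every maximal flag arises as in Theorem~\ref{theorem: fundamental region}.
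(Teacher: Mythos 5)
Your proposal is correct and follows exactly the route the paper intends: the result is stated as an immediate corollary, obtained by noting that a canonically embedded regular compact convex set is a regular convex body whose symmetry group equals $\aut{\Omega}$ (Propositions~\ref{prop: canonically embedded cc sets} and~\ref{prop: regular bodies and sets}), and then applying Proposition~\ref{prop: regular bodies and polar representations}. Nothing further is needed.
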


To study regular nonpolytopal convex compact sets $\Omega$, we may assume that the Lie
algebra $\fg$ of $\Aut \Omega$ is nontrivial, i.e. not equal to $\{0\}$.  For
compact Lie groups, this is equivalent to assuming the group is not
finite.
The next definitions and
results concern situations where the Lie group is connected, and hence
(except in the degenerate case of the trivial group) has nontrivial
Lie algebra.  If $\rho$ is a representation of a Lie group $G$, we
write $d\rho$ for the derived representation of $G$'s Lie algebra.

\begin{definition}[\cite{Dadok1985}]\footnote{The statement in \cite{Dadok1985} appears to have a couple
of typographical errors which we have corrected: it has $\fa$ in place
of $\fg$ in the last line, and omits the $L^{-1}$ that we've inserted
on the left-hand side.  Other than that, we quote \cite{Dadok1985} verbatim.}
\label{def: symmetric space representation}
Let $G$ be a compact connected Lie group, $\fg$ its Lie algebra.  A
representation $\rho: G \rightarrow SO(V)$ is called a \emph{symmetric
  space representation} if there are a real semisimple Lie algebra
$\fh$ with Cartan decomposition $\fh = \fk \oplus \fp$, a Lie algebra
isomorphism $A: \fg \rightarrow \fk$, and an isomorphism $L$ of
linear spaces $V \rightarrow \fp$ such that $L \circ d\rho(X) \circ
L^{-1}(y) = [A(X), y]$ for all $X \in \fg, y \in \fp$.
\end{definition}

In other words, 
$L \circ d\rho(X) \circ L^{-1} = \ad(A(X))$.  

The symmetric spaces in the proposition are $H/K$,
for groups such that $K$ is compact with Lie algebra $\fk \iso \fg$,
and the Lie algebra $\fh$ of the real semisimple group 
$H$ has Cartan decomposition $\fh = \fk
\oplus \fp$, where $\fp$, as mentioned in the definition, can be identified with the
representation space $V$.   They are of noncompact type
since one normally takes the term ``Cartan decomposition" to imply that $\fk \oplus \fp$ is noncompact semisimple.  
However by the duality bijection between compact and noncompact symmetric spaces, the
definition also includes the isotropy representations associated with symmetric spaces of compact type, 
because the compact duals (with Lie algebra decomposition $\fk \oplus i \fp$)
give rise to equivalent polar representations $K \curvearrowright  i\fp$.

Symmetric space representations are polar, with $\fa$, defined as
usual as a maximal abelian subspace of $\fp$, a Cartan subspace in the
terminology of \cite{Dadok1985}.\footnote{Of course
$\fa$ is not a Cartan \emph{subalgebra} of $\fh$, unless $\fh$
  is a split real form of a complex Lie algebra.  But $\fa$ \emph{is}
  the intersection of $\fp$ with a particular type of Cartan
  subalgebra of $\fh$ (one that is ``maximally noncompact'', i.e. 
  whose intersection with $\fp$ is maximal).  The term ``Cartan subspace" was already
standard in the symmetric space context.}   The action of $G$ on $\fp$ is 
often called the \emph{isotropy representation} of $G$ in the context of symmetric space
theory; a symmetric space representation in the above sense is just one that is equivalent to such an 
isotropy representation.

Dadok is able to use the symmetric space representations to classify
the irreducible polar representations (up to orbit equivalence) by
using the following theorem:

\begin{theorem}[Proposition 6 of \cite{Dadok1985}]
\label{theorem: associated symmetric space representation}
Let $\rho: G  \rightarrow SO(V)$ be a polar representation of a connected
compact Lie group $G$.  There is a connected Lie group $\tilde{G}$ with 
symmetric space representation $\tilde{\rho}: \tilde{G} \rightarrow SO(V)$
such that the $G$-orbits and the $\tilde{G}$-orbits in $V$ coincide.  
\end{theorem}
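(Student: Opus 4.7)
The plan is to reconstruct, from the polar representation data $(G,V,\rho)$, a real semisimple Lie algebra $\fh = \fk \oplus \fp$ whose isotropy representation on $\fp$ matches $(V,\rho)$ orbit by orbit, taking $\tilde G$ to be a connected Lie group with Lie algebra $\fk$. The central organizing tool will be the restricted Weyl group acting on a Cartan subspace.

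First I would fix a Cartan subspace $\fa \subseteq V$ (Definition \ref{def: Cartan subspace}) and form the ``restricted Weyl group'' $W := G_{\fa}/G^{\fa}$, in the notation of Section \ref{subsec: terminology}. A key preliminary step is to show, using the polar condition (item 3 of Proposition \ref{prop: polar}), that $W$ is finite and acts on $\fa$ as a group generated by reflections. Finiteness holds because $\fa$ contains regular elements whose full $G$-isotropy equals $G^{\fa}$. Reflection-generation is obtained by examining slice representations at singular points of $\fa$: the polarity condition forces the transverse orbit geometry at such a point to itself be polar for a smaller isotropy group, and an inductive argument shows that $W$ is generated by reflections across the ``wall'' hyperplanes of the singular stratification of $\fa$. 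Moreover, one verifies that each $G$-orbit in $V$ meets $\fa$ in exactly one $W$-orbit, so the orbit geometry of $\rho$ is completely determined by the pair $(\fa,W)$.

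Having encoded the representation into reflection-group data, I would next try to realize the same pair $(\fa,W)$ as the Cartan subspace and restricted Weyl group of an isotropy representation. In the crystallographic case, the arrangement of reflecting hyperplanes in $\fa$ determines a restricted root system, which is the restricted root system of some noncompact real semisimple symmetric pair $(\fh,\fk)$ with Cartan decomposition $\fh = \fk \oplus \fp$; I would then identify $V$ with $\fp$ via an isometry sending $\fa$ to a maximal abelian subspace of $\fp$, let $\tilde G$ be the connected compact Lie group with Lie algebra $\fk$, and let $\tilde\rho$ be the associated symmetric space representation in the sense of Definition \ref{def: symmetric space representation}. To see that $G$- and $\tilde G$-orbits coincide set-theoretically in $V$, I would invoke the characterization of orbits above: both $\rho$ and $\tilde\rho$ are polar with the same Cartan subspace $\fa$ and the same finite reflection group $W$ on $\fa$, so their orbits must agree.

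The hard part will be the non-crystallographic case, where $W$ would have to be $H_3$, $H_4$, or a dihedral group $I_2(m)$ with $m \notin \{2,3,4,6\}$, and no symmetric pair has such a restricted Weyl group. The strategy here is to rule these out by classifying the irreducible polar representations of positive-dimensional connected compact Lie groups and showing that no such representation has a non-crystallographic $W$; the candidates are few enough to inspect directly. Reducible polar representations then decompose as orthogonal sums of $G$-invariant irreducibles, and the associated symmetric space representation is produced component by component and assembled into a product $\tilde G = \prod_i \tilde G_i$. I expect the classification step, together with the bookkeeping needed to obtain orbits that are \emph{equal} (not merely equivalent) to those of $G$ on $V$, to be the main source of technical work.
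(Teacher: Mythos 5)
The paper does not actually prove this statement: it is quoted as Proposition 6 of Dadok's paper and used as a black box, so there is no internal proof to compare against. Judged on its own terms, your outline has the right ingredients at the start --- the restricted Weyl group $W = G_{\fa}/G^{\fa}$ is indeed a finite reflection group and each orbit meets $\fa$ in a single $W$-orbit (these are among Dadok's preliminary results) --- but it breaks down at the crux. The inference ``both representations are polar with the same Cartan subspace $\fa$ and the same reflection group $W$ on $\fa$, so their orbits must agree'' is a genuine gap. What that hypothesis gives you is an identification of the orbit \emph{spaces}, $V/G \cong \fa/W \cong V/\tilde{G}$; it does not show that $G.v$ and $\tilde{G}.v$ coincide as subsets of $V$ for points off $\fa$, nor even that the two orbits through a given $u \in V$ trace out the same $W$-orbit on $\fa$. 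Nothing in the definition of polarity forces two groups with the same pair $(\fa, W)$ to partition $V$ in the same way, and establishing this is precisely the hard content of the theorem.

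Because of this, the classification is not confined to ruling out the non-crystallographic reflection groups, as you suggest; it carries the entire argument. Dadok's actual proof of Proposition 6 reduces to the irreducible case and then runs through his classification of irreducible polar representations of connected compact groups, exhibiting for each entry a symmetric space representation with literally the same orbits (typically because $\rho(G)$ sits inside the image of an isotropy representation as a subgroup that is already transitive on each of its orbits). A conceptual, classification-free derivation along the lines you propose is the subject of the later work of Eschenburg and Heintze cited in this paper, and even there it requires cohomogeneity at least $3$ together with geometric input beyond matching $(\fa, W)$; the cohomogeneity-$2$ crystallographic cases are not disposed of by Weyl-group matching alone. There is also a secondary issue: a crystallographic hyperplane arrangement determines a restricted root system but not its multiplicities, and non-isomorphic symmetric pairs can share a restricted root system, so the pair $(\fh, \fk)$ is not pinned down by $W$ alone --- one needs dimension counts, and ultimately the case analysis, to produce a symmetric space representation realized on $V$ itself compatibly with the given $G$-action.
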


    The key point, from \cite{MaddenRobertson},
is that ``for a[n irreducible] noncompact symmetric space $H/K$, knowledge of $K$ and
the dimension of $H/K$ are sufficient to determine $H$.  But these are
already given by the symmetry group $G$ and the dimension $n$,
respectively.  Thus we have associated to the given $n$-solid $B$ a
symmetric space $H/K$.'' (The bracketed modification is necessary for the truth 
of the claim.) 

The following incorporates and extends Proposition 
\ref{prop: regular bodies and polar representations}, giving more detail on the embedding of a regular
convex body in polar representations of its symmetry group and some subgroups thereof, including consequences of Theorem \ref{theorem: associated symmetric space representation}, 
mostly extracted from the discussion on pp. 366-367 of \cite{MaddenRobertson}, and from \cite{Dadok1985} .
\begin{proposition}\label{prop: polytope as Weyl orbit}
Let $B$ (of dimension $n$) be a regular convex body in a Euclidean space $V \iso \E^n$, with centroid $0$
and symmetry group $G$, and $\flie(G) = \fg$.  Fix a maximal flag $\Phi$ and a corresponding fundamental 
region $\tri := \tri_\Phi$ (cf. Theorem \ref{theorem: fundamental region}).  Then 
\begin{enumerate}
\item \label{item: L is an a_x}
The Farran-Robertson section $L_\Phi \subseteq V$ is a linear cross-section of the 
form $\fa_x$ for regular $x$.
\item \label{item: conjugacy of Ls}
For every regular $x' \in V$, $\fa_{x'}$ is a linear cross-section and there is a $g \in G$ 
such that $\fa_{x'} = L_{g.\Phi}$.  
\item \label{item: both inclusions polar} Both the inclusions $G \rightarrow O(V)$ and $G_0 \rightarrow SO(V) \rightarrow O(V)$ are polar.  
\item \label{item: restricted root space is FR section} 
By Theorem \ref{theorem: associated symmetric space representation} we may make the 
identification  
$V \iso \fp$ where $\fp$ is the isotropy representation of $\tilde{G} \le G$ associated with an irreducible
noncompact symmetric space 
$H/\tilde{G}$.   (So, $\fh = \fg \oplus \fp$ is a Cartan decomposition of $\fh$.)  
The representation $G \curvearrowright V \iso \fp$ is the restriction of the 
representation $\tilde{G} \curvearrowright \fp$.  We have $\tilde{\fg} := \flie{\tilde{G}} = 
\fg := \flie(G) = \flie(G_0)$.  
 Every maximal abelian subspace 
$\fa \subseteq \fp$ is a Cartan subspace of the polar representations 
$\tilde{G} \curvearrowright \fp, G \curvearrowright \fp$, and $G_0 \curvearrowright \fp$,
and is a Farran-Robertson section of $B$, so $\pi(B) = \fa \intersect B$.
\end{enumerate}
\end{proposition}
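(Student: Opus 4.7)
The plan is to address the four items in turn. Items (\ref{item: L is an a_x}) and (\ref{item: conjugacy of Ls}) will be quickly derived from what was already established inside the proof of Proposition~\ref{prop: regular bodies and polar representations}; item (\ref{item: both inclusions polar}) will follow from the purely Lie-algebraic observation that each condition characterizing polarity in Proposition~\ref{prop: polar} depends only on $\fg$; and item (\ref{item: restricted root space is FR section}) is the substantive part, combining Dadok's Theorem~\ref{theorem: associated symmetric space representation} with classical structure theory of noncompact symmetric spaces.

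For item (\ref{item: L is an a_x}), I will simply record the argument embedded in the proof of Proposition~\ref{prop: regular bodies and polar representations}: the Farran--Robertson section $L_\Phi$ is a cross-section, so it contains a regular point $x$, and the inclusion $L_\Phi \subseteq \fa_x$ noted there, combined with Dadok's Lemma~1 (that $\fa_x$ is a minimal cross-section for regular $x$), forces equality $L_\Phi = \fa_x$. For item (\ref{item: conjugacy of Ls}), I will note that any regular $x' \in V$ has $G$-orbit meeting the fundamental region $\tri_\Phi \subset L_\Phi$ at some regular $x$, so $x' = g.x$ for some $g \in G$; equivariance of $\fa_{(\cdot)}$ yields $\fa_{x'} = g.\fa_x$, and Proposition~\ref{prop: action on flags} gives $g.L_\Phi = L_{g.\Phi}$, finishing this item.

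For item (\ref{item: both inclusions polar}), polarity of $G \hookrightarrow O(V)$ is exactly Proposition~\ref{prop: regular bodies and polar representations}. Since $G_0$ has the same Lie algebra $\fg$ as $G$, the orbit tangent space $\fg.v$, and hence $\fa_v$, is determined by $\fg$ alone; every $G_0$-orbit sits inside a $G$-orbit, so $\fa_v$ continues to meet it; and the dimension of $\fg.v$ is the same whether measured via $G$ or $G_0$, so regularity of $v$ is the same notion for both. Crucially, condition~(3) of Proposition~\ref{prop: polar}, namely $(\fg.u, \fa_{v_0}) = 0$ for all $u \in \fa_{v_0}$, depends only on $\fg$, and so is inherited from $G$ to $G_0$.

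For item (\ref{item: restricted root space is FR section}), I will apply Dadok's Theorem~\ref{theorem: associated symmetric space representation} to the connected polar representation $G_0 \to SO(V)$, producing a connected group $\tilde{G}$ with symmetric space representation $\tilde{\rho}: \tilde{G} \to SO(V)$ having the same orbits as $G_0$. Coincidence of orbits forces $\tilde{\fg}.v = \fg.v$ at every $v \in V$, and faithfulness of the $\fg$-action (inherited from $G \le O(V)$) then forces $\tilde{\fg}$ and $\fg$ to have the same image in $\fgl(V)$, yielding $\tilde{\fg} = \fg$ and identifying $\tilde{G}$ with $G_0 \le G$. Definition~\ref{def: symmetric space representation} then supplies the identification $V \iso \fp$ with $\fh = \fg \oplus \fp$ a Cartan decomposition of a real semisimple $\fh$, and transports the $G$-action on $V$ to the isotropy (adjoint) action $\ad(\fk)$ on $\fp$. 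To see that a maximal abelian $\fa \subseteq \fp$ is a Cartan subspace in Dadok's sense, I will invoke the standard symmetric-space fact that, for a regular $x \in \fa$ (one whose centralizer in $\fp$ equals $\fa$), the tangent space $\ad(\fg).x$ to the $G_0$-orbit is orthogonal to $\fa$ with respect to a $G$-invariant inner product on $\fp$ and has complementary dimension, so $\fa = (\ad(\fg).x)^\perp = \fa_x$. Item~(\ref{item: conjugacy of Ls}) applied to this $x$ then gives $g \in G$ with $\fa = L_{g.\Phi}$, so $\fa$ is a Farran--Robertson section and $\fa \intersect B = \pi(B)$ by Definition~\ref{def: farran-robertson section}. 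The main obstacle is precisely this last item: reconciling Dadok's bare orbit equivalence with the sharper $\tilde{\fg} = \fg$, and then matching the symmetric-space notion of maximal abelian subspace of $\fp$ with Dadok's notion of Cartan subspace of the polar representation. Both rest on classical isotropy-representation theory, but must be stitched together carefully with the conventions of \cite{Dadok1985} and \cite{MaddenRobertson}.
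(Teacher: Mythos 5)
Your treatment of items (\ref{item: L is an a_x})--(\ref{item: both inclusions polar}) matches the paper's: the first two are read off from the proof of Proposition \ref{prop: regular bodies and polar representations}, and the third follows because condition (3) of Proposition \ref{prop: polar} depends only on $\fg = \flie(G) = \flie(G_0)$. The difficulty is exactly where you flag it, in item (\ref{item: restricted root space is FR section}), and your proposed resolution does not work. The inference ``coincidence of orbits forces $\tilde{\fg}.v = \fg.v$ at every $v$, and faithfulness then forces $\tilde{\fg}$ and $\fg$ to have the same image in $\fgl(V)$'' is false: take $\fu(p) \subsetneq \fso(2p)$ acting on $V = \R^{2p}$. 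Both actions are faithful, both have the spheres as orbits, so $\fu(p).v = v^\perp = \fso(2p).v$ for every $v \ne 0$ (dimension count: $\dim \fu(p).v = p^2 - (p-1)^2 = 2p-1$), yet the two images in $\fgl(V)$ are different. Equality of all orbit tangent spaces plus faithfulness simply does not pin down the subalgebra. Dadok's Theorem \ref{theorem: associated symmetric space representation}, as stated, guarantees only orbit equivalence, and orbit-equivalent faithful representations of genuinely different groups do occur --- this is precisely the phenomenon behind the transitive sphere actions of proper subgroups of $SO(n)$ discussed in Appendix \ref{appendix: more information on classification}.

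The paper closes this gap by using not just the statement but the structure of Dadok's proof: there the derived symmetric space representation $\tilde{\fg} \to \fso(V)$ is produced as a factorization of $\fg \to \fso(V)$ through a quotient map $\fg \to \tilde{\fg}$, so the kernel is a central ideal $\fz$ of $\fg$ acting trivially on $V$; since $G_0 \le SO(V)$ acts faithfully, $\fz = 0$ and $\fg = \tilde{\fg}$, with coinciding images. (That $G$ is the \emph{full} symmetry group of $B$ is what keeps the $U(p)$-type situation from arising here; your argument never uses this.) Once $\tilde{\fg} = \fg$ is in hand, your remaining steps --- the identification $V \iso \fp$, the standard symmetric-space fact that $(\fg.x)^\perp = \fa$ for $x \in \fa$ regular, and the application of item (\ref{item: conjugacy of Ls}) to conclude that $\fa$ is a Farran--Robertson section --- coincide with the paper's.
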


\begin{proof}
The polarity of the 
inclusion $G \rightarrow O(V)$ in item \ref{item: both inclusions polar} is Proposition \ref{prop: regular bodies and polar representations}, while 
items \ref{item: L is an a_x} and \ref{item: conjugacy of Ls} are the 
key points in its proof.   Item (iii) in Proposition \ref{prop: polar} characterizes the polarity
of a representation entirely in terms of the derived representation $\fg \rightarrow \fso(V)$, whence the inclusion
$G_0 \rightarrow SO(V) \rightarrow O(V)$ is also polar since $\fg \equiv \flie(G) = \flie(G_0)$.

Turning to item 4, Theorem \ref{theorem: associated symmetric space representation} and its proof in 
\cite{Dadok1985}  show that because $\tilde{G}$ has the same orbits as $G$, 
 its Lie algebra $\tilde{\fg}$ is a quotient of $\fg$ and the derived symmetric space representation 
$\tilde{\fg} \rightarrow \fso(V)$ factorizes $\fg \rightarrow \fso(V)$.   So 
$\fg = \tilde{\fg} \oplus \fz$, where $\fz$ is central in $\fg$ and does not act on $V$, i.e. $\fz.V = \{0\}$.
Hence the connected subgroup $Z$ of $G_0$ with $\flie(Z) = \fz$ acts trivially on $V$.  Since $G_0$ is a subgroup 
of $SO(V)$ this implies that $Z$ is trivial, whence $\fg = \tilde{\fg}$.  But the Riemannian symmetric space
doesn't change if we replace the maximal compact subgroup $\tilde{G}$ of ${H}$ with a locally 
isometric one, so we can assume that $G_0 = \tilde{G}/\tilde{Z}$, where 
$\tilde{Z} := \{g \in \tilde{G}:   g|_{V} = \text{id} \}$.  It follows that $V = \fp$, $\fh = \fg \oplus \fp$.
Since it is known from the theory of symmetric spaces that the subspaces $(\fg.x)^\perp$ for regular $x$, 
i.e. the Cartan subspaces of Definition \ref{def: Cartan subspace}, 
of a symmetric space isotropy representation are precisely the maximal abelian subspaces of $\fp$, we may choose
any such subspace as a Cartan subspace.  The Cartan subspaces of the polar representations of $G$, $\tilde{G}$, and $G_0$ coincide because their Lie algebras do; consequently $\fa$ is a Farran-Robertson section for 
$G \rightarrow V$ (with respect to some choice of maximal flag of $B \subset V$), and $\pi(B) = \fa \intersect B$.  
\end{proof}

Madden and Robertson do not state the above Proposition (Proposition 
\ref{prop: polytope as Weyl orbit}) as formally as we do.  But their classification
applies it to all of the irreducible polar representations that are symmetric
space representations in the sense of Definition \ref{def: symmetric
  space representation}, which were classified by Cartan.    Combined with the regularity of 
the Farran-Robertson polytope (Theorem 
\ref{theorem: Farran-Robertson polytope is regular}), and Coxeter's work \cite{CoxeterBook} (building on 
the classification of regular polytopes by Schl{\"a}fli \cite{Schlaefli} and a construction by Wythoff) classifying the 
embeddings of regular polytopes as orbitopes in finite groups generated by reflections, this enables
them to classify regular convex bodies, a classification given in their Tables 2, 3,
and 4.   We formally summarize their classification as Theorem \ref{theorem: Madden-Robertson classification} below, 
and reproduce their tables, with minor changes and the addition of the rightmost column, 
indicating which symmetric spaces are associated with Euclidean Jordan algebras, as 
Tables \ref{table: classical noncompact ss reps}, \ref{table: exceptional noncompact ss reps} and
\ref{table: ss reps from complex groups} below.  We precede this with a brief sketch of their argument 
involving Coxeter's work.


Groups generated by a finite set of reflections in finite-dimensional real affine space, or isomorphic to 
such a concrete group, are usually called \emph{reflection groups}; each finite (not merely finitely generated) reflection group has a canonical representation on Euclidean space $\E^n$, in which it is generated by a finite number of orthogonal reflections through linear hyperplanes and there is no subspace on which the action is trivial.   The automorphism groups of regular polytopes are finite reflection groups; when the polytope is
canonically embedded in Euclidean space, its automorphism group acts acts in this canonical representation.
If $W$ is a finite reflection group, we use the term $W$\emph{-orbitope} for  
the convex hull $\conv{W.v}$ of an orbit of $W$ in its canonical representation, and also call $\conv{W.v}$ 
\emph{the }$W$\emph{-orbitope of }$v$, or simply  
\emph{the orbitope of} $v$ when $W$ is clear from context.  
Coxeter considered all finite reflection groups groups $W$, specified by what are now called Coxeter diagrams, and showed that the only regular $W$-orbitopes  are, up to dilation by a positive scalar, convex hulls of orbits of particular points in the canonical representation, identified by marking a node of the Coxeter diagram.

With a finite reflection group $W$ fixed, a unit-length normal $\alpha \in \E^n$ to the fixed hyperplane $\alpha^\perp$ (the ``mirror") of a reflection that is an element of $W$ is called a \emph{root}, and the associated reflection is denoted by $s_\alpha$.   (A different normalization constant is sometimes chosen, and in the theory of Weyl groups of Lie groups, which are a subset of the finite reflection groups, a different normalization is often used, in which not 
all roots have the same length.)   Thus each reflection is associated with two roots $\pm \alpha$, and the set $R$ of roots is finite.  Since for any $g \in O(n)$, $g s_\alpha g^{-1}$ is  the reflection $s_{g \alpha}$, in particular for any two roots $s_\alpha s_\beta s_\alpha$ is the reflection with root $s_\alpha \beta$, so 
$R$ is $W$-invariant, $WR = R$.  

A set $\Phi$ of roots is called \emph{simple} if every root $\beta \in R$ can be written 
$\beta = \sum_{\alpha \in \Phi} c_\alpha \alpha$, where the coefficients $c_\alpha$ are all of the
same sign (which may of course depend on $\beta$), or zero.   Simple sets of roots exist, and are 
bases for the canonical representation space $\E^n$; the associated reflections $s_\alpha$ are a minimal set of generators for $W$. 
The complement of the union of the mirrors of the reflections in $R$ is a dense subset of Euclidean 
space,  and we call its connected components \emph{open Weyl chambers} and their closures 
\emph{Weyl chambers}.\footnote{The terminology is not uniform in the literature; sometimes ``Weyl chamber" is 
used for the open Weyl chambers, and ``closed Weyl chamber" for their closures.}   The Weyl chambers are regular cones with simplicial 
base, and are fundamental regions for the group action.   The elements of the basis dual to the simple roots (when the roots are normalized in a particular way) are called the \emph{fundamental weights}.  The cone generated by the fundamental weights
is a Weyl chamber, called the \emph{positive Weyl chamber} $C_+$: each fundamental weight generates an extremal ray of $C_+$, and each extremal ray of $C_+$ is generated by a fundamental weight.

For a group generated by a finite set of linear reflections in $\E^n$ to be finite, obviously the product 
$s_\alpha  s_\beta$ of each pair of reflections must 
have finite order $n_{\alpha \beta}$.  (It is a nontrivial fact that this is sufficient.)    
It follows from this and the fact that the product of two reflections is a rotation by twice the angle between them, that the angles between $\alpha$ and $\beta$ must be rational multiples $k/n$ of 
$\pi$.  For $\alpha, \beta$ in a simple set $\Phi$, we will have $k=n_{\alpha \beta} - 1$, i.e. the angles are $\pi - \pi/n_{\alpha \beta}$, although we still say the lines generated by the roots have angle $\pi/n_{\alpha \beta}$. 
The Coxeter diagram of the group is determined by these angles for a simple set of roots: its nodes correspond to the simple roots and edges between nodes correspond to the angles other than $\pi/2$ between the lines generated by roots---these edges are labeled by the integer $n_{\alpha \beta}$ 
(less commonly, $n_{\alpha \beta}$ lines are used to link the pair of nodes, as in a Dynkin diagram).  
Usually the label  $n_{\alpha \beta}=3$ is omitted, so the angle $\pi/3$ is represented by an unlabeled edge; there is no
edge between nodes corresponding to pairs of roots at angle $\pi/2$.\footnote{Coxeter diagrams are also used to specify not-necessarily-finite, but discrete, groups generated by a finite number of reflections in finite-dimensional real affine space, and further groups sharing algebraic properties with these---see e.g. \cite{HumphreysCoxeterGroups}.  In this case, the labels on edges 
are drawn from the set $\{4,5,6,...\} \union \{\infty\}$, the label indicating the order of the product of the two reflections associated with the linked pair of nodes, still with order $3$ unlabeled and no edge for order $2$.  The full set of groups thus specified by Coxeter diagrams are known as \emph{Coxeter groups}.  For finite reflection groups, only the integer labels appear.}   It turns out that the relations $(s_\alpha s_\beta)^{n_{\alpha \beta}}=1$ between pairs of generators encoded in such a diagram suffice to uniquely determine a finite reflection group 
(this nontrivial theorem immediately implies the nontrivial fact mentioned above).
  
For the Weyl groups of Lie groups, which includes the reflection groups relevant to the classification 
of nonpolytopal regular
convex bodies, only the labels $4$ and $6$ appear, reflecting the fact that in this case the angles between lines through the simple roots are limited to $\pi/2$, $\pi/3$, $\pi/4$, and $\pi/6$.  In this case the Coxeter diagram, in the version with angles represented by edge multiplicities, is just the Dynkin diagram with the information about root lengths (associated with the different normalization usually used in this case) omitted.    

Returning to the general case, the point identified by marking a node is, up to (strictly) positive scalar multiples, the fundamental weight dual to the marked root.  As mentioned above, Coxeter showed that the regular polytopes arise as convex hulls of 
orbits (i.e. the orbitopes) of particular fundamental weights; 
every such polytope arises as the orbitope of the weight specified by marking an end node (node connected to only one other node) in some Coxeter diagram, although not all end nodes have such polytopes as orbitopes, and some non-end nodes give rise to regular polytopes.  To classify non-polytopal regular convex bodies, Madden and Robertson needed only to consider Coxeter's classification of regular orbitopes for the case of finite reflection 
groups that are Weyl groups of irreducible symmetric spaces, 
cf. the last two paragraphs on p. 366 of \cite{MaddenRobertson}.

\begin{theorem}[Madden-Robertson classification of regular convex bodies \cite{MaddenRobertson}]
\label{theorem: Madden-Robertson classification}
\mbox{}
\begin{enumerate}
\item 
Let $G/K$ be an irreducible noncompact symmetric space, with the compact group $K$ connected, 
and let $K \curvearrowright \fp$ be the isotropy representation of $K$, i.e. 
$\fg = \fk + \fp$ is the Cartan decomposition of $\g := \flie \, G$ and $K \curvearrowright \fp$ is the restriction to $K$ of the corestriction to $\fp$ of the adjoint action $G \curvearrowright \fg$.  
Let $\fa$ be a maximal abelian 
subspace (also called Cartan subspace) of $\fp$ and let $v \in \fa$ be such that $P := \conv{W_K.v} \subset \fa$ is a regular polytope, 
where $W_K := K_\fa/K^\fa$ is the Weyl group of $K$.  Then $B:= K.P = \conv{K.v}$ is a nonpolytopal 
regular convex body, $P$ is its Farran-Robertson polytope $\pi(B)$, and $P = \fa \intersect B$  .  
$B$ is completely determined, up to affine isomorphism, by the affine isomorphism class of $\pi(B)$ and the 
symmetric space.
\item 
Conversely, for every regular convex body $B$ that is not a polytope, there exists an 
irreducible noncompact symmetric space $G/K$ such that $B$ is an orbitope $\conv{K.v}$
in the isotropy representation $K \curvearrowright \fp$ of the compact connected Lie group $K$.
Its Farran-Robertson polytope $\pi(B)$ is $\fa \intersect B$, and is a $W_K$-orbitope.  
\item 
In the above items $v$ may be taken to be any strictly positive multiple of one of a certain set of 
fundamental weights in $\fa$.  Any weight $w$ for which $\conv{W_K.w}$ is a regular 
polytope may be chosen.  These
weights were classified by Coxeter: fundamental weights are specified by marking a node of 
the Coxeter diagram, and Coxeter indicated which marked nodes correspond to weights giving
rise to regular polytopes.  
\item
The list of irreducible noncompact symmetric space representations presented as $G/K$ for connected
$K$, together with information on their
dimensions, ranks, and the regular polytopes that occur as Weyl orbitopes in Cartan subspaces, is
given in Tables \ref{table: classical noncompact ss reps}, \ref{table: exceptional noncompact ss reps} and
\ref{table: ss reps from complex groups}, which except for their last column are
essentially Tables 2,3, and 4 of \cite{MaddenRobertson}.
\end{enumerate}
\end{theorem}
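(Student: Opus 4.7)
My plan is to deduce both directions of the classification from three ingredients already assembled in the paper: the polar-representation structure for regular convex bodies from Proposition \ref{prop: polytope as Weyl orbit}, the bijection between maximal flags of $B$ and those of its Farran-Robertson polytope from Proposition \ref{prop: flags and faces of B and P}, and Dadok's Theorem \ref{theorem: associated symmetric space representation} reducing irreducible polar representations to isotropy representations of noncompact symmetric spaces. Coxeter's classification of the regular orbitopes of finite reflection groups (given by marking a node of the associated Coxeter/Dynkin diagram), together with Cartan's classification of irreducible noncompact symmetric spaces, will then supply the enumeration of Tables \ref{table: classical noncompact ss reps}--\ref{table: ss reps from complex groups}.

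For item 1, I would fix the data $(G/K,\fa,v)$ and set $B := \conv{K.v}$. Since $W_K = K_\fa/K^\fa$ lifts to a subgroup of $K$ stabilizing $\fa$, we have $W_K.v \subseteq K.v$, so $P \subseteq \fa \cap B$. The substantive step is the reverse inclusion, which I would obtain from Kostant's linear convexity theorem applied to the isotropy representation of $G/K$: this identifies the orthogonal projection $\pi_\fa(K.v)$ with $\conv{W_K.v} = P$, so any point of $\fa \cap B$, being $\pi_\fa$-fixed and contained in $\pi_\fa(B) = \conv{\pi_\fa(K.v)} = P$, lies in $P$. Having identified $\pi(B) = P$, Proposition \ref{prop: flags and faces of B and P} transfers transitivity of $W_K$ on maximal flags of the regular polytope $P$ to transitivity of $K$ on maximal flags of $B$, so $B$ is a regular convex body; it is nonpolytopal because the orbit $K.v$ is infinite whenever $K$ has positive-dimensional Lie algebra. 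Uniqueness up to affine isomorphism follows because $(G/K,\pi(B))$ determines $v$ up to the $W_K$-action and positive dilation.

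For item 2, starting with a nonpolytopal regular body $B$ and its symmetry group $K := \aut B$, the group $K$ must be positive-dimensional (else $B$ would be the convex hull of a finite set, hence a polytope). Proposition \ref{prop: polytope as Weyl orbit} then provides an irreducible noncompact symmetric space $H/G_0$ with $V \iso \fp$, and a maximal abelian $\fa \subseteq \fp$ serving simultaneously as Cartan subspace and Farran-Robertson section, so $\pi(B) = \fa \cap B$. Theorem \ref{theorem: Farran-Robertson polytope is regular} shows $\pi(B)$ is a regular polytope whose symmetry group acts as $W_K$ in its canonical representation, so by Coxeter's classification $\pi(B) = \conv{W_K.v}$ for some fundamental weight $v$ whose marked node yields a regular orbitope; this recovers the recipe of item 1. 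Items 3 and 4 are then bookkeeping from Coxeter's enumeration of admissible marked nodes and Cartan's classification of irreducible noncompact symmetric spaces.

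The main obstacle is the Kostant-type identity $\pi_\fa(K.v) = \conv{W_K.v}$, on which the crucial equality $\pi(B) = P$ rests in both directions; for symmetric space isotropy representations it is a classical result, but its applicability in our setting depends essentially on the identification, carried out in item 4 of Proposition \ref{prop: polytope as Weyl orbit}, of Dadok's Cartan subspace with the Farran-Robertson section of $B$. A secondary subtlety is verifying in the converse that the $W_K$ produced by Dadok's theorem is the same finite reflection group that appears as the symmetry group of the Farran-Robertson polytope in Theorem \ref{theorem: Farran-Robertson polytope is regular}; this follows once the two Cartan subspaces are identified, since $W_K \iso K_\fa/K^\fa$ is intrinsic to $\fa$.
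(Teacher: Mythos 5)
Your overall architecture---irreducibility and polarity of the symmetry representation, Dadok's reduction (Theorem \ref{theorem: associated symmetric space representation}) to symmetric space isotropy representations, identification of the Farran-Robertson polytope with a Weyl orbitope in a Cartan subspace, and Coxeter's enumeration of the regular Weyl orbitopes---is exactly the route the paper sketches and attributes to \cite{MaddenRobertson}; the paper does not reprove this theorem, it quotes it. Your use of Kostant's linear convexity theorem to obtain $\fa \intersect \conv{K.v} = \conv{W_K.v}$ is a legitimate and rather clean way to pin down $\pi(B)=P$ once the representation has been identified with a symmetric space isotropy representation, and the chain $\fa \intersect B \subseteq \pi_\fa(B) = \conv{\pi_\fa(K.v)} = P \subseteq \fa \intersect B$ is correct.

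The genuine gap is in your item 1, where you ``transfer transitivity of $W_K$ on maximal flags of $P$ to transitivity of $K$ on maximal flags of $B$'' by invoking Proposition \ref{prop: flags and faces of B and P}. That proposition is stated under the hypothesis that $B$ is \emph{already} a regular convex body with Farran-Robertson polytope $\pi(B)$---which is precisely the conclusion you are trying to establish, so the argument is circular as written. What you need at this point is a face correspondence for orbitopes in polar representations that does not presuppose regularity: every face of $\conv{K.v}$ is $K$-conjugate to one of the form $K^{c(F)}.F$ for $F$ a face of $\conv{W_K.v}$, with maximal chains corresponding. Such a statement is available (it is the content of the results on polar orbitopes in \cite{BiliottiGhigiHeinznerPolarOrbitopes, BiliottiGhigiHeinznerInvariantPolar}, which the paper cites only in a footnote, or of the corresponding lemmas in \cite{MaddenRobertson} itself), but it is a different statement from the one you cite, and without it item 1 does not close. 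A secondary soft spot: your uniqueness claim that $(G/K,\pi(B))$ ``determines $v$ up to the $W_K$-action and positive dilation'' is not literally true---in type $A_{n-1}$ the first and last fundamental weights yield affinely isomorphic simplices yet are not $W_K$-conjugate up to dilation, being exchanged only by a diagram automorphism---so the ``completely determined'' clause needs the separate argument that such diagram-symmetric choices produce affinely isomorphic bodies.
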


Note that the
same regular convex body may appear more than once in the tables describing the classification.    
These occurrences include the coincidences between noncompact irreducible symmetric
  spaces already noted by Cartan (cf. \cite{HelgasonBook}, pp. 519-520), which are  are finite in
  number, plus some cases, including some infinite series, where the same invariant regular convex body $B$ appears in different
symmetric spaces.  For the spectral strongly symmetric bodies, the coincidences between symmetric spaces
arising from EJAs are indicated in the tables by coincidences of EJAs in the rightmost column.\footnote{For completeness we note 
a few coincidences between listed EJAs, which only happen for the polytope $\tri_1$: $Herm(2,\R) \iso \R^2 \oplus \R, 
Herm(2,\C) \iso \R^3 \oplus \R$ and $Herm(2, \H) \iso \R^5 \oplus \R$.  (Although it only occurs once in the table, we note 
that $\R^9 \oplus \R$ may be interpreted as $Herm(2,\Oct)$.)}  The only other coincidences
between spectral strongly symmetric convex bodies in the tables are the actions of proper subgroups of $SO(n)$ on 
balls $B_n$, which occur in type  AIII and CII when $p=1$.    Appendix \ref{appendix: more information on classification}
contains more detail.

\begin{sidewaystable}

\caption{Classical noncompact symmetric spaces with associated isotropy representations and Farran-Robertson polytopes (adapted from \cite{MaddenRobertson}, Table 2)}
\label{table: classical noncompact ss reps}
\[
\begin{array}{c}
\begin{array}{lcccccc} 
\hline
  \text{Type} & \text{Symmetric space } G/K & \shortstack{Rank of \\ symmetric \\ space} & \shortstack{Dimension of \\ isotropy \\ representation} & \text{Root space} & \text{Polytope} & \text{EJA} \\
 \hline 
\text{AI} & SL(n,\R)/SO(n) & n-1 & (n-1)(n+2)/2 & A_{n-1} & \tri_{n-1} & Herm(n,\R) \\
\text{AII} & SU^*(2n)/Sp(n) & n-1 & (n-1)(2n+1) & A_{n-1} & \tri_{n-1} & Herm(n,\H) \\
\text{AIII} & SU(p,q)/S(U_p \times U_q) & q & 2pq & \begin{cases} C_q & (q < p) \\ B_q &  (q=p) \end{cases} & \Box_q, \Diamond_q  & \R^2 \oplus \R ~~(p=q=1) \\
\text{BI} &  \begin{matrix} SO_0(p,q)/(SO(p) \times SO(q)) \\ p + q \text{ odd}, ~q < p \end{matrix}
& q & pq & B_q & \Box_q, \Diamond_q & \R^p \oplus \R  ~~(q=1) \\
\text{DI} & \begin{matrix} SO_0(p,q)/(SO(p) \times SO(q)) \\ p + q \text{ even} \end{matrix}
& q & pq & \begin{cases} B_q & (q < p) \\ D_q &  (q=p) \end{cases}  & \Diamond_q & \R^p \oplus \R  ~~(q=1) \\
\text{DIII} & SO^*(2n)/U(n) & \lfloor n/2 \rfloor = q & n(n-1) & 
\begin{cases} C_q & q \text{ odd } \\ BC_q & q \text{ even } \end{cases} & \Box_q, \Diamond_q & \R^2 \oplus \R ~~(q=1) \\
\text{CI} & Sp(n,\R)/U(n) & n=q & n(n+1) & C_q & \Box_q, \Diamond_q & \R^2 \oplus \R ~~(q=1) \\
\text{CII} & Sp(p,q)/(Sp(p) \times Sp(q)) & q & 4pq & \begin{cases} C_q & (q=p) \\  BC_q & (q < p) \end{cases} & \Box_q, \Diamond_q & \R^4 \oplus \R ~~(p=q=1) \\
\hline 
\end{array} \\
\\
\end{array}
\]

\end{sidewaystable}

\begin{table}
\caption{Exceptional noncompact symmetric spaces with associated isotropy representations and Farran-Robertson polytopes (adapted from \cite{MaddenRobertson}, Table 3)}
\label{table: exceptional noncompact ss reps}
\[
\begin{array}{c}
\begin{array}{llccccc} 
\hline
  \text{Type} & \begin{matrix} \text{Symmetric space }G/K \\ \text{presented by } \fg, ~~\fk  \\\end{matrix} & \shortstack{Rank of \\ symmetric \\ space} & \shortstack{Dimension of \\ isotropy \\ representation} & \text{Root space} & \text{Polytope} & \text{EJA} \\
 \hline 
\text{EIII} & \begin{matrix} \fe_{6(-14)}  & \fso(10) \oplus \R \end{matrix}  & 2 & 32 & B_2 & \Box_2 & \\
\text{EIV} & \begin{matrix} \fe_{6(-26)} & \ff_4  \end{matrix} & 2 & 26 & A_2 & \tri_2 & Herm(3,\Oct) \\
\text{EVI} & \begin{matrix} \fe_{7(-5)}  & ~~\fso(12) \oplus \fsu(2)  \end{matrix}  & 4 & 64 & F_4 & \text{24-cell} &  \\
\text{EVII} & \begin{matrix} \fe_{7(-25)}  & \fe_6 \oplus \R  \end{matrix} & 3 & 54 & C_3 & \Box_3, \Diamond_3 & \\
\text{EIX} & \begin{matrix} \fe_{8(-24)}    & \fe_7 \oplus \fsu(2) \end{matrix}  & 4 & 112 & F_4 & \text{24-cell} &  \\
\text{FI} & \begin{matrix} \ff_{4(4)}    & ~~~~\fsp(3) \oplus \fsu(2)   \end{matrix} & 4 & 28 & F_4 & \text{24-cell} &  \\
\text{FII} & \begin{matrix} \ff_{4(-20)}  & \fso(9)   \end{matrix} & 1 & 16 & A_1 & \tri_1 &  \\
\text{G} & \begin{matrix} \fg_{2(2)} &   ~~ ~\fsu(2) \oplus \fsu(2)  \end{matrix}  & 2 & 8 & G_2 & \text{hexagon} & \\
\hline 
\end{array} \\
\\
\end{array}\]
\end{table}

\begin{table}
\caption{Noncompact symmetric spaces arising as $K^{\C}/K$ for simple $K$, with associated isotropy representations and Farran-Robertson polytopes (\cite{MaddenRobertson}, Table 4)}
\label{table: ss reps from complex groups}
\[
\begin{array}{c}
\begin{array}{lcccc} 
\hline
  \begin{matrix}\text{Type and} \\ \text{root space} \end{matrix} 
& \text{Symmetric space} & 
\begin{matrix}\text{Dimension of} \\ \text{isotropy} \\ \text{representation} \end{matrix} & \text{Polytope} & \text{EJA}  \\
 \hline 
A_n (n \ge 1) & SL(n+1, \C)/SU(n+1) & n(n+2) & \tri_n & Herm(n,\C) \\
B_n (n \ge 2) & SO(2n+1, \C)/SO(2n+1) & n(2n+1) & \Box_n, \Diamond_n &  \\
C_n (n \ge 3) & Sp(n, \C)/Sp(n) & n(2n+1)  & \Box_n, \Diamond_n &  \\
D_n (n \ge 4) & SO(2n, \C) / SO(2n) & n(2n-1) & \Diamond_n & \\
F_4 & F_4^{\C}/F_4 & 52 & \text{24-cell} & \\
G_2 & G_2^{\C}/G_2 & 14 & \text{hexagon} & \\
\hline 
\end{array} \\
\\
\end{array}\]
\end{table}

\section{Classification of strongly symmetric spectral convex compact sets}
\label{sec: main result}

In this section we apply the theory of the preceding sections to prove
our main result, which is the ``only if'' direction of Theorem
\ref{theorem: main}: that strongly symmetric spectral convex sets are
normalized EJA state spaces or simplices.  We proceed by way of
several intermediate propositions.

\begin{proposition}\label{prop: strongly symmetric spectral implies regular}  
Let $\Omega$ be a strongly symmetric spectral convex compact set.  
Then $\Omega$ is regular.
\end{proposition}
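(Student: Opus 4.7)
The plan is to reduce transitive action on maximal flags to the transitive action on maximal ordered frames guaranteed by strong symmetry. Because every face of $\Omega$ is exposed (item \ref{item: faces exposed} of Proposition \ref{prop: consequences}), a flag in the sense of Definition \ref{def: flag} is simply a nonempty chain in the face lattice; write $n$ for the cardinality of a maximal frame.

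First I would characterize the maximal flags. Item \ref{item: frames and faces} of Proposition \ref{prop: consequences} supplies a rank function $|F|$ on faces that is strictly monotone under proper inclusion, while item \ref{item: frame extensions and maximal frames} supplies the relative frame-extension property and shows that $\Omega$ itself is the face generated by any maximal frame. Combining these, a flag $F_1 \subset \cdots \subset F_r$ should be maximal iff $r=n$, $F_r = \Omega$, and $|F_i|=i$ for all $i$: the top must be $\Omega$ (else we could append $\Omega$); the bottom must have rank $1$ (else we could prepend $\{\omega\}$ for any extreme point $\omega$ of $F_1$, which is itself an extreme point of $\Omega$ by the rank property); and if $|F_{i+1}|-|F_i| \ge 2$, then extending a generating frame for $F_i$ to one for $F_{i+1}$ and joining one additional element yields a face strictly between them.

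Next I would show that every maximal flag $\Phi = (F_1,\ldots,F_n)$ is generated by some maximal frame, meaning there exist $\omega_1,\ldots,\omega_n$ with $F_i = \omega_1 \join \cdots \join \omega_i$ for every $i$. This follows by induction on $i$: start with $\omega_1$ such that $\{\omega_1\} = F_1$, and at each step apply relative frame extension to enlarge the generating frame of $F_i$ to a frame for $F_{i+1}$; since the rank jumps by exactly one, only a single state $\omega_{i+1}$ is appended.

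With these two ingredients the transitivity step is immediate. Given maximal flags $\Phi$ and $\Phi'$, produce generating maximal frames $(\omega_i)$ and $(\omega'_i)$ as above, and apply strong symmetry to obtain $g \in \aut{\Omega}$ with $g\omega_i = \omega'_i$ for all $i$. Since $g$ is an affine automorphism of $\Omega$, it permutes faces and satisfies $g(A \join B) = gA \join gB$ (as $\join$ is the face generated by a union), hence $gF_i = F'_i$ for each $i$ and $g\Phi = \Phi'$. The hard part will be the first step, namely pinning down the maximal flags: one has to simultaneously exclude skipped ranks and secure the endpoints $F_1$ of rank $1$ and $F_n = \Omega$, which requires pairing the rank strictness of item \ref{item: frames and faces} with the frame-extension machinery of item \ref{item: frame extensions and maximal frames}.
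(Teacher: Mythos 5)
Your proposal is correct and follows essentially the same route as the paper: the paper's Lemma~\ref{lemma: flags and frames} establishes exactly the bijection you describe between maximal frames and maximal flags (using the rank monotonicity of item~\ref{item: frames and faces} to rule out skipped ranks and the relative frame-extension property of item~\ref{item: frame extensions and maximal frames} to append one state per step), after which transitivity on maximal flags follows from strong symmetry just as you argue. Your preliminary observation that item~\ref{item: faces exposed} lets one treat flags as arbitrary chains of nonempty faces is a correct and worthwhile point of care that the paper leaves implicit.
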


In order to prove this proposition we first establish: 
\begin{lemma}\label{lemma: flags and frames}
Let $\Omega$ be a strongly symmetric spectral convex set.
Let $\omega_1,...,\omega_r$ be a maximal frame in $\Omega$.  The
sequence 
\beq\label{eq: faces generated by subsequences} F_i =
\bigvee_{1 \le j \le i} \{\omega_j\}, ~~ i \in \{1,...,r\} \eeq 
is a
maximal flag.  Conversely, let $(F_1,...,F_r)$ be a maximal
flag of $\Omega$. Then there exists a maximal frame
$\omega_1,\omega_2,....,\omega_{r}$ such that (\ref{eq: faces generated by
  subsequences}) holds.
\end{lemma}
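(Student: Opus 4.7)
The plan is to use the consequences of strong symmetry and spectrality collected in Proposition \ref{prop: consequences}, especially item \ref{item: frames and faces} (every face is generated by a frame, and rank is a well-defined, strictly monotone function of the face) and items \ref{item: frame extensions and maximal frames}--\ref{item: orthomodularity} (frames extend to maximal frames; frames for a subface extend to frames for any face containing it). The key observation is that the ``rank" function $|\cdot|$ on the face lattice is strictly monotone, integer-valued, and matches the cardinality of any frame generating a face, so consecutive faces in a maximal flag must differ in rank by exactly $1$.

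For the forward direction, I would fix a maximal frame $\omega_1,\ldots,\omega_r$ and set $F_i := \omega_1 \join \cdots \join \omega_i$. Each $F_i$ is a face by construction, and by item \ref{item: frames and faces}, $|F_i| = i$, whence the inclusions $F_i \subsetneq F_{i+1}$ are strict and the $F_i$ are distinct. Since the face generated by a maximal frame is $\Omega$ (item \ref{item: frame extensions and maximal frames}), $F_r = \Omega$. To establish maximality of this flag, I would rule out insertion and extension: no face $G$ with $F_i \subsetneq G \subsetneq F_{i+1}$ can exist because its rank would have to satisfy $i < |G| < i+1$; the flag cannot be extended above $F_r = \Omega$ since $\Omega$ is the top of the face lattice; and it cannot be extended below $F_1 = \{\omega_1\}$ since any face strictly contained in a single extreme point is empty and so excluded from flags by Definition \ref{def: flag}.

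For the converse, I would start from a maximal flag $(F_1,\ldots,F_r)$ and argue: (a) $F_r = \Omega$, otherwise we could append $\Omega$ as an additional face; (b) $F_1$ is a minimal nonempty face, hence a single extreme point $\{\omega_1\}$, since in finite dimension every nonempty face contains an extreme point, which would otherwise give a strictly smaller proper face to prepend; (c) for each $i$, $|F_{i+1}| - |F_i| = 1$. Step (c) is the crucial maximality consequence: if $|F_{i+1}| \ge |F_i| + 2$, I would take a frame for $F_i$, extend it to a frame for $F_{i+1}$ by item \ref{item: frame extensions and maximal frames}, and insert the face generated by the first $|F_i|+1$ of these states, strictly between $F_i$ and $F_{i+1}$ by the rank argument, contradicting maximality. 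With (c) in hand, I would build the maximal frame inductively: having produced $\omega_1,\ldots,\omega_i$ with $F_i = \omega_1 \join \cdots \join \omega_i$, apply the relative frame extension property (item \ref{item: frame extensions and maximal frames}) to extend this frame for $F_i$ by a single state $\omega_{i+1}$ to a frame for $F_{i+1}$, so that $F_{i+1} = \omega_1 \join \cdots \join \omega_{i+1}$ as required.

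The main subtleties I anticipate are bookkeeping ones rather than substantive obstacles: being careful that ``flag" as defined here permits $\Omega$ itself and excludes $\emptyset$ (so the end conditions $F_1 = \{\omega_1\}$ and $F_r = \Omega$ come out right), and being explicit that the rank-increment-by-one step really does rely on maximality rather than just on spectrality (a non-maximal flag could certainly skip ranks). Everything else is a direct application of the face-lattice structure already recorded in Proposition \ref{prop: consequences}.
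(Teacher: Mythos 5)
Your proposal is correct and follows essentially the same route as the paper's own proof: the forward direction via the strict monotonicity of rank ruling out insertions and extensions, and the converse via the relative frame extension property together with flag maximality forcing each rank increment to be exactly one. The only cosmetic difference is that you isolate the rank-increment-by-one claim before building the frame, whereas the paper establishes it by directly bounding the cardinality of the appended frame; the substance is identical.
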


In other words, the formula 
(\ref{eq: faces generated by subsequences}) 
gives a bijection between maximal frames in $\Omega$ and 
maximal flags of $\Omega$.

\begin{proof}
Let $X=[\omega_1,...,\omega_{r}]$ be a maximal frame.  It follows from
item \ref{item: frames and faces} of Proposition \ref{prop:
  consequences} that the initial segments of $X$ generate a sequence
of faces, the $F_i$ of (\ref{eq: faces generated by subsequences}),
each properly contained in the next.  This is 
a flag, which we will call
$\Phi_X$.  In this sequence, the rank $|F_i|$ of $F_i$ is $i$.  Suppose this
flag is not maximal.  Then it can be enlarged, either by extending it
before $F_1$, or by extending it after $F_r$, or by inserting some
face $G$ with $F_i \subsetneq G \subsetneq F_{i+1}$.  It cannot be
extended before $F_1 = \{\omega_1\}$, because the only face below the
pure state $\omega_1$ is the improper face $\emptyset$.  It must have
$F_r = \Omega$ by Proposition \ref{prop: consequences} (\ref{item: frame extensions and
  maximal frames}), so it cannot be extended beyond $F_r$. 
So there
must be an $i \in \{1,...,r\}$ and a face $G$ such that $F_i \subsetneq
G \subsetneq F_{i+1}$.
By Proposition \ref{prop: consequences} (\ref{item: frames and faces}) $F_i \subsetneq G
\subsetneq F_{i+1}$ implies $|F_i| < |G| < |F_{i+1}|$, which
contradicts the fact, observed above, that $|F_i|=i$ and $|F_{i+1}| =
i+1$ by construction.  Since every way of extending the flag $\Phi_X$
is inconsistent with the maximality of the frame $X$, $\Phi_X$ is 
a maximal flag.

Conversely, suppose $\Phi= \{F_1,...,F_r\}$ is a maximal flag.  By
Proposition \ref{prop: consequences} (\ref{item: frames and faces}) each $F_i$ is the join of (the
singletons corresponding to) a frame, whose cardinality is $|F_i|$.
We will show (``Claim 1'') that $F_1 = \{\omega_1\}$ for some extremal
point $\omega_1$, and (``Claim 2'') that for each $i \in \{2,...,r\}$,
there exists an extremal $\omega_i$, distinguishable from every 
state in the frame 
$F_{i-1}$, such that $F_{i} = F_{i-1} \join \omega_i$.  It follows
from the associativity of join that (\ref{eq: faces generated by
  subsequences}) holds for each $i$, and since $F_r = \Omega$ for a
maximal flag, $\omega_1,...,\omega_r$ generates $\Omega$.  
Since $\Omega$ is generated by a maximal frame, and by
Proposition \ref{prop: consequences}(\ref{item: frames and faces})
all frames generating the same face have the same cardinality, 
$\omega_1,...,\omega_r$ has the same cardinality as a maximal frame,
whence it \emph{is} a maximal frame.

To show Claim 2 we use the fact, which is part of Prop.  \ref{prop:
  consequences}(\ref{item: frame extensions and maximal frames}), that
if $F \subsetneq G$, any frame for $F$ extends to a frame for $G$ 
by adjoining a frame for $F' \meet G$.
Consider $F = F_{i-1}, G = F_{i}$, for $i \in \{2,...,r\}$.  The frame
for the face $F'_{i-1} \meet F_{i}$, whose join with $F_{i-1}$ is
$F_{i}$, is nonempty because $F$'s containment in $G$ is
strict.\footnote{Were $F' \meet G = 0$ (i.e. $\emptyset$) then we'd
  have $(F' \meet G) \join F = F$, while orthomodularity says $(F'
  \meet G) \join F = G$.}.  Say it is $\Sigma = [\sigma_1, ...,
  \sigma_m]$, for $m \ge 1$.  We show that $m=1$.  If $m>1$, then we
can extend the flag $\Phi$ to a flag $\tilde{\Phi}$ defined by
$\tilde{F_j} = F_j$ for $j \in \{1,...,i-1\}$, $\tilde{F}_{j} :=
\tilde{F}_{j-1} \join \sigma_{j-i+1}$ for $j \in \{i,...,i+m-1\}$,
and $\tilde{F}_j := F_{j-m+1}$ for $j \in \{i + m,...,r\}$.  For
$m>1$, $\Phi$ is a proper subflag of $\tilde{\Phi}$, contradicting
$\Phi$'s maximality.  So we must have $m=1$, and $\tilde{\Phi} =
\Phi$.

To show Claim 1, that $F_1 = \{\omega_1\}$ for some extremal
$\omega_1$, we use essentially the same argument: there is a frame
$\eta_1,..,\eta_{|F_1|}$ for $F_1$, nonempty because $F_1 \ne
\emptyset$, and if $|F_1| \ne 1$, then we can extend the flag by
prefixing it with the nonempty sequence of subfaces $[H_i := \bigvee_{k
  \in 1,..i} \{ \eta_k \}]_{i \in \{1,...,|F_1|-1\}}$, generated by the 
initial segments of 
that frame.  Since the flag was maximal, the extension must be
impossible, so $|F_1|=1$, hence $F_1 = \{\eta_1\}$, with $\eta_1$
extremal.
\end{proof}

Proposition \ref{prop: strongly symmetric spectral implies regular}
follows almost immediately.
\begin{proof}[Proof of Proposition \ref{prop: strongly symmetric spectral implies regular}]
Let $\Phi_1 = \{F_1,\ldots,F_r\}$ and $\Phi_2 = \{G_1,\ldots,G_r\}$ be two
maximal flags of $\Omega$.  Then by Lemma \ref{lemma: flags and
  frames} the faces of $\Phi_1$, resp. $\Phi_2$, are the sequences of
faces generated by the initial segments of the maximal frames
$\omega_1,...,\omega_{r}$, $\eta_1,...,\eta_{r}$ respectively, defined
by the bijection (\ref{eq: faces generated by subsequences}).  By
strong symmetry, there exists $g \in \aut{\Omega}$ such that for all
$i \in \{1,...,r\}$, $g \omega_i = \eta_i$.  It follows that $g\Phi_1
= \Phi_2$.
\end{proof}

Next we determine the implications of the conjunction of strong
symmetry and spectrality for the polytope $\pi(\Omega)$ that exists
because of regularity.

\begin{proposition}\label{prop: polytope of sss body is a simplex}
Let $\Omega$ be a strongly symmetric spectral convex compact set of
rank $r$.  The polytope $\pi(\Omega)$ is a simplex with $r$ vertices, 
which constitute a maximal frame.
\end{proposition}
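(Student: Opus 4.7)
The plan is to compute $\pi(\Omega)$ directly, identifying it with the convex hull of a chosen maximal frame.  Fix a maximal frame $\omega_1,\ldots,\omega_r$ of $\Omega$, and let $F_i := \omega_1 \vee \cdots \vee \omega_i$ be the corresponding maximal flag produced by Lemma~\ref{lemma: flags and frames}.  By item~\ref{item: barycenters of faces} of Proposition~\ref{prop: consequences} the centroid of $F_i$ equals $c_i = (1/i)\sum_{j \le i}\omega_j$, and in particular $c_r = c(\Omega)$.  Inverting these triangular relations gives $\omega_i = i\, c_i - (i-1)\,c_{i-1}$ (with $c_0 := 0$), so
\[\aff\{c_1,\ldots,c_r\} = \aff\{\omega_1,\ldots,\omega_r\}.\]
This common affine subspace has dimension $r-1$ because the $\omega_i$ are orthonormal by item~\ref{item: frames orthogonal} of Proposition~\ref{prop: consequences}, and translating back from the canonical embedding (in which $c_r = 0$) identifies it with the Farran--Robertson section of Definition~\ref{def: farran-robertson section} associated to this flag.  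Consequently $\pi(\Omega) = \aff\{\omega_1,\ldots,\omega_r\} \intersect \Omega$.

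Next I would compute this intersection in the cone picture of Definition~\ref{def: cone and order unit}.  Let $(\cdot,\cdot)$ denote the self-dualizing, $\aut\Omega$-invariant inner product on $V$ supplied by item~\ref{item: perfection} of Proposition~\ref{prop: consequences}, and use it to identify $V^* \cong V$.  By item~\ref{item: frames orthogonal}, $u = \sum_{j=1}^r \omega_j$ is the order unit, so $\Omega = V_+ \intersect \{v : (u,v) = 1\}$ and $\aff\{\omega_1,\ldots,\omega_r\} = \spn\{\omega_1,\ldots,\omega_r\} \intersect \{v : (u,v) = 1\}$.  The crux is the identity
\[V_+ \intersect \spn\{\omega_1,\ldots,\omega_r\} = \Bigl\{\sum_{j=1}^r \alpha_j\, \omega_j : \alpha_j \ge 0\Bigr\},\]
whose nontrivial direction follows from the self-duality of $V_+$ together with the frame orthonormality: if $v = \sum_j \alpha_j \omega_j \in V_+$, then self-duality gives $(v,\omega_i) \ge 0$, and orthonormality gives $(v,\omega_i) = \alpha_i$, so $\alpha_i \ge 0$.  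Intersecting with $\{(u,v)=1\}$ then yields
\[\pi(\Omega) = \Bigl\{\sum_j \alpha_j \omega_j : \alpha_j \ge 0,\; \sum_j \alpha_j = 1\Bigr\} = \conv\{\omega_1,\ldots,\omega_r\},\]
an $(r-1)$-simplex whose $r$ vertices are precisely the elements of the chosen maximal frame.

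I do not anticipate a serious obstacle.  The argument distills the content of the statement into the self-duality (item~\ref{item: perfection}) and frame orthonormality (item~\ref{item: frames orthogonal}) already extracted in Proposition~\ref{prop: consequences} from strong symmetry and spectrality, and, pleasantly, avoids any case analysis within the Madden--Robertson classification.  The only mildly delicate point is keeping track of the switch between the cone picture in $V$ and the canonical embedding implicit in Definition~\ref{def: farran-robertson section}, which is handled cleanly once the centroids $c_i$ are expressed in terms of the frame.
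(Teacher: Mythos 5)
Your proposal is correct and follows essentially the same route as the paper's proof: identify the Farran--Robertson section with the span of a maximal frame via the barycenter formula for faces, then use self-duality of the cone together with orthonormality of the frame to show the intersection of $V_+$ with that span is the simplicial cone over the frame. The paper adds a short closing argument (via Proposition~\ref{prop: flags and faces of B and P}) confirming that the vertices are a maximal frame of $\Omega$ itself, but since you chose the $\omega_i$ as a maximal frame of $\Omega$ from the outset, this is already built into your construction.
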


\begin{proof}
Let the dimension of $\Omega$ be $n$; without loss of generality we view $\Omega$ as canonically embedded in
$\E^n$, i.e. assume that the barycenter of $\Omega$ is $0
\in \aff{\Omega} \iso \E^n$, viewing $\aff \Omega$ as a vector space
as described earlier, and introduce an invariant inner product.
Picking a maximal flag (it does not matter which one), $F_1,...,F_r$,
and letting $\omega_1,...,\omega_r$ be the maximal frame corresponding
to it via the bijection (\ref{eq: faces generated by subsequences}),
we consider the simplex $\triangle'$ of Farran and Robertson (defined
following Theorem \ref{theorem: fundamental region} above), which,
using the description of the barycenters of faces from Proposition
\ref{prop: consequences} (\ref{item: barycenters of faces}), is equal
to $\triangle(\omega_1, (\omega_1 + \omega_2)/2,..., (\omega_1 +
\cdots \omega_{r-1})/(r-1))$.  We next identify the linear subspace
$L$ of $\aff(\Omega)$ generated by this simplex.  Since the
barycenters just listed are manifestly linearly independent in
$\aff(\Omega)$ (as must be any set of barycenters of faces of a 
flag), $L$ is $(r-1)$-dimensional (as also noted following Theorem
\ref{theorem: fundamental region}).  It is easy to see
that $\omega_1,...,\omega_{r-1}$ are a (linear) basis for the space
$L$.

So far in this proof, we have been working in the setting where
$\aff{\Omega}$ is viewed as a Euclidean vector space $E$ with
$c(\Omega)$ as its zero, and inner product chosen so that
$\aff{\Omega}$ is a subgroup of the orthogonal group.  It is now
useful to go to the setting, described in Section \ref{sec:
  background} and used extensively in Proposition \ref{prop:
  consequences}, in which this Euclidean vector space is embedded as
an \emph{affine} subspace in the vector space $V$ of dimension one
greater, in such a way that the embedded version of $E \iso
\aff{\Omega}$ does not contain the origin of $V$, and hence the cone
$V_+ = \R_+ \Omega \subset V$ has affine dimension one more than
$\Omega$'s.  Of course the barycenter of $\Omega$, which was $0$ in $E
\iso \aff{\Omega}$, embeds as a nonzero element $c$ of $V$.  As
described just before Proposition \ref{prop: barycenter from
  group averaging}, we extend the action of $SO(E)$ to an action of
$SO(E)$ on $V$, which must fix the ray over $c$ (pointwise), and equip
$V$ with a corresponding invariant inner product such that this action
of $SO(E)\iso SO(n)$ is as a subgroup of $SO(V) \iso SO(n+1)$.  By
Proposition \ref{prop: consequences}(\ref{item: perfection})  
this invariant inner product can be chosen to be
self-dualizing for the cone $V_+$ and such that all pure states have
unit Euclidean norm, and we do so.  

Now $L$, which was a linear subspace in $E \iso \aff{\Omega}$, is
embedded in $V$ as an affine subspace but--since it is an affine subspace of 
$\aff{\Omega}$--not a linear subspace.\footnote{This was equivalent to its being linearly
  generated by $c_1,...,c_{r-1}$ when we had identified $c_r$ with $0
  \in E$.}  $L$ was the linear space spanned by 
$c_1,...,c_{r-1}$; as an affine space, it is generated by 
$c_1,...,c_{r-1}, c_r$, where $c_r= c(\Omega)$.  

The simplex $\pi(\Omega) = L \intersect \Omega$ is therefore embedded
as the intersection of the affine subspace $L \subset V$ with
$\Omega$. 
$L \subset \aff{\Omega}$ is also affinely 
generated by $\omega_1,....,\omega_r \in \aff{\Omega}$.
When viewed as vectors in $V$ (the vector space of dimension one
greater than $\aff{\Omega}$), $\omega_1,...,\omega_{r}$ are
orthonormal.  Defining $\tilde{L}$ as the linear span of
$\omega_1,...,\omega_r$ in $V$, we have that $L = \tilde{L} \intersect
\aff{\Omega}$.
From this it follows that 
$\pi(\Omega) := L \intersect \Omega$ is just the intersection of
$\Omega$ with the subspace $\tilde{L} = \lin\{\omega_1,...,\omega_r\}
\subseteq V$.  

We will show that this intersection $\tilde{L} \intersect \Omega$ is
the simplex $\triangle(\omega_1,....,\omega_r)$.  Recall (Proposition
\ref{prop: consequences}, item \ref{item: frames orthogonal}) that the
extremal points $\omega_1,...,\omega_r$ are orthonormal in $V$, and
are therefore an orthonormal basis for their span $\tilde{L}$.  By the
self-duality of $V_+$, $\omega_1,...,\omega_r$ are also on extremal rays of the
dual cone with respect to the inner product.  So everything in $V_+$
has nonnegative inner product with each of $\omega_1,..., \omega_r$.
These constraints impose in particular that $\tilde{L} \intersect V_+$
lies in the closed positive halfspaces $H_i^+ := \{x \in \tilde{L}:
(\omega_i, x) \ge 0\}$, $i \in \{1,...,r\}$ of each the hyperplanes
$H_i = \{x \in \tilde{L}: (\omega_i, x) = 0\}$ in $\tilde{L}$.  These
constraints define a polyhedral cone which (using the mutual
orthogonality of the $\omega_i$) is identical to the cone over the
simplex $\triangle(\omega_1,...,\omega_r)$.  Since $\omega_1,...,
\omega_r$ are in $V_+$ and in $\tilde{L}$, we know that $\tilde{L}
\intersect V_+$ contains this cone, and since we have just shown that
$\tilde{L} \intersect V_+$ is contained in this cone, we have that
$\tilde{L} \intersect V_+$ is equal to it, and hence that $\pi(\Omega)
:= \tilde{L} \intersect \Omega = \triangle(\omega_1,...,\omega_r)$.

Since the states $\omega_1,\ldots,\omega_r$ are the vertices of the
Farran-Robertson polytope of $\Omega$, the faces $F_1,...,F_r$ of the
polytope defined by the formula (\ref{eq: faces generated by
  subsequences}) are a maximal flag of that polytope.  Then by
Proposition \ref{prop: flags and faces of B and P}, the faces
$H_{F_i}$ of $\Omega$ are also a maximal flag of $\Omega$.  Since
$H_{F_i}$ is $G^{c(F_i)}. F$, $c(F_i)$ is the centroid of $H_{F_i}$.
Since $c(F_i) = (1/i) \sum_{j=1}^i \omega_i$, $H_{F_i}$ is the face of
$\Omega$ generated by $\omega_1,...,\omega_i$, 
i.e. 
\beq
H_{F_i} = \bigvee_{i=1}^i \omega_i,  i \in \{1,...,r\}.
\eeq 
So by Lemma \ref{lemma: flags and frames}, 
$\omega_1,...,\omega_i$ are a frame in $\Omega$, not merely 
in $\pi(\Omega)$, and $\omega_1,...,\omega_r$ is a maximal frame.
\end{proof}

We now have results sufficient to prove Theorem \ref{theorem: main}, 
which we reiterate here.

\begingroup
\def\thetheorem{\ref{theorem: main}}
\begin{theorem}
A convex compact set is strongly symmetric and spectral if and only if
it is a simplex or affinely isomorphic to the space of normalized
states of a simple Euclidean Jordan algebra.
\end{theorem}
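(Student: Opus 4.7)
The ``if'' direction has already been established: Propositions \ref{prop: Jordan state spaces are spectral} and \ref{prop: Jordan state spaces are strongly symmetric} handle simple Euclidean Jordan algebras, and Section \ref{sec: simplices} handles the simplices. So my plan is to focus on the ``only if'' direction. Let $\Omega$ be a strongly symmetric spectral convex compact set, and let $r$ be the cardinality of a maximal frame. I would first dispose of two base cases: if $r=1$ then $\Omega$ is a point (the trivial simplex/Jordan algebra), and if $r=2$ then $\Omega$ is a bit, so Theorem \ref{theorem: sss bits are balls} implies $\Omega$ is affinely a ball, i.e.\ the normalized state space of a spin factor. This lets me assume $r\ge 3$ going forward.

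Next I would split according to whether $\Omega$ is a polytope. If $\Omega$ is a polytope, then by item \ref{item: perfection} of Proposition \ref{prop: consequences} the cone $\R_+\Omega$ is perfect; hence self-dual with every face self-dual in its span, so the theorem of Barker and Foran quoted at the end of Section \ref{sec: simplices} forces $\R_+\Omega$ to be the simplicial cone $\R^n_+$, making $\Omega$ a simplex. If $\Omega$ is not a polytope, Proposition \ref{prop: strongly symmetric spectral implies regular} gives regularity, and Proposition \ref{prop: polytope of sss body is a simplex} tells us that the Farran--Robertson polytope $\pi(\Omega)$ is a simplex $\triangle_{r-1}$ with $r\ge 3$ vertices. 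I would then invoke the Madden--Robertson classification (Theorem \ref{theorem: Madden-Robertson classification}) and scan Tables \ref{table: classical noncompact ss reps}, \ref{table: exceptional noncompact ss reps}, and \ref{table: ss reps from complex groups} for entries whose Farran--Robertson polytope is $\triangle_k$ with $k\ge 2$. The hits are precisely type AI ($SL(n,\R)/SO(n)$), type AII ($SU^*(2n)/Sp(n)$), type EIV ($\mathfrak{e}_{6(-26)}/\mathfrak{f}_4$), and the complex $A_n$ series ($SL(n+1,\C)/SU(n+1)$) — exactly the rightmost-column EJAs $Herm(n,\R)$, $Herm(n,\H)$, $Herm(3,\Oct)$, $Herm(n,\C)$ of rank $\ge 3$.

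The remaining and main step is to match orbits: I need to verify that within each of these four families the specific orbit in the polar representation that Madden--Robertson produce from Coxeter's marked node is, up to an affine isomorphism, the same as the orbit $\conv{K.\omega_0}$ of a primitive idempotent $\omega_0$ described at the end of Section \ref{sec: Euclidean Jordan algebras}. By Proposition \ref{prop: polytope of sss body is a simplex}, the vertices $\omega_1,\dots,\omega_r$ of $\pi(\Omega)$ form a maximal frame of $\Omega$, and by strong symmetry $\Omega=\conv{\aut\Omega.\omega_1}$. On the EJA side, the primitive idempotents are precisely the pure states, form Jordan frames, and generate the normalized state space as the $K$-orbit hull. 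So on both sides the convex body is determined by the symmetric-space representation together with an extremal point of $\pi(\Omega)$, and the extremal points agree since they are precisely the $r$ vertices of a common simplex $\pi(\Omega)=\triangle(\omega_1,\dots,\omega_r)$ sitting in a Cartan subspace $\fa$. Combined with the last sentence of Theorem \ref{theorem: Madden-Robertson classification}(i), which says the body is determined up to affine isomorphism by the symmetric space and the affine class of $\pi$, this forces $\Omega$ to be affinely isomorphic to the corresponding normalized EJA state space.

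The step I expect to be the real obstacle is the orbit-matching in the previous paragraph: one must check, for each of the four listed entries, that the fundamental weight Coxeter marks to produce a regular simplex $\triangle_{r-1}$ as its Weyl orbitope in $\fa$ lands on (a positive multiple of) a primitive idempotent under the identification of $\fp_0$ with the traceless part of the Jordan algebra. Type AI for example requires identifying a marked endpoint of the $A_{n-1}$ Coxeter diagram with the traceless part of $\mathrm{diag}(1,0,\dots,0)\in Sym(n,\R)$, and similarly for the complex and quaternionic series and for EIV. I expect each identification to be routine but to require some care; the payoff is that once the orbits are matched, the classification is complete and Theorem \ref{theorem: main} follows.
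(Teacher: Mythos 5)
Your proposal is correct and follows essentially the same route as the paper's own proof: regularity plus a simplicial Farran--Robertson polytope, the Madden--Robertson classification, the ball case for rank two, and the uniqueness clause of the classification to identify the remaining table entries with normalized EJA state spaces. The only divergences are minor: you dispose of polytopes via Barker--Foran (which the paper offers as an alternative in Section 5, while the main proof simply notes that a regular polytope is its own Farran--Robertson polytope), and the explicit marked-node/weight matching you flag as the remaining obstacle is already subsumed by the uniqueness argument you invoke in the preceding paragraph --- the paper relegates that node-by-node verification to an appendix as supplementary detail.
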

\addtocounter{theorem}{-1}
\endgroup

\begin{proof}

The ``if'' direction, that Euclidean Jordan algebra state spaces and
simplices are strongly symmetric and spectral, was already known
\cite{BMU} based on known results about Euclidean Jordan algebras); an
explicit proof was reviewed in Sections \ref{sec: Euclidean Jordan
  algebras} and \ref{sec: simplices}.

By Propositions \ref{prop: strongly symmetric spectral implies
  regular} and \ref{prop: polytope of sss body is a simplex}, the
strongly symmetric spectral convex compact sets are all to be found
among the regular convex compact sets whose Farran-Robertson polytope
is a simplex.  Since a regular polytope is its own Farran-Robertson
polytope, the only polytopes that are strongly symmetric and spectral
are the simplices, of every dimension.
To identify the nonpolytopal strongly symmetric spectral compact
convex sets we will use the Madden-Robertson classification of
nonpolytopal regular convex bodies $B$ in $\E^n$, which is given
in Tables 
\ref{table: classical noncompact ss reps}, \ref{table: exceptional noncompact ss reps} and 
\ref{table: ss reps from complex groups} (Tables 2, 3 and 4 of \cite{MaddenRobertson}).
Because each regular compact convex set admits a canonical
embedding as a regular convex body in $\E^n$, with symmetry group
equal to its affine automorphism group (Prop.  \ref{prop: canonically
  embedded cc sets}) all regular compact convex sets appear in the
table.  Conversely all table entries correspond to regular convex
compact sets, because all regular convex bodies, considered as convex
compact sets, i.e. forgetting about the Euclidean and vector space
structure of $\E^n$, are regular
(see Proposition \ref{prop: regular bodies and sets}).

All possibilities for nonpolytopal strongly symmetric spectral
convex bodies are found among the entries of Tables 
\ref{table: classical noncompact ss reps}, \ref{table: exceptional noncompact ss reps} and 
\ref{table: ss reps from complex groups}:  they are the cases 
 whose Farran-Robertson polytope is a simplex.
For each such case, Theorem \ref{theorem: Madden-Robertson
  classification} tells us, the given data suffices to determine, up
to isomorphism, the regular convex body as the convex hull of the
orbit of a (positive scalar multiple of a) fundamental weight in
$\fa$, corresponding to one of the ends of the Coxeter diagram
associated with the Weyl group.
The entries with simplicial polytope are those for which the polytope
is explicitly indicated to be a simplex, and the cases $q=1$ of
families where the polytope is $\Box_q$ or $\Diamond_q$, since $\Box_1
= \Diamond_1 = \tri_1$, the closed line segment, which is the unique
$1$-dimensional polytope.

Theorem \ref{theorem: sss bits
  are balls}, which states that strongly symmetric spectral convex
bodies whose maximal frames have size two must be balls, 
establishes that the $\Box_1$ and $\Diamond_1$ cases in the tables
in \cite{MaddenRobertson} (as well as the $\tri_1$ cases) all have
$\Omega$ a ball of some dimension; these are Jordan-algebraic normalized
state spaces (for the spin factors).\footnote{Appendix \ref{appendix: more information on classification} 
includes a case-by-case
    examination of the $\Box_1$ and $\Diamond_1$ cases, but this is not necessary for the proof.  
It includes
the cases in which the symmetric space is that associated with a spin factor (noted in the ``EJA" column in Tables 
\ref{table: classical noncompact ss reps}, \ref{table: exceptional noncompact ss reps} and 
\ref{table: ss reps from complex groups}), as well as the infinite series of symmetric spaces of type AIII and CII 
which illustrate the fact that balls $B_d$ may have actions by affine automorphisms, transitive on the sphere $\partial_e B_d
\iso S_{d-1}$,  of proper subgroups of the obvious rotation group $SO(d)$.}

In each of the symmetric space representations in these tables, 
the regular convex body is determined, up to 
isomorphism, by the regular polytope $\pi(B)$ and the symmetric space.  
A comparison of the cases with simplicial polytopes $\tri_n$ for $n \ge 2$ with our Table 
\ref{table: EJAs} (taken from \cite{FarautKoranyi}) shows that they all correspond to 
polar representations $K \curvearrowright \fp_0$ 
of compact groups $K$ coming from simple Euclidean Jordan algebras 
$V \iso \fp = \fp_0 \oplus \R e$, where $e$ is the Jordan unit.  
These are indicated in the rightmost columns of Tables \ref{table: classical noncompact ss reps}, \ref{table: exceptional noncompact ss reps} and \ref{table: ss reps from complex groups}.
By Proposition \ref{prop: polytope of sss body is a simplex}, in the strongly symmetric spectral cases, 
$\pi(B)$ is a simplex, and as embedded in $V$, its vertices are a maximal frame.  
The maximal abelian subspaces of $V \iso \fp$ (viewed as subspaces of $\fg = \flie(\aut{V_+})$) are 
of the form $\fa_0 \oplus \R e$, where $\fa_0$ are the maximal abelian subspaces of $\fp_0$.  
In, for example, \cite{FarautKoranyi}, Proposition VI.3.3, and the discussion preceding it, 
the Peirce decomposition $\oplus_{i,j=1}^r V_{ij}$ of a simple EJA of rank $r$ is used, and it is observed that
$\fa = \oplus_i V_{ii}$ where $V_{ii} = \R c_i$, and the $c_i, ~i \in \{ 1,\ldots,r\}$ are a Jordan frame.  
With $\Omega$ defined as usual as the normalized state space embedded in $V = \fp$, with the Jordan unit $e$ 
as order unit,  
we have that $\fa \intersect \Omega$ is the simplex
generated by the $c_i$, which is affinely isomorphic to $\fa_0 \intersect \Omega_0$, where
$\Omega_0 := \Omega - e/\tr e = \Omega - e/r$ is the translation of the normalized Jordan state space into $\fa_0$.  
This exhibits the strongly symmetric compact convex set $\Omega$ as affinely isomorphic to 
the $K$-orbit $\Omega_0$ of a Farran-Robertson polytope $\Omega_0 \intersect \fa_0$, which is a simplex.  
Because according to the Madden-Robertson classification
the Farran-Robertson polytope determines, up to isomorphism, the regular convex body obtained as its
$G$-orbit in a symmetric space representation, and all regular convex bodies (up to isomorphism) are
obtained in this way, all strongly symmetric convex bodies are isomorphic to Jordan state spaces.
\end{proof}

\begin{remark}
The reader may wonder how the irreducible Riemannian symmetric
spaces $H/K$ relate to the cones of squares in simple EJAs associated with the polar
representation, since
the interiors of these cones 
are precisely the irreducible symmetric cones, meaning the irreducible
open cones\footnote{with pointed closure, i.e.  proper cones rather
  than wedges.} $V^\circ_+$ that are noncompact Riemannian symmetric spaces $\aut_0{(V^\circ_+)}/K$ 
(where $K$ is a maximal compact connected subgroup) on which $\aut{(V^\circ_+)}$ acts via isometries.  
Writing $V_+ := \overline{V^\circ_+}$, note that $\aut{V^\circ_+} = \aut{V_+}$.  The answer lies
in recalling that the (reductive) lie algebra of $G=\aut{V_+}$ has in
general a Cartan decomposition $\fg = \fk \oplus \fp$, with respect to
which $\fp$ can be identified with the Jordan algebra $V$, and then
$V_+ = \exp \fp$ (cf. e.g.  \cite{FarautKoranyi}).  However, for
simple Jordan algebras also $\fg = \fg_{ss} \oplus \R$, where $\R$
generates the uniform dilations and the semisimple part $\fg_{ss}$ is
simple, and when we further break down $\fg_{ss} = \fk \oplus \fp_0$,
we see that $\fp_0$ is the traceless\footnote{We remind the reader that trace and determinant
are defined for EJAs in general (cf. \cite{FarautKoranyi}), and coincide with the usual matrix notions 
in the matrix cases $Herm(n, \D)$.)} part of $\fp$, which goes under
$\exp$ to the component of the unit-determinant part of $\fp$ that
lies in $V_+$, which is in fact a symmetric space $H/K$ for $H\equiv
G_{ss}$, the semisimple part of $\aut{V_+}$.  
 The symmetric cone is in
fact a reducible symmetric space, the product of $H/K$ with $\R$.\footnote{Concretely, the second
factor is represented as $\R_+$, but it is equipped with the multiplicative group structure, isomorphic
(via the exponential map) to the additive group structure on $\R$.}  The
$\R$ factor gives the flat direction in the tangent space of $V_+$,
i.e. the direction along a ray from the origin.  The interior of $V_+$ is 
the cone over the irreducible symmetric space associated with the
representation.  A nice example is the Lorentz cone, in which $H/K$ is
a paraboloid inside the cone, with focus on the axis of rotational
symmetry, and is in fact an orbit of the (connected, also known as ``orthochronous")  Lorentz
group; the full cone is obtained by including dilations.  
When $V$ is not simple, everything still works roughly as before
except that now the symmetric space is a product $\times_{i=1}^k
(H_i/K_i \times \R) \iso (\times_{i=1}^k H_i / K_i) \times \R^k$ of irreducible
symmetric spaces, with
$\R^k$ reflecting the possibility of independent dilations of the
cones over the simple factors $H_i/K_i$.
\hfill $\diamondsuit$
\end{remark}

\section{Discussion}
\label{sec: discussion}

We have characterized a particular subset of the
finite-dimensional Jordan algebraic state spaces---those corresponding
to simple Euclidean Jordan algebras, and to finite products of the
one-dimensional Euclidean Jordan algebra---as those convex sets
satisfying the properties of spectrality and strong symmetry.  In
this section, we consider extensions and implications of this result.
First we discuss various known ways in which the two properties can be
supplemented with additional ones to characterize precisely the state
spaces of complex quantum theory.  Then we discuss some other
characterizations of this class, or more general classes, of
Jordan-algebraic state spaces, and how these relate to our work.
Finally we review work in the setting of general probabilistic theories that derives
strong consequences from the conjunction of spectrality and strong
symmetry, or from sets of postulates that can be shown to imply these
two properties, emphasizing the new light our result throws on this work.

\subsection{From Jordan algebra state spaces to complex quantum theory}

Characterizations of quantum state space,
whether in terms of postulates whose appeal is mathematical, physical,
informational, or some combination of these, often proceed by first
characterizing Jordan-algebraic state spaces, or some subset thereof, and then adding an additional
postulate or set of postulates that narrows things down to standard, i.e. complex, quantum
theory.  In the following two subsections we describe two important classes of such postulates.  These
can, of course, be used in conjunction with Theorem \ref{theorem: main} to characterize the irreducible complex quantum
systems. 

\subsubsection{From Jordan state spaces to complex quantum theory via relations between continuous symmetries and observables}
The important characterizations by Alfsen and Shultz (\cite{Alfsen78a}, cf. \cite{ASBook2}) 
of the state spaces of two natural classes 
of Euclidean Jordan algebras, 
the JB-algebras and the JBW-algebras, which 
are Jordan analogues of the $C^*$-algebras and the von Neumann algebras,  were 
extended by them to characterize the state spaces of $C^*$-algebras and von Neumann algebras
(each of which reduces to direct sums of the state spaces of standard complex quantum theory,
in the finite-dimensional setting), in several ways.  One of these, by Proposition 10.27 of \cite{ASBook2}, is to postulate the existence of a ``dynamical correspondence'' (\cite{ASBook2}, Definition 6.10), 
 on the JB-algebra $A$.  The dynamical correspondence determines a unique $C^*$ product on $A + iA$.
 In the special case in which the JB-algebra is a JBW-algebra,  the
dynamical correspondences on $A$ are in bijection with the Connes orientations (Theorem 6.18 of \cite{ASBook2}); the 
existence of either of these determines a unique $W^*$ product on 
$A + iA$, and the normal state space of $A$ is isomorphic to the normal state space of this von Neumann algebra.

A dynamical correspondence $\psi$ on a JB-algebra $A$ is a linear map, $\psi: a \mapsto \psi_a$, of $A$ into the set of skew order-derivations of $A$, satisfying certain properties.  It is called \emph{complete} if it is surjective (note, also, that there is no  requirement that it be injective).  The order-derivations are the elements of the Lie algebra $\faut{V_+}$ of the 
group of affine automorphisms of the positive cone of a Jordan algebra.  This is the span of two complementary subspaces,
the self-adjoint (also called symmetric) and skew-adjoint (also called skew) order derivations, which in the finite-dimensional case are the 
$-1$ and $+1$ eigenspaces, respectively, usually denoted $\fp$ and $\fk$, of the Cartan involution on $\faut{V_+}$.  The self-adjoint ones may be identified with the space of Jordan multiplication operators, $L_{a}: b \mapsto a \bullet b$.  
The skew order-derivations are precisely the generators of one-parameter groups of automorphisms of the Jordan algebra (cf. Lemma 2.81 of \cite{ASBook2}).  The Jordan automorphisms are also precisely the Jordan unit preserving automorphisms of the
cone of unnormalized states, and hence in a
manifestly self-dual representation in our finite-dimensional setting, they are also precisely the (linearized extensions of) affine automorphisms of $A$'s normalized state space (cf. 
\cite{FarautKoranyi}).   The conditions defining a dynamical correspondence are (1) that the commutator of 
the images of Jordan algebra elements $a$ and $b$ is the negative of the commutator of the corresponding Jordan multiplication operators, that is, that $[\psi_a , \psi_b] = - [L_a, L_b]$, and (2) that the image of an element annihilate that element, $\psi_a a = 0$.  

The first of these conditions requires Jordan structure for its formulation.  As  Alfsen and Shultz note, their notion of dynamical correspondence ``axiomatizes the transition $h \mapsto L_{ih}$ from the self-adjoint part of a $C^*$-algebra to the set of skew order-derivations on the algebra."   The appearance of the minus sign in the commutator when one moves between 
commutators of Jordan algebra multiplication operators (which as noted above are the selfadjoint part of $\faut{V_+}$) and the Lie bracket of generators of reversible transformations (the skew-adjoint part of $\faut{V_+}$) reflects the close relation of this transition to the existence of a complex structure
on $\faut{V_+}$.  Such a complex structure, compatible with Lie brackets
and the Cartan involution, is what Alfsen and Shultz (\cite{ASBook2}, Definition 6.8) call a \emph{Connes orientation} on a JB-algebra.  

The second condition can be rephrased as saying that the one-parameter dynamical group generated by the 
image $\psi_a$ of an observable $a$ conserves that observable, so it can easily be reformulated, in our setting, in a way that refers only to convex, and not to Jordan, structure, using the abovementioned identification of the Jordan automorphisms with the order-unit-preserving affine order-automorphisms of the cone over the effects.  

 In finite dimension, Alfsen and Shultz' results (Theorem 6.15 or Proposition 10.27 of \cite{ASBook2}) combined with the
main result of the present paper imply that 
that the conjunction of spectrality, strong symmetry, and the existence of a dynamical
correspondence, characterizes complex quantum theory and classical theory, that is, the normalized state spaces
of Jordan algebras corresponding to the Hermitian parts of full matrix algebras over $\C$, and simplices.  

\begin{proposition}
Let $\Omega$ be a finite dimensional convex compact set satisfying (1) spectrality, (2) strong symmetry, and (3) dynamical correspondence, or equivalently the existence of a Connes orientation.  Then $\Omega$ is affinely isomorphic to 
the normalized state space of the Jordan algebra $Herm(n,\C)$, i.e. the set of density matrices of a finite-dimensional quantum system, or to a simplex, i.e. the state space of a finite-dimensional classical system. 
\end{proposition}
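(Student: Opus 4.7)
The plan is to derive the proposition from Theorem \ref{theorem: main} together with the Alfsen--Shultz correspondence between dynamical correspondences on a JB-algebra $V$ and $C^*$-products on the complexification $V + iV$. The first step is immediate: by Theorem \ref{theorem: main}, the hypotheses of spectrality and strong symmetry alone imply that $\Omega$ is either a simplex---which already gives the classical alternative in the conclusion---or affinely isomorphic to the normalized state space of a simple finite-dimensional Euclidean Jordan algebra $V$. In the simplex case there is nothing further to prove, so it suffices to treat the remaining case and show that condition (3) forces $V \iso Herm(n,\C)$ for some $n \ge 1$.

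For that remaining case, I would invoke the cited Alfsen--Shultz results (Theorem 6.15 / Proposition 10.27 of \cite{ASBook2}): a JB-algebra admits a dynamical correspondence---equivalently, a Connes orientation---if and only if there is a $C^*$-algebra product on $V + iV$ whose associated symmetrized product on the self-adjoint part recovers the Jordan product of $V$. Since every finite-dimensional EJA is a JB-algebra, this equivalence applies here. One then runs through the Jordan--von Neumann--Wigner list of simple finite-dimensional EJAs---$Herm(n,\R)$, $Herm(n,\C)$, $Herm(n,\H)$, $Herm(3,\Oct)$, and the spin factors $\R^{n} \oplus \R$---and identifies which of them occur as the self-adjoint part of a finite-dimensional complex $C^*$-algebra. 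By Artin--Wedderburn, such a $C^*$-algebra is a direct sum of matrix algebras $M_{n_i}(\C)$, whose self-adjoint part is a direct sum of $Herm(n_i,\C)$'s; the only simple summands are the $Herm(n,\C)$. Thus the exceptional algebra $Herm(3,\Oct)$ and the non-complex matrix families $Herm(n,\R)$, $Herm(n,\H)$ for $n \ge 3$ are all excluded by condition (3).

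The one subtlety that requires care is the set of low-dimensional coincidences $Herm(2,\R) \iso \R^{2} \oplus \R$, $Herm(2,\C) \iso \R^{3} \oplus \R$, and $Herm(2,\H) \iso \R^{5} \oplus \R$: of the spin factors, precisely $\R^{3} \oplus \R$ (the Bloch ball, also known as $Herm(2,\C)$) admits a dynamical correspondence; all other spin factors do not. Once these coincidences are handled correctly, the only simple EJAs surviving all three hypotheses are the matrix algebras $Herm(n,\C)$, so $\Omega$ is either a simplex or the density-matrix state space of $Herm(n,\C)$ for some $n$, as claimed. The only nonroutine ingredient is the Alfsen--Shultz equivalence, imported from \cite{ASBook2}; the rest is bookkeeping against the JNW classification, and that bookkeeping---rather than any conceptual gap---is where care is most needed.
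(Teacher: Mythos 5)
Your proposal is correct and follows essentially the same route as the paper, which derives the proposition by combining Theorem \ref{theorem: main} with the Alfsen--Shultz equivalence (Theorem 6.15 / Proposition 10.27 of \cite{ASBook2}) between dynamical correspondences on a JB-algebra and $C^*$-products on its complexification. Your additional Artin--Wedderburn bookkeeping and the treatment of the spin-factor coincidences are correct elaborations of what the paper leaves implicit.
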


Another assumption is then needed to rule out classical
theory (i.e., the simplices).  Many natural alternatives are known, and among these we mention:
existence of a tradeoff between information gained about an unknown
state, and disturbance to that state (a result reported in \cite{Barrett07}); impossibility of universal
cloning, or of universal broadcasting \cite{BBLW2007, Barnum:2006},
the existence of a state having two different convex decompositions
into pure states  (a more or less folkloric mathematical fact that is the finite-dimensional case of Choquet's 
theorem); the lack of universal compatibility of measurements \cite{PlavalaCompatible};
nontriviality of the connected identity component of the automorphism
group of the normalized states (emphasized by Hardy \cite{Hardy2001a, Hardy2001b}).

The authors of \cite{BMU} narrowed down the class of Jordan algebras characterized there to the complex quantum
state spaces, using a principle they call \emph{energy observability}.  

\begin{definition}
\label{def: energy observability}
A normalized state space $\Omega$ is said to have
\emph{energy observability} (\cite{BMU}, Def. 30) if the Lie algebra
$\faut~{\Omega}$ of $\aut\,{\Omega}$ is nontrivial and there exists an
injective linear map $\phi$ from $\faut~{\Omega}$ to the observable
space $V^*$ of the system, such that for each $x \in \faut~{\Omega}$,
$\phi(x)$ is conserved by the one-parameter subgroup generated by $x$,
and $\phi(x)= \lambda u$ (for some $\lambda \in \R$) if and only if $x=0$. 
\end{definition}

Energy observability is closely related to dynamical correspondence, but it is formulated in the convex framework without reference to Jordan structure, incorporates nontriviality of the connected automorphism group of $\Omega$, and differs from the existence of a dynamical correspondence in other ways.
 The terminology is
motivated by the idea that a continuous one-parameter subgroup of
automorphisms is a potential dynamical time-evolution, and in quantum physics
the generator of such an evolution is a Hermitian operator $H$ (the
Hamiltonian) conserved by the evolution (identified with energy).  Here ``generator'' is meant
in the ``physicists'' sense that the evolution operator is $\omega
\mapsto e^{iHt}$ (where $i = \sqrt{-1}$).\footnote{In the usual mathematical
terminology, the generator of this evolution is instead the anti-Hermitian 
operator $iH$; then the injection from
the Lie algebra of generators of one-parameter subgroups of
automorphisms (i.e. $\flie(\aut{(\Omega)}) \equiv \faut~{\Omega})$ into
the observables is just $X \mapsto -iX$.}  The assumption that 
$\faut~{\Omega}$ is nontrivial is there because without it there is
nothing that fits the intuitive notion of energy that inspired the
definition.  (And if we were to formally extend the above definition
to that situation, energy would, logically speaking, be observable
because anything is true of the empty set.)  However, it should also
be noted that this nontriviality assumption is doing the work of ruling
out classical theory.  

\begin{proposition}
Let $\Omega$ be a finite dimensional convex compact set satisfying (1) spectrality, (2) strong symmetry, and 
(3) energy observability.  Then $\Omega$ is affinely isomorphic to 
the normalized state space of the Jordan algebra $Herm(n,\C)$, i.e. the set of density matrices of a finite-dimensional quantum system. 
\end{proposition}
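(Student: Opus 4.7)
The plan is to reduce via Theorem \ref{theorem: main} to the classification of strongly symmetric spectral state spaces, then use energy observability to eliminate all cases except $Herm(n, \C)$. By Theorem \ref{theorem: main}, $\Omega$ is affinely isomorphic either to a simplex or to the normalized state space of a simple Euclidean Jordan algebra. Simplices are ruled out immediately: for a $k$-simplex, $\aut{\tri_k} \cong S_{k+1}$ is a finite group, so $\faut~{\tri_k} = \{0\}$, contradicting the nontriviality clause of Definition \ref{def: energy observability}. The remaining task is to analyze each simple Euclidean Jordan algebra $V$.

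The general setup is to identify $V^*$ with $V$ using the self-dualizing trace inner product and to decompose $V = \R e \oplus \fp_0$, where $\fp_0$ carries an irreducible representation of $\fk := \faut~{\Omega}$ and the Jordan unit $e$ is (up to a positive scalar) the order unit $u$. Any candidate energy-observability map decomposes as $\phi = \phi_e \oplus \phi_0$; injectivity of $\phi$ together with $\phi(x) \in \R u \Leftrightarrow x = 0$ forces $\phi_0 : \fk \to \fp_0$ to be injective, and since $\fk$ fixes $e$, the conservation condition reduces to $x . \phi_0(x) = 0$ for all $x \in \fk$. For $V = Herm(n, \C)$ this is realized by $\phi_0(X) := iX$, which is an injective $\fk$-equivariant map $\fsu(n) \to Herm_0(n, \C)$ satisfying $[X, iX] = 0$, so $Herm(n, \C)$ has energy observability.

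The remaining simple EJAs must be excluded. For $Herm(n, \H)$ with $n \ge 2$, for $Herm(3, \Oct)$, and for the spin factors $\R^m \oplus \R$ with $m \ne 3$, a direct dimension count yields $\dim \fk > \dim \fp_0$, precluding injectivity of $\phi_0$; the remaining spin-factor case $m = 3$ is already $Herm(2, \C)$ under the isomorphism $\fso(3) \cong \R^3$ furnished by the cross product. The main obstacle is the family $Herm(n, \R)$ for $n \ge 3$, where $\dim \fk \le \dim \fp_0$ and dimensions alone do not suffice. The plan here is to use that $\fso(n)$'s adjoint representation and its action on $Sym_0(n, \R)$ are non-isomorphic as irreducible $\fso(n)$-modules (with $\fso(4) \cong \fsu(2) \oplus \fsu(2)$ requiring separate treatment), and then to exploit the polarized form $y . \phi_0(z) + z . \phi_0(y) = 0$ of the conservation condition to eliminate non-equivariant injective maps as well. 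A model computation in $Herm(3, \R)$ illustrates the mechanism: each $\phi_0(X_i)$, with $X_i$ the standard rotation generator about the axis $e_i$, must be a scalar multiple $c_i (3 e_i e_i^T - I)$, because this is the only traceless symmetric matrix commuting with $X_i$; linearity then forces $\phi_0(X_1 + X_2)$ to be diagonal in the standard basis, while the commutant of $X_1 + X_2$ demands off-diagonal entries, so $c_1 = c_2 = c_3 = 0$ and $\phi_0 = 0$. The general-$n$ version of this argument is the technical core of the proof, and is carried out in \cite{BMU}.
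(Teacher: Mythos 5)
Your overall route is the same as the paper's: Theorem \ref{theorem: main} reduces the problem to simplices and simple EJA state spaces, the nontriviality clause in Definition \ref{def: energy observability} eliminates the simplices (the paper notes explicitly that this clause ``is doing the work of ruling out classical theory''), and the elimination of the non-complex simple Jordan algebras is, in the paper as in your proposal, ultimately delegated to \cite{BMU}; the paper gives no proof beyond this combination of citations, so your filling in of the easier cases is added detail rather than a different approach. Your verification that $Herm(n,\C)$ does have energy observability via $\phi_0(X)=iX$, the reduction of the conservation condition to $x.\phi_0(x)=0$ on the $\fp_0$-component, and the model computation for $Herm(3,\R)$ are all correct.

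There is, however, one concrete false step: the claim that for the spin factors $\R^m\oplus\R$ with $m\ne 3$ a dimension count gives $\dim\fk>\dim\fp_0$. For $m=2$ one has $\dim\fso(2)=1<2=\dim\fp_0$, so injectivity of $\phi_0$ is not obstructed by dimensions; and this is a case that genuinely needs excluding, since $\R^2\oplus\R\cong Herm(2,\R)$ is a simple EJA whose state space (the disc) is spectral and strongly symmetric, and it is not covered by your separate treatment of $Herm(n,\R)$, which you restrict to $n\ge 3$. The gap is easily closed---the generator $J$ of $\fso(2)$ acts invertibly on $\R^2$, so $J.\phi_0(J)=0$ forces $\phi_0(J)=0$, contradicting the requirement that $\phi(x)\in\R u$ only for $x=0$---but as written the disc falls through the cracks. (The degenerate case $m=1$ gives a non-simple algebra and is already absorbed into the simplex case.)
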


The relation of energy observability to dynamical correspondence is,
roughly, that a dynamical correspondence is a linear map (but not necessarily an injection\footnote{In \cite{BMU} (on p. 29)  dynamical correspondences are mistakenly described as injections; they are injective in the simple case, which is the
one under consideration there, but not in general.}) 
of observables into the Lie algebra, i.e. in the opposite direction from the map required by energy observability, and also that dynamical 
correspondence imposes some
conditions of compatibility with the Jordan structure, while the
notion of energy observability uses only the convex structure of the state 
space and does not need Jordan structure for its
definition.  The conservation conditions are essentially the same for the two notions (modulo the reversal of direction of the map).  The possibility of noninjectivity of dynamical correspondences is needed in order to allow some 
non-simple Jordan algebras to have dynamical correspondences: for example, a finite product of one-dimensional Jordan algebras---which corresponds to a finite-dimensional classical system, has trivial (zero-dimensional) $\faut ~\Omega$, 
but may still have a dynamical correspondence, because all observables can map to the unique element $0$ of $\faut ~\Omega$; more generally, for a product of nontrivial Jordan algebras, observables that are linear combinations of the
Jordan units of the simple factors will map to zero.     

We have already mentioned the close relation of the existence
of a dynamical correspondence to  Connes' condition of the existence of an \emph{orientation}.  In 
\cite{ConnesSelfDualCones} Connes defined
an orientation in the setting of self-dual cones $V_+$ in (not necessarily finite-dimensional) Hilbert spaces.  The Lie algebra of the automorphism group of such a cone is involutive, and an orientation on such a cone was defined as a complex structure on the Lie algebra of $\faut~V_+$ compatible with this involution.     
Connes showed that the existence of an orientation characterizes the positive cones of von Neumann algebras within
the class of self-dual, facially homogeneous cones in Hilbert spaces (\cite{ConnesSelfDualCones}, Th\'eor\`eme 5.2).
Since Bellissard and Iochum \cite{BellissardIochumHSDCJA} showed that these are precisely the positive cones of JBW algebras, 
this provides a way of characterizing the von Neumann algebra 
state spaces within the class of JBW algebraic ones.  
Alfsen and Shultz explicitly transferred the definition of Connes orientation to JBW-algebraic state spaces and established, in 
Theorem 6.18 of \cite{ASBook2}, a bijection between Connes orientations in the sense 
of their Definition 6.8 (\cite{ASBook2}) and dynamical correspondences in the JBW-algebraic case.  

\subsubsection{From Jordan state spaces to complex quantum theory via local tomography}
 
A different approach to ruling out the Jordan algebraic systems other than complex quantum theory
involves introducing an appropriate notion of composite system consisting of two or more ``subsystems".     
The existence of ``tomographically local" Jordan-algebraic composites of Jordan-algebraic systems can then be used as a postulate to narrow things down to complex quantum systems.  Tomographic locality can be mathematically formulated as the requirement that the ambient vector space $V_{AB}$ spanned by the cone of unnormalized states of a composite of systems $A,B$ whose ambient vector spaces are $V_A$ and $V_B$, be the real tensor product $V_A \otimes V_B$.\footnote{The notion of ``tomography" in this context is that of determining the state of a system by making various measurements on identically prepared copies of a  system.  ``Local" tomography of a composite system is possible if one can estimate the state by making measurements on its parts, $A$ and $B$, and estimating the correlations between sufficiently many measurement results.  If $V_{AB} = V_A \otimes V_B$, then products of effects $e_A \otimes f_B$ span the dual of the state space, so determining the probabilities of a spanning set allows one to determine the components of the state in a basis.}   In \cite{BarnumWilceLocalTomography} it was shown that the existence of a locally tomographic Jordan-algebraic composite of a Jordan-algebraic system $A$ with a qubit (the lowest-dimensional nontrivial complex quantum system, whose state space is a three-dimensional ball), satisfying some other natural desiderata, implies that $A$ must be complex quantum.  However, this 	does not rule out the possibility of 
theories whose systems are spectral and strongly symmetric but in which qubits do not occur as a system type that must be composable with other systems.  

In \cite{MasanesAxiom}, Ll. Masanes and M. M\"uller formulated five postulates applicable to theories whose systems are described in the GPT framework, and showed that the only two theories satisfying them are finite-dimensional complex quantum theory and finite-dimensional classical theory.  One of these postulates is that composite systems are locally tomographic.  Their notion of theory is somewhat implicit in their arguments, rather than fully explicit, but it appears to require that for any two systems of the theory, there exists another system of the theory that is a locally tomographic composite of those systems.   Since the four postulates not referring to composite systems are satisfied by all systems with simple Jordan algebraic state spaces, and by systems whose state spaces are simplices, we can combine their result with the main result of this paper to conclude that any collection of (finite-dimensional) systems satisfying strong symmetry and spectrality, and closed under the formation of composites, must consist either entirely of complex quantum systems, or entirely of classical systems.  

That the tomographic locality of the assumed composites is necessary for these results is indicated, for example, by the constructions in \cite{BarnumGraydonWilceCCEJA} of theories in which some of the Jordan algebraic systems other than complex quantum ones can be combined to form composites that are not tomographically local; these theories even have the additional structure of dagger compact closed categories (the terminology of  \cite{Selinger} for a notion earlier defined in \cite{Abramsky-Coecke}).  In addition to the category of irreducible complex quantum systems, there is the category of real quantum systems, and
another category that includes all irreducible real and quaternionic quantum systems.  A third additional category allows the inclusion of real, quaternionic, and complex systems in a single category, at the price of a notion of composite that does not preserve irreducibility.  Not all Jordan algebraic systems occur in these constructions, though: the spin factors whose state spaces are $d$-balls for $d  \notin \{1,2,3,5\}$ do not 
occur in these categories, nor does the exceptional Jordan-algebraic system.\footnote{The $d$-balls with $d \in \{1,2,3,5\}$ are the classical, 
real quantum, complex quantum, and quaternionic quantum bits, respectively.}  The lattter, indeed, does 
not have Jordan-algebraic composites, tomographically local or not, with nonclassical Jordan-algebraic systems (\cite{BarnumGraydonWilceCCEJA}, Corollary 4.10), and Example 6.2 in \cite{BarnumGraydonWilceCCEJA} 
suggests that there may be obstructions to Jordan-algebraic composites involving the spin factors whose state
spaces are the $d$-balls for $d = 4$ and $d \ge 6$ as well.  (References  to 
\cite{BarnumGraydonWilceCCEJA} are to arXiv version v2.)

A similar argument involving tomographic locality can be made using the result of \cite{MasanesEtAlEntanglementAndBlochBall}, 
in which it is shown that only for $d=3$ does there exist a composite, satisfying tomographic locality and
continuous reversible transitivity on pure states, of two systems each of which has a Euclidean $d$-ball as state space.  
Since by 
Theorem \ref{theorem: main} (cf. Theorem \ref{theorem: sss bits are balls}), spectrality and strong symmetry imply that bits are balls, and also that nonclassical systems are simple Jordan-algebraic and hence have continuous reversible transitivity on pure states, it follows from Theorem \ref{theorem: main} and \cite{MasanesEtAlEntanglementAndBlochBall} that no tomographically local composite of a bit with itself can preserve spectrality and strong symmetry, except in the complex quantum case (for which bits are $3$-balls) or the classical case. 

\subsection{Relations with other characterizations of Jordan algebraic classes of state spaces}

Jordan algebraic systems share with standard (i.e. complex) quantum
theory many geometric properties of informational and physical
significance in addition to spectrality and strong symmetry.  These
include the absence of higher-order interference \cite{Niestegge,
  UdudecBarnumEmersonUnpublished}; purity-preserving projectivity
\cite{ASPaper, ASBook2, BMU}; homogeneity and self-duality.
Obviously, Theorem \ref{theorem: main} implies that systems with spectrality and strong
symmetry have these other properties too.  This raises the question of
whether the theorem could be proved differently, by establishing a
direct implication from spectrality and strong symmetry to combinations of these other
properties sufficient to establish the Jordan algebraic nature of the
state space.  For example, as noted in Proposition \ref{prop: consequences} it is known \cite{BMU}
that spectrality and strong symmetry imply self-duality of the cone
over the normalized state space.   In light of the Koecher-Vinberg theorem \cite{Koecher, Vinberg}, 
which states that the homogeneous self-dual cones are precisely the Jordan-algebraic ones, 
one could therefore try to show
directly that spectrality and strong symmetry imply homogeneity.  Similarly, it
is known \cite{BMU} that the conjunction of spectrality and strong
symmetry implies projectivity of the state space in the sense that
each face is the positive part of the image of a projection that is
the dual of (and hence, given self-duality of the cone, identical to) what Alfsen
and Shultz call a compression.  Since self-duality near-trivially
implies Alfsen and Shultz' postulate of \emph{symmetry of transition
  probabilities} (STP), and the finite-dimensional case of their theorem
characterizing the state spaces of a wide class of Euclidean Jordan
algebras has projectivity, symmetry of transition probabilities, and
one of several other properties (equivalent in the context of projectivity and STP) as premises, we 
would just need to establish a direct implication from spectrality and strong symmetry to one of these other
properties to obtain a direct proof of the Jordan structure.  The most promising choice is perhaps the preservation of
purity by the duals of compressions: that the image of a pure state under a
compression is always a multiple of a pure state.  
In fact, the result in \cite{BMU} was obtained by showing that the duals of
compressions preserve purity---but the additional assumption of
absence of higher-order interference was used in showing this.  

Alternative properties
capable of characterizing the Jordan-algebraic state spaces among those satisfying Alfsen and Shultz's
conditions of projectivity and symmetry of transition probabilities are (1) the 
Hilbert ball property, or (2) the satisfaction 
of the \emph{atomic covering law} by the lattice of faces of the state space.  
A finite-dimensional convex set has Alfsen and Shultz' \emph{Hilbert ball property} if and only if for every
pair of extreme points of $\Omega$, the face they generate is affinely isomorphic to a Euclidean ball.  (See \cite{ASBook2}, Def. 9.9, for additional technical conditions relevant in infinite dimension.) The atomic covering
law for a lower-bounded lattice states that if $a$ is an atom in the lattice, and $b$ any element of the lattice, then either
$a \join b = b$, or $a \join b$ covers $b$.  Here ``$x$ \emph{covers} $y$" means $x > y$ and there exists no $w$
such that $x > w  > y$, i.e. $x$ is above $y$ and there is nothing between them, and an \emph{atom} 
is an element that covers $0$.   So an alternative proof of
our result could also aim at establishing either one of these properties directly.  By the main result of 
\cite{BMU} a direct proof of the absence
of higher-order interference (see the next subsection for a rough definition) from spectrality and strong symmetry would also do the job.

\subsection{Implications for other results on general probabilistic theories having spectrality and strong symmetry}

\subsubsection{Higher order interference}
\label{subsubsection: higher order interference}

In \cite{Sorkin} Rafael Sorkin introduced a notion of a hierarchy of orders or degrees of probabilistic interference, one for each positive 
integer, for physical theories modeled in a ``histories" framework.  In \cite{3slit, UdudecBarnumEmersonUnpublished, CozThesis} Sorkin's notion was adapted to the GPT framework.  Roughly speaking, interference of order $k$ represents the idea that there are processes in which $k$ or more mutually exclusive ``paths" or ``histories" are available to the system, such that there is a measurement whose probabilities, when measured on a system that has undergone such a process, cannot be determined from the probabilities in all situations in which $k-1$ or fewer of the paths are available.  Quantum theory has interference of 
order at most $2$ (the lowest order that intuitively represents interference, since theories with maximal interference order of $1$ are classical).   If a theory fails to have interference of order $k$, then it can have no interference of 
order $k+1$ or higher.  We call interference of order $3$ or above \emph{higher-order interference}.   As mentioned in the introduction, Theorem \ref{theorem: main} improves on the main 
result of \cite{BMU}, which characterized the same class of theories, but in addition to spectrality and strong symmetry, used the additional assumption of no higher-order interference.  

\subsubsection{Query computation and query complexity}
\label{subsubsection: queries}

In \emph{query problem} in the theory of computation, it is assumed that 
one has access to a sequence, parametrized by ``instance size" 
$n$, of ``black boxes" capable
of computing a function $f_n$ between finite sets (the usual case is $f: \{0,1\}^n \rightarrow \{0,1\}$).  $f_n$ is 
not known, but a finite family $F_n$ of possible functions is specified; sometimes a prior distribution over $F_n$ is 
specified as well.   One may then ask about
the ``query complexity" of computing some function $g_n$ whose domain is the family of functions $F_n$;
roughly speaking, this is the number of times the black box appears in a circuit  capable of computing, 
with high probability of success (either in the worst case or, in case a prior distribution on $F_n$ has been 
specified, in expectation with respect to that distribution), the function $g_n$.  
The circuit is realized in some background
circuit model of computation, such as a classical or quantum circuit model, and the behavior of the black boxes may also be
classical, quantum, or more general, giving rise to a variety of possible query models of computation.
Typically one is concerned
with how the number of queries scales with the input size $n$.  For example, in Grover's ``search problem" of ``identifying 
a marked state", for which the instance size parameter is usually written as $N$, $F_N$, of cardinality $N$, is the set of functions $\{1,...,N\} \rightarrow \{0,1\}$ that take the value $1$ on exactly one element of its domain $\{1,...,N\}$, usually termed the ``marked state".  In Grover's problem $g_N: F_N \rightarrow \{1,...,N\}$ is the function that identifies which of these $N$ possibilities for $f_N$ is computed by the black box, i.e. which state is marked.   Grover's celebrated quantum query algorithm \cite{Grover96a, Grover97a} computes $g_N$ with a number of queries of order $\sqrt{N}$.  In the closely related query problem of computing OR, $F_N$ consists of 
all $N^2$ functions $\{1,...,N\} \rightarrow \{0,1\}$ and $g_N$ is OR of the values of $f_N$ on all its inputs, i.e. $g_N(f_N)$ is $1$ if $f_N$ is zero on all $N$ inputs, otherwise it is $1$.  A variant of Grover's algorithm computes OR, again with a number of queries of order $\sqrt{N}$.

In \cite{LeeSelbyGrover} it was shown, in a reasonable query model
generalizing the quantum query model, that five principles, of which
the fifth is strong symmetry, imply that to have probability $1/2$ or
greater of correctly identifying the marked state in Grover's
search problem, the number of queries must be at least $(3/2 -
\sqrt{2}) \sqrt{N/h}$, where $h$ is the maximal ``order of
interference'' of the GPT theory.   A lower bound of $\Omega(\sqrt{N})$ was
established in the quantum case in \cite{Bennett97b}; it is achieved by 
Grover's algorithm.  
The bound in
\cite{LeeSelbyGrover} is also $\Omega(\sqrt{N})$.   This limits the potential gain from
higher-order interference of degree $h$ to at most a constant factor, $c/\sqrt{h}$ compared to quantum.
The authors of \cite{LeeSelbyGrover} 
argued that their result is
``somewhat surprising as one might expect more interference to imply
more computational power''.  
   
However,  it can be shown 
(cf. \cite{ChiribellaScandoloEntanglementAxiomatic}) 
that the conjunction of the
principles used in \cite{LeeSelbyGrover} implies spectrality.  Together with Theorem \ref{theorem: main} this implies that the
GPT systems  considered in \cite{LeeSelbyGrover} are Jordan-algebraic, and hence that the
order $h$ of interference is at most 2
\cite{UdudecBarnumEmersonUnpublished, Niestegge}.   The 
question of whether or not higher-order interference of some fixed maximal degree can give a
non-constant asymptotic speedup (in terms of number of queries) over Grover's quantum algorithm for the Grover search problem in some GPT with a reasonable query model remains open, but our results imply that if 
such a speedup is possible it will be in a setting not 
allowed by the postulates in \cite{LeeSelbyGrover}.  

Besides strong symmetry, the assumptions in \cite{LeeSelbyGrover} are
  causality (roughly, no signaling from the future, which is implicit in the setting of this paper, and most work on GPTs, because of the uniqueness
of the order unit $u$), purification (every state on a system $A$ arises as the marginal of a pure state on a possibly larger composite system $AB$, and any two such purifications are related by an automorphism of 
$\Omega_{B}$), purity preservation under composition, and the existence of a
  pure sharp effect.   It is not explicitly stated whether one must assume purity preservation 
under both parallel
and sequential composition, but it appears that only parallel composition is used in the proofs.   Even with strong symmetry omitted, and purity preservation required only under parallel composition, the conjunction of these 
conditions is an extremely strong assumption,\footnote{Most of the strength of this conjunction lies in purification, 
purity preservation under parallel composition, and the existence of a pure sharp effect, since causality is part of essentially every definition of composite system in the GPT framework. The existence of a pure sharp effect would follow from the other conditions and the no-restriction hypothesis, which, although also quite strong in its way, is a natural and oft-made assumption in the GPT framework.} since if they are augmented with purity preservation under pure operations,\footnote{Pure operations are linear maps that lie in extremal rays of the cone of ``allowed" positive maps on the state space.} they would already imply the Jordan-algebraic nature of the state space (though not the restriction to simple Jordan algebras or simplices), as shown in \cite{BarnumLeeScandoloSelbyJordan}.  However, since to the best of our knowledge purity preservation under sequential composition of pure operations was not previously known to follow from the other assumptions, the possibility that the assumptions of \cite{LeeSelbyGrover} still allow for higher-order interference was open until it was excluded by the main result of the present paper.

In \cite{BarnumLeeSelbyOracles}, a definition of query computation  was formulated and two results were obtained concerning query computation in general probabilistic theories under nearly the same assumptions as \cite{LeeSelbyGrover}: the ubiquitous (and innocuous) causality, purification, purity preservation under parallel composition, and strong symmetry, as well as preservation of the maximally mixed state
(centroid of the state space) under parallel composition.\footnote{This last may well follow from the others.}  
Pure sharpness was not explicitly assumed, but the results of the paper involve situations in which nontrivial sets of perfectly 
distinguishable states exist, which are needed for the function queries to be possible, and the existence of such sets
can be shown to imply pure sharpness; indeed, in \cite{BarnumLeeSelbyOracles} the cone of unnormalized states 
is shown to be not only self-dual, but perfect.  The first
main result of \cite{BarnumLeeSelbyOracles} was that if $kn$ classical queries yield no information concerning a function to be computed, then in a general probabilistic model with maximal order of interference $k$, $n$ queries yield no information.  This allows one to obtain, from classical zero-information lower bounds on the number of queries needed to compute or approximate properties of a black-box function $f$, lower bounds in more general models, but it neither rules out nor definitely establishes the possibility of speedups over quantum query computation in more general GPT models.  It generalizes a result of Meyer and Pommersheim 
\cite{MeyerPommersheim}, who studied the quantum case.  The second main result of \cite{BarnumLeeSelbyOracles} was a 
confirmation that the generalization (from the classical and quantum cases) of the notion of black-box query used there is reasonable, in the sense that if there is a polynomial-size family of GPT circuits $C_f$ for a 
family of functions $f$, one can use them to simulate the black-box queries to the functions $f$, with a
 polynomial family of circuits.   Much of the interest in query algorithms, both quantum and classical, implicitly 
relies on this type of result, since they allow one to pass with at most polynomial cost from query algorithms to concrete circuit algorithms in cases where circuits for the function $f$ exist---giving rise to efficient algorithms in cases where the amount of  resources required by the query algorithm for the computation interleaved between the queries is also polynomial.       

The absence of an explicit assumption of sequential purity preservation is the only thing 
distinguishing the assumptions (excluding strong symmetry but including pure sharpness) of \cite{BarnumLeeSelbyOracles} from those of sharp theories with purification (which are known to have Jordan-algebraic state spaces). 
But without 
sequential purity preservation, it was still not clear that the systems of \cite{BarnumLeeSelbyOracles} are Jordan algebraic.  However, 
they do have spectrality so, as in the case of  \cite{LeeSelbyGrover}, Theorem \ref{theorem: main} implies that
these theories, too, have Jordan-algebraic state spaces, and cannot exhibit higher-order interference.  This should motivate
attempts to extend these results to more general settings.  

\subsubsection{Entropic and thermodynamic aspects of probabilistic theories}
\label{sec: thermo}

In \cite{KrummEtAlThermo}, spectrality and strong symmetry were used to obtain important properties of quantum entropy and entropy-like quantities in a more general context.  Given spectrality, it is natural to investigate real-valued entropy-like functions on the space of states defined using Schur-concave functions:\footnote{A function $f: \R^n \rightarrow \R$ is called  \emph{Schur-concave} if whenever $x$ majorizes $y$, $f(y) \ge f(x)$.  $x$ is said to majorize $y$, for $x,y \in \R^n$, if 
for all $m \in \{1,..,n\}$, it holds that  
$\sum_{i=1}^m x^{\downarrow}_m \ge \sum_{i=1}^m y^{\downarrow}_m$, where   $x^{\downarrow}, y^{\downarrow}$
are the vectors whose elements are those of $x$ and $y$ respectively, arranged in decreasing order.  Sometimes Schur-concave functions are considered to have as domain $\union_{n \in \N} \R^n$, in which case majorization is defined by comparing two 
vectors with the shorter one padded out with zeros to the length of the longer one.  ``$x$ majorizes $y$" is generally interpreted as a formalization of the idea that $x$ is ``more mixed" or ``more random" than $y$, because of the Birkhoff-von Neumann theorem, which states that 
$x$ majorizing $y$ is equivalent to $y$ being a convex combination of vectors obtained from $x$ by permuting its entries.  
Schur-concave functions are often viewed as real-valued ``measures of randomness" since they are precisely the real-valued functions that can never decrease under such operations.  This accounts for the terminology ``generalized entropies" and also for a relation of majorization to microcanonical thermodynamics (cf. e.g. \cite{Alberti81a}).}  for each such function $f$, one defines a corresponding generalized entropy as the value of $f$ on the spectrum of the state.The von Neumann entropy of a quantum state, which is given by the Shannon entropy $H(p) := - \sum_i p_i \ln p_i$ of its spectrum $p= \{p_1,...,p_n\}$, is one such entropy.  In general theories, one can define the \emph{measurement entropy} and the \emph{preparation entropy} of states.  The measurement entropy of state $\sigma$ is the minimum, over finegrained measurements, of the Shannon entropy of the probabilities of the outcomes when the measurement is made on a system in state $\sigma$; the preparation entropy is the minimum entropy of probabilities $p_i$ such that 
$\sigma = \sum_i p_i \omega_i$, for pure states $\omega_i$.  Analogous definitions can also be made for 
the generalized entropies determined by Schur-concave functions other than Shannon entropy.  In quantum theory, the preparation and measurement entropies corresponding to a given $f$ are equal to each other and to the spectral entropy corresponding to $f$.  In \cite{KrummEtAlThermo}, it was shown, assuming spectrality and strong symmetry,
that the outcome probabilities of any fine-grained measurement on $\sigma$ are majorized
by those of the spectral measurement (which are equal to $\sigma$'s spectrum), and hence that the measurement entropy determined by any Schur-concave function is equal to the corresponding spectral entropy.\footnote{The theorem was stated for the Renyi $\alpha$-entropies, but the proof uses Schur concavity and applies to arbitrary Schur concave functions.}  In parallel work in \cite{ChiribellaScandoloDiagonalization} the same conclusion was obtained using causality, purification, purity preservation under both parallel and sequential composition of pure operations, and strong symmetry.  The first four of these assumptions together imply spectrality.   So in light of the present paper, the setting of 
\cite{KrummEtAlThermo, ChiribellaScandoloDiagonalization} is no more general than that of simple Jordan-algebraic state spaces, and classical ones.\footnote{It should nevertheless be noted that to the best of our knowledge, the conclusions obtained in \cite{KrummEtAlThermo} and 
\cite{ChiribellaScandoloDiagonalization} were not previously known for the non-quantum, non-classical simple Jordan algebras.}   However, the same conclusions can also be obtained from different postulates, including or implying spectrality, but not strong symmetry.  This was done, using different sets of assumptions, in 
\cite{ChiribellaScandoloEntanglementAxiomatic}, \cite{QPLthermo}, and \cite{ChiribellaScandoloMicrocanonical}.  In light of the present work, it becomes even more interesting to determine whether the assumptions used in these works imply the Jordan algebraic structure of state space (even if not the simple structure or classicality of the Jordan algebra, which is enforced by strong symmetry but which does not follow from the assumptions of \cite{QPLthermo}, \cite{ChiribellaScandoloEntanglementAxiomatic}, or \cite{ChiribellaScandoloMicrocanonical}). 
The results in \cite{ChiribellaScandoloEntanglementAxiomatic, ChiribellaScandoloMicrocanonical} 
concern a class of theories they call 
\emph{sharp theories with purification}, which satisfy the four properties of causality, purification, purity preservation (under both parallel and sequential composition), and pure sharpness.  In \cite{BarnumLeeScandoloSelbyJordan} it was shown that all systems in this class of theories are Jordan-algebraic (although this class is not precisely simple Jordan algebras and classical theory, since some nonclassical nonsimple state spaces are definitely allowed, and to the best of our knowledge it is not known whether all simple Jordan algebras are).  
However, the assumptions of \cite{QPLthermo} are just projectivity of the state space and symmetry of transition probabilities (equivalently, projectivity and self-duality of the state cone, which are in turn equivalent
(\cite{Araki80}, cf. also \cite{QPLthermo}) to its perfection together with the normalization of the orthogonal projections onto the linear spans of faces).
All Jordan algebraic state spaces have these properties, but it is an open question whether they are the only ones.  
Essentially these assumptions (in the guise of projectivity and symmetry of transition probabilities) appear in Alfsen and Shultz's derivation \cite{ASPaper, ASBook2} of Jordan-algebraic structure, but there a choice of one of several  additional assumptions, for instance that the ``filters" that project onto faces of the state space take pure states to multiples of pure states or one of the other alternatives discussed above, is used.  
It is not known whether or not this assumption can be dropped in the characterization of Jordan-algebraic state spaces.\footnote{In \cite{Araki80} H. Araki gave a derivation of Jordan-algebraic structure in finite dimension inspired by Alfsen and Shultz's, but using a notion of projection on the state space he calls ``filter", which is \emph{prima facie} weaker than the notion of dual of a compression;  he also assumed filters preserve purity, but conjectured the assumption could be dropped.  Alfsen and Shultz \cite{ASBook2}, p. 354, state that ``in the finite-dimensional context his axioms force the filters to be compressions", but it is not clear whether this is meant to apply to the axioms without the purity-preservation assumption.}  So, the results of \cite{QPLthermo} add additional interest to the question of whether there are non-Jordan-algebraic state spaces satisfying projectivity and self-duality, while the results of the present paper show that such state spaces will not be found among those satisfying strong symmetry.  

\section{Conclusion}
\label{sec: conclusion}

We have shown that the finite-dimensional compact convex sets satisfying two properties, spectrality and strong symmetry, 
are up to affine isomorphism precisely the sets of normalized states of simple finite-dimensional Euclidean Jordan algebras, and simplices, answering a question posed in \cite{BMU} of whether the additional property of no higher-order interference used
there in addition to spectrality and strong symmetry, is needed to characterize this set of state spaces.  While this can be viewed purely as a result in convex geometry, it has important implications for the research program that studies general probabilistic theories, since significant results in that program were obtained under the assumption that the normalized state spaces of systems are spectral and
strongly symmetric, or under assumptions that imply this; we described some of these implications in the preceding section.  
We also discussed the relation of our result to other characterizations of Jordan algebraic state spaces.  

These Jordan-algebraic compact convex sets, and the cones over them, figure in many other areas of mathematics and applications, including complex analysis, symmetric spaces, optimization, and statistics, and the present result may have interesting implications in some of these areas as well.  From the both the perspective of general probabilistic theories and that of the purely mathematical theory of convex sets, our results suggest exploring the consequences of assumptions weaker than spectrality and strong symmetry.

\section*{Acknowledgments}
HB thanks the Department of Mathematical Sciences and the QMATH group at the University of Copenhagen 
for hosting him as Visiting Professor while investigations preliminary to the present work were undertaken, and the Villum Foundation for making his visit possible through its support of QMATH.   
Both authors thank Henrik Schlichtkrull for putting them in contact with each other.

\begin{appendix}
\section{Proof of Theorem \ref{theorem: sss bits are balls}}
\label{sec: proof that sss bits are balls}

In this section we establish Theorem \ref{theorem: sss bits are balls},
following \cite{DakicBruknerQuantumBeyond} but bringing in the
stronger assumption of $2$-transitivity.

\begingroup
\def\thetheorem{\ref{theorem: sss bits are balls}}
\begin{theorem}
Let $\Omega$ be a strongly symmetric spectral compact convex body whose
largest frame is of cardinality $2$.  Then $\Omega$ is affinely
isomorphic to a ball.
\end{theorem}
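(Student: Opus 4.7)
The plan is to adapt Daki\'c and Brukner's argument, making explicit use of the strong symmetry hypothesis ($2$-transitivity) together with the structural consequences collected in Proposition \ref{prop: consequences}. First I would canonically embed $\Omega$ in a Euclidean space $E = \aff{\Omega}$ via Proposition \ref{prop: canonical embedding}, so that the barycenter $c(\Omega)$ becomes the origin and $K := \aut{\Omega} \subseteq O(E)$. The whole argument is then about identifying $\partial_e\Omega$ with a full Euclidean sphere in $E$.

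The first key observation is that for any maximal $2$-frame $\{\omega_1,\omega_2\}$, item \ref{item: barycenters of faces} of Proposition \ref{prop: consequences} gives $(\omega_1+\omega_2)/2 = c(\Omega) = 0$, so $\omega_2 = -\omega_1$. Since by item \ref{item: frame extensions and maximal frames} every $1$-frame extends to a maximal frame (which by hypothesis has cardinality $2$), every extreme point of $\Omega$ belongs to such an antipodal pair. Combined with strong symmetry on $1$-frames and $K \subseteq O(E)$, this forces all extreme points to lie on a common sphere $rS^{d-1}$ in $E$ of some radius $r > 0$, where $d := \dim E$; in particular $\partial_e\Omega \subseteq rS^{d-1}$.

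Next I would invoke spectrality to establish the reverse inclusion. Spectrality together with the frame-size bound gives, for each $\sigma \in \Omega$, a decomposition $\sigma = p\omega + (1-p)(-\omega) = (2p-1)\omega$ with $\omega \in \partial_e\Omega$ and $p \in [0,1]$, so every state has the form $t\omega$ for some extreme $\omega$ and $t \in [-1,1]$. Now pick any $v \in S^{d-1}$. Since $0$ lies in the interior of $\Omega$, the set $\{t > 0 : tv \in \Omega\}$ is a nonempty interval with supremum $t_0 > 0$ and $t_0 v \in \Omega$. Writing $t_0 v = s\omega$ with $\omega$ extreme and $s \in [-1,1]$, the fact that $t_0 > 0$ forces $s \ne 0$ and $\omega \in \R v$; since $|\omega| = r$, one gets $\omega = \pm rv$, and by the antipodal pairing $rv \in \partial_e\Omega$.

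These two inclusions give $\partial_e\Omega = rS^{d-1}$, and Krein--Milman then yields $\Omega = \conv(rS^{d-1})$, the closed Euclidean ball of radius $r$ in $E$. I do not anticipate a substantial obstacle; the only conceptually delicate point is that spectrality (not merely transitivity on extreme points) is what promotes the statement ``every extreme point sits on a diameter through $0$'' to the statement ``every diameter through $0$ meets $\partial_e\Omega$.''
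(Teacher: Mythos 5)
Your proposal is correct, and its overall architecture matches the paper's proof in Appendix \ref{sec: proof that sss bits are balls}: canonical embedding, extreme points on a sphere via transitivity on $1$-frames, antipodality of maximal $2$-frames, and then spectrality to show every point of the sphere is actually attained. The one place where you genuinely diverge is the step the paper identifies as the crux (the gap it perceives in Daki{\'c} and Brukner's original argument): showing that \emph{every} $2$-frame, not just one, is an antipodal pair. The paper does this in two stages --- it applies spectrality to the barycenter $0$ to produce a single $2$-frame whose convex hull is a chord of the sphere through the center, hence a diameter, and then invokes transitivity of $\aut{\Omega}$ on $2$-frames to propagate antipodality to all of them. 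You instead get it in one stroke from item \ref{item: barycenters of faces} of Proposition \ref{prop: consequences}: a maximal frame generates $\Omega$, so its average is $c(\Omega)=0$, forcing $\omega_2=-\omega_1$ directly. Both routes are legitimate uses of strong symmetry; yours packages it inside the barycenter formula (which the paper proves separately in Appendix \ref{sec: faces of sss sets are sss} by group averaging, so there is no circularity), while the paper's is more self-contained in that it uses $2$-transitivity bare-handed. Your version is arguably cleaner, at the cost of leaning on a lemma whose proof is itself nontrivial. The remaining steps --- the ray argument showing $rv\in\partial_e\Omega$ for every unit vector $v$, and the Krein--Milman conclusion --- are in substance identical to the paper's final paragraph.
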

\addtocounter{theorem}{-1}
\endgroup

\begin{proof}
We view $\Omega$ as canonically embedded in the affine space $\aff
\Omega \iso E$ that it generates, of dimension $n$, equipped with a
Euclidean inner product for which $\aut{\Omega}$ is a subgroup of
$O(E)$, and recall that because
$\aut{\Omega}$ acts transitively on
$\partial_e \Omega$, we have that $\int_{\aut{\Omega}} d\mu(k) ~ k.\omega$ is
a fixed point of the group action, and is in fact the barycenter, $0$. 
Because it is an orbit of a subgroup of $O(V)$, the set $\partial_e \Omega$ of extremal
points of $\Omega$ is contained in the sphere $S := \{x \in V:
||x||=c\}$ with respect to the associated norm.  We scale the inner
product by a positive real number so that $c=1$.

We now prove that every pair of perfectly
distinguishable points in $\Omega$ are the endpoints of some diameter
of $S$.  
We begin by showing
(following Daki{\'c} and Brukner but with a bit more detail) that for
every extremal $\omega \in \Omega$, the point $-\omega$ also belongs
to $\Omega$, and $[\omega, -\omega]$ is a $2$-frame.

A \emph{chord} of a sphere is defined to be a closed line segment
whose endpoints are two distinct points on the sphere.  We will use
the fact that the only chords of a sphere that contain its center are
the diameters, i.e. the chords from $x$ to $-x$.  Since $\Omega$ is
spectral with maximal frame size $2$, every nonextremal point in
$\Omega$, in particular its center, $0$, is a convex combination of
two perfectly distinguishable extremal points of $\Omega$.  Let
$\omega_0$ and $\omega_1$ be extremal points of $\Omega$ such that $0$
is a convex combination of them.  Since we showed above that all
extremal points of $\Omega$ lie on the sphere $S$, the set of convex
combinations of $\omega_0$ and $\omega_1$ is a chord of $S$ containing
its center, $0$.  Therefore it is a diameter, and $\omega_1 = -
\omega_0$.  Since $\Omega$ has reversible transitivity on pure states 
(i.e. transitivity of $\aut{\Omega}$ on $1$-frames, which are
precisely the extremal points), every extremal point $\omega$ of
$\Omega$ can be obtained from $\omega_0$ by acting with an element of
$O(V)$, whence by linearity of the action, $-\omega$ is also in
$\Omega$.  So we have established that $\Omega$ is symmetric under
coordinate inversion $x \mapsto -x$, and that every pair $\omega,
-\omega$ is a maximal frame.

We still need to show that there are no other maximal frames in $\Omega$, 
i.e. no 2-frames that are not the endpoints of a diameter.\footnote{This is the 
main point at which we perceive a gap in the argument in  
\cite{DakicBruknerQuantumBeyond} which we do not see how to easily bridge using
only reversible transitivity and strong symmetry.} If we have 
transitivity on $2$-frames, we get this immediately: every $2$-frame is an
automorphic image of $(\omega_0, -\omega_0)$, and therefore of the form
$(\omega, -\omega)$ for some extremal $\omega$.  

Once we have this fact, we can use it as in
\cite{DakicBruknerQuantumBeyond} to establish that $\Omega$ is a ball.
The centroid of a compact convex set, which is $0$ in the case of $\Omega$, 
 is in its relative interior.  $\Omega$ is full-dimensional, so its relative interior
is its interior, and there is an open ball around $0$ contained in $\Omega$.  
So for any $x \in S$ there is $\lambda \in (0, 1]$ small enough that
$\lambda x \in \Omega$.  By spectrality, $\lambda x$ is a convex
combination of two perfectly distinguishable extremal points of
$\Omega$.  Since we showed above, using $2$-transitivity, that all
such pairs are endpoints of diameters, $\lambda x$ must be a convex
combination of the endpoints of a diameter.  For $x \in S$ the only
diameter containing $\lambda x \ne 0$ is the one between $x$ and $-x$.
So we have shown that $x \in \Omega$; but $x$ was an arbitrary element
of $S$.  Since the entire sphere $S$ belongs to the extreme boundary
of the convex set $\Omega$, and we earlier showed that all extremal
points of $\Omega$ are in $S$, $\Omega$ is the convex hull of the
$(n-1)$-sphere $S_{n-1}$, i.e. an $n$-dimensional ball.
\end{proof}

\section{Faces of strongly symmetric spectral sets are strongly symmetric and
spectral}
\label{sec: faces of sss sets are sss}
In this section we prove the following theorem, which includes item 
\ref{item: barycenters of faces} of Proposition \ref{prop: consequences}.

\begin{theorem}
Let $\Omega$ be a strongly symmetric spectral compact convex set.
Then every face of $\Omega$ is a strongly symmetric spectral compact
convex set; moreover if $F$ is a face of $\Omega$ and $K =
\aut{\Omega}$, then $K(F) := K_F/K^F = \aut{F}$.  Here $K_F$ is the
subgroup that takes $F$ to itself; $K^F$ is the subgroup that fixes
$F$ pointwise.  Also, $K_F := K^{c(F)}$, where $c(F) =
\sum_{i=1}^{|F|} \omega_i/|F|$, for any frame $\omega_i$ for $F$, 
is the centroid of $F$.
\end{theorem}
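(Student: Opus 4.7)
The plan is to prove the four assertions of the theorem in sequence, exploiting the structure accumulated in Proposition \ref{prop: consequences}. First I would establish spectrality and strong symmetry of $F$ together. For spectrality, any $\omega \in F$ lies in the convex hull of some frame of $\Omega$, and since $F$ is a face each element of that frame must itself lie in $F$; the distinguishing submeasurement on $\Omega$ then restricts to one on $F$. For strong symmetry of $F$, given two $k$-frames in $F$, Proposition \ref{prop: consequences}(\ref{item: frame extensions and maximal frames}) extends each to a frame of cardinality $|F|$ generating $F$. Strong symmetry of $\Omega$ then yields $g \in K$ carrying the first extended frame onto the second; since $g$ sends a generating frame of $F$ to another such, $g \in K_F$, and its restriction gives the needed symmetry of $F$.

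Next I would prove $K_F/K^F \cong \aut{F}$. The restriction $r: K_F \to \aut{F}$ has kernel $K^F$ by definition, so only surjectivity is at issue. Given $\phi \in \aut{F}$, apply it to any maximal frame $\omega_1,\ldots,\omega_{|F|}$ of $F$, obtaining another such frame. By strong symmetry of $\Omega$, some $g \in K$ satisfies $g\omega_i = \phi\omega_i$ for all $i$. Since maximal frames of $F$ are orthonormal bases of $\lin F$ by Proposition \ref{prop: consequences}(\ref{item: frames orthogonal}), and both $g|_{\lin F}$ and $\phi$ are linear and agree on this basis, they coincide on $F$; the same frame-to-frame property shows $g \in K_F$.

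For the centroid formula, I would introduce $e_F := P_F u$, where $P_F$ is the orthogonal projection onto $\lin F$ with respect to the $\aut\,{\Omega}$-invariant, self-dualizing inner product from Proposition \ref{prop: consequences}(\ref{item: perfection}). Extending a frame $\omega_1,\ldots,\omega_{|F|}$ of $F$ to a maximal frame $\omega_1,\ldots,\omega_n$ of $\Omega$ (item \ref{item: frame extensions and maximal frames}), summing with $u = \sum_i \omega_i$ from item \ref{item: frames orthogonal}, and noting $\omega_{|F|+1},\ldots,\omega_n \in (\lin F)^\perp$ by item \ref{item: orthocomplement}, one obtains $P_F u = \sum_{i=1}^{|F|} \omega_i$, which is in particular independent of the chosen frame. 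This vector is manifestly $K_F$-invariant, since $K_F \subseteq O(V)$ preserves $\lin F$ and therefore commutes with $P_F$, while $K$ fixes $u$. Combined with the already-established transitivity of $K_F$ on $\partial_e F$, Proposition \ref{prop: barycenter from group averaging} applied to this compact group action on $F$ identifies $e_F/|F|$ with the unique $K_F$-invariant point $c(F)$.

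The final assertion $K_F = K^{c(F)}$ is where I expect the main obstacle. The inclusion $K_F \subseteq K^{c(F)}$ is immediate from affine-covariance of centroids. For the converse I would intrinsically characterize $F = \{\sigma \in \Omega : (e_F,\sigma) = 1\}$: the forward containment follows since $P_F \sigma = \sigma$ for $\sigma \in F$, giving $(e_F,\sigma) = (u, P_F \sigma) = 1$. For the reverse, $u - e_F = \sum_{i > |F|} \omega_i$ lies in $V_+$, and self-duality forces each term $(\omega_i,\sigma)$ with $i > |F|$ to vanish whenever $(u-e_F,\sigma) = 0$ and $\sigma \in \Omega$; hence $\sigma \in (F')^\perp \cap \Omega = F'' = F$ by item \ref{item: orthocomplement} together with the involutivity half of item \ref{item: orthomodularity}. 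With this description of $F$ in hand, any $g \in K^{c(F)}$ fixes $e_F$, and using $K \subseteq O(V)$ one computes $(e_F, g\sigma) = (g^{-1} e_F, \sigma) = (e_F,\sigma) = 1$ for every $\sigma \in F$, so $g F \subseteq F$ and $g \in K_F$. The delicate point here is precisely that $c(F)$ alone (equivalently $e_F$) must determine $F$, which ultimately rests on the self-dual, orthocomplemented face structure of Proposition \ref{prop: consequences}.
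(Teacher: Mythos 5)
Most of your proof is sound and runs close to the paper's own argument: the restriction/extension argument for spectrality and strong symmetry of $F$, the computation $P_F u = \sum_{i=1}^{|F|}\omega_i$ for the centroid, and the derivation of $K^{c(F)}\subseteq K_F$ via an intrinsic description of $F$ (the paper instead uses $F=\face{(c(F))}$ together with $\face{(\phi(x))}=\phi(\face{(x)})$, which is slicker but serves the same purpose). There is, however, one genuinely broken step: in your surjectivity argument for $K_F/K^F \to \aut{F}$ you assert that a maximal frame of $F$ is an orthonormal \emph{basis} of $\lin F$. Proposition \ref{prop: consequences}(\ref{item: frames orthogonal}) only gives an orthonormal \emph{set}; a maximal frame has cardinality $|F|=\rank F$, whereas $\dim \lin F = \dim F + 1$, and these coincide only when $F$ is a simplex (for the qutrit state space a maximal frame has $3$ elements inside the $9$-dimensional $\lin F$). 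Consequently two linear maps agreeing on a maximal frame need not coincide: the transpose map and the identity both fix the diagonal frame $e_{11},e_{22},e_{33}$ of the qutrit yet are distinct automorphisms. So the $g$ you obtain from strong symmetry need not restrict to $\phi$ on $F$, and surjectivity of $K_F\to\aut{F}$ is not established by this argument. (The paper's own proof in fact only establishes that $K_F/K^F$ acts faithfully on $F$ and transitively on its maximal frames, so you are attempting something the paper glosses over; but the attempt as written does not work.)

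A second, smaller gap: in the reverse inclusion of $F=\{\sigma\in\Omega:(e_F,\sigma)=1\}$ you pass from $(\omega_i,\sigma)=0$ for all $i>|F|$ to $\sigma\in (F')^\perp$. Those $\omega_i$ span only an $|F'|$-dimensional subspace of $\lin F'$, so an extra (standard) step is needed: $\sum_{i>|F|}\omega_i=|F'|\,c(F')$ lies in the relative interior of the cone over $F'$, and a positive functional vanishing at a relative interior point of a face of a cone vanishes on the entire face. This is routine and fillable, unlike the basis issue above.
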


\begin{proof} 

By Proposition \ref{prop: barycenter from group averaging} and the
fact that $u = \sum_{i = 1}^r \omega_i$ in a strongly symmetric
compact convex set, where $\omega_i$ are a maximal frame, and the fact
that $u$ is $\aut{\Omega}$-invariant in our setup, the barycenter of
$\Omega$ is easily shown to be $(\sum_{i=1}^r \omega_i)/r$, for any
maximal frame $[\omega_1,...,\omega_r]$.  It is easily shown that any
face of $F$ is spectral, since spectrality of $\Omega$ asserts, for
$\omega \in F$, that $\omega$ is a convex combination of perfectly
distinguishable states, but these states must be in $F$ by the
definition of face, and by Proposition \ref{prop: consequences} they
must be extendable to a frame for $F$.  Then one shows, using strong
symmetry, that for each face $F$ of $\Omega$, there is a subgroup of
$K:=\aut{\Omega}$, that preserves $\lin F$ and (necessarily or else it
could not consist of automorphisms of $\Omega$) induces automorphisms
of $F$, and acts transitively on the maximal frames in $F$.  This is
immediate from strong symmetry, since the maximal frames in $F$ are
$|F|$-frames in $\Omega$, and strong symmetry says $K$ can take
\emph{any} $|F|$-frame (whether in $F$ or not) to any other.  One has
to show that frames in $F$ are still frames for $F$ viewed in its
affine span, but that is so because the cone is perfect, and when the
dual cone is represented internally via the self-dualizing inner
product, the distinguishing effects are the states themselves
(cf. item \ref{item: frames orthogonal} of Proposition \ref{prop:
  consequences}).  Perfection of the cone $V_+$ implies that the cone
over $F$ is self-dual in its linear span according to the restriction
of the inner product, so these effects are still in the relative dual
cone of $F$.

In fact, an element of $K$ takes maximal
frames of $F$ to maximal frames of $F$ if, and only if, it belongs to
the subgroup $K_{\lin{F}}$, that preserves $\lin{F}$.  The action of
this group on $\lin{F}$ gives a faithful representation of the group
$K_{\lin F}/K^{\lin F}$ (equivalently $K_F/K^F$).  
Since we have shown that $F$ is spectral and strongly symmetric, it
follows from claim in the first sentence of this proof that the
barycenter of $F$ is $\sum_{i=1}^{|F|} \omega_i/|F|$ for any maximal
frame $\omega_i$ for $F$.  Any automorphism of $F$ preserves its
barycenter, so the automorphism of $F$ induced by any element of $K_F$
must do so, i.e. $K_F \subseteq K^{c(F)}$.  Furthermore, $K^{c(F)}
\subseteq K_F$.  To see this, note that $c(F)$ is in the relative interior of $F$ and
therefore $F = \face{(c(F))}$.  Hence for $\phi \in K^{c(F)}$ we have
$\phi(F) = \phi(\face{(c(f))}) = \face{\phi(c(F))} = \face{(c(F))} =
F$, i.e. $\phi \in K_F$.  Here the second equality is a general fact
about automorphisms $\phi$ (that $\face(\phi(x)) = \phi(\face(x))$),
and the third is from the assumption $\phi \in K^{c(F)}$.
\end{proof}

\section{More detail on the classification of regular polytopes via symmetric space representations}
\label{appendix: more information on classification}

\subsection{Details of the symmetric space representations containing regular convex bodies as orbits of Farran-Robertson polytopes}
In this subsection we go line by line through those entries in Tables \ref{table: classical noncompact ss reps}, \ref{table: exceptional noncompact ss reps} and
\ref{table: ss reps from complex groups} that describe symmetric spaces associated with
simplicial Farran-Robertson polytopes,  along the way 
verifying that all cases where the simplex has 3 or more vertices ($\tri_n, n \ge 2$), are ones associated
with EJAs, and for the $\tri_1$ cases (where the convex body must be a ball), noting which ones are
spin factor Jordan algebra automorphism representations 
(i.e. $\flie(K) = \flie(\aut V) \equiv \fder V$),\footnote{Here $V$ is a Jordan algebra, and $\aut V$ is the 
compact group of \emph{Jordan algebra} automorphisms, with Lie algebra $\fder V$.} and 
which involve other transitive actions on balls.

 The nomenclature for groups, symmetric spaces, and root systems is that used in
\cite{MaddenRobertson}, which is very close to that used in 
 Helgason \cite{HelgasonBook}, Ch. X.\footnote{Helgason combines 
$BI$ and DI in the class $BDI$, whereas \cite{MaddenRobertson} keep them separate.}   Tables 2 and 3
 in \cite{MaddenRobertson}, reproduced in the first columns of Tables \ref{table: classical noncompact ss reps} 
and \ref{table: exceptional noncompact ss reps} above,  include those symmetric spaces from Table V
 in Chapter X of Helgason \cite{HelgasonBook}, Table 2 being the
 series, Table 3 the exceptional cases.  These are the Type I and Type
 III noncompact symmetric spaces, in Cartan's nomenclature.  The list
 of coincidences between different classes of symmetric spaces given in
 $\S$6.4 of Ch. X of \cite{HelgasonBook} (pp. 519-520) is helpful here
 too; we refer to these below by Helgason's enumeration, e.g. as
 ``coincidence (i)", etc.

\vspace{10pt}
\noindent
{\bfseries From Table \ref{table: classical noncompact ss reps}: classical symmetric space representations.}
\vspace{6pt}

AI is the symmetric space
$SL(n,\R)/SO(n)$ with root space $A_{n-1}$, whose
polytope is the $n$-vertex simplex, $\triangle_{n-1}$.  
$\flie(K) = \fso(n)$, and the associated
polar representation $V_0 = \fp_0$, from the Cartan decomposition
$\fsl(n,\R) = \fso(n) \oplus \fp_0$, is the traceless real symmetric
matrices.  As discussed in Section \ref{sec: Euclidean Jordan
  algebras}, this is the polar representation $V_0$ associated with
the Jordan algebra $V$ consisting of the real symmetric matrices $V_0
\oplus \R I$ (where $I$ is the identity matrix).  The regular convex
body is then isomorphic to the unit-trace real symmetric positive
semidefinite matrices. 


AII is $SU^*(2n)/Sp(n)$, of rank $n-1$,
and dimension $(n-1)(2n+1)$, with polytope $\tri_{n-1}$.  $SU^*(2n)$
is also known as $SL(n,\H)$, and $Sp(n)$ as $SU(n,\H)$.  The Cartan
decomposition is $\fsl(n,\H) = \fsu(n,\H) \oplus \fp_0$ where $\fp_0$
is the traceless quaternionic-Hermitian matrices, so the associated
polar representation has $SU(n,\H)\equiv Sp(n)$ acting on the
traceless quaternionic-Hermitian matrices.  The regular convex compact
body of the Madden-Robertson construction is therefore affinely
isomorphic to the unit-trace positive semidefinite
quaternionic-Hermitian matrices.

Type AIII, $SU(p,q)/S(U_p \times U_q)$, with root
system $C_q$ (for $p=q$) or $BC_q$ (for $p > q$).  Two polytopes are
listed here because when $q >1$ the two end nodes of the Dynkin diagrams of $C_q$ are 
nonequivalent,
as are those of $BC_q$,  giving rise to two orbit types in each case.  At
any rate, only in the case $q=1$ are these simplical polytopes,
$\tri_1$.  The symmetric spaces are then $SU(p, 1)/S(U_p \times U_1)$, the 
dimension is $2p$, and we know that the associated orbitopes are balls $B_{2p}$ 
by Theorem \ref{theorem: sss bits are balls}.
For $p>1$, the root system is of type $BC_1$, while for
 $p=1$ it is of type $C_1$.  

The $p=1, q=1$ case corresponds to the spin 
factor $\R^2 \oplus \R$, with positive cone a 
 Lorentz cone with 2 space dimensions, whose base is the disc, by
 coincidence (i) of Helgason.  (Note that $S(U_1 \times U_1) \iso U(1)$, 
with Lie algebra $\fso(2)$.) 

In the $p>1$ case, $S(U_p \times U_1) \iso U(p)$\footnote{$S(U_p \times U_q)$
is defined by the (complex) linear representation on $\C^p \oplus \C^q$ consisting of block-diagonal matrices
$M$ with blocks in $U(p)$ and  $U(q)$ with $\det M = 1$, so $S(U_p \times U_1)$ is isomorphic to the matrices
of the linear representation $V \oplus \text{det}^*$ of $U(p)$, where $V$ is the defining representation and $\det^*$ the
dual of the one-dimensional determinant representation.  This is a faithful representation of $U(p)$.}    
acts transitively on $B_{2p}$.  In, for instance, \cite{MasanesEtAlEntanglementAndBlochBall} the
$U(p)$ and $SU(p)$ representations with transitive action on $B_{2p}$ are described; for 
$Q \in U(p)$ or $SU(p)$, we have the $2p \times 2p$ real matrix
\beq
\rho(Q) = \left(
\begin{matrix}
\text{Re } Q & \text{Im } Q \\
- \text{Im } Q & \text{Re } Q
\end{matrix}
\right).
\eeq

For Type BI the simplicial cases are  the $q=1$ cases of $SO(p,q)/(SO(p) \times SO(q))$ for
$p+q$ odd, $q <p$, of type $BI$, with root space $B_q$.  So we have
$p>1$ even, and the symmetric space is $SO(p,1)/(SO(p)\times SO(1))
\equiv SO(p,1)/SO(p)$.  The Cartan decomposition is $\fso(p,1) =
\fso(p) \oplus \R^p$, and we have $\fso(p)$ acting irreducibly
on $\R^p$.  The
action is equivalent to the Lie algebra representation derived from
the defining representation of $SO(p)$.  This is the polar
representation embedded in the spin factor Jordan algebra $\R^p \oplus
\R$, where the $\R$ summand is spanned by the Jordan identity $e$,
which is fixed by the full $SO(n)$ representation as explained in
Section \ref{sec: Euclidean Jordan algebras}.  Its positive cone is a
Lorentz cone in one ``time'' dimension, and even ``space'' dimension
$p$, with normalized state space $\Omega$ a $p$-ball.  If we write
$(x, t)$ for a an element of $\R^p \oplus \R$, the convex hull of 
the orbit in the polar representation is that ball, translated to the plane
$t=0$.

Type DI with the same group quotient, but $p+q$ even,
similarly gives the Lorentz cones for odd ``space'' dimension in the $q=1$
cases, again as embedded in spin factors.  

The simplicial cases $\tri_1$ in these two lines (BI and DI) account for all spin factor isotropy group representations.  
A few of these will reappear below, due to coincidences noted in \cite{HelgasonBook}.


 DIII has $q = \lfloor n/2 \rfloor$, which gives $q=1$, and hence $\tri_1$,  only for $n=2$ and $n=3$.
 For $n=2$, the dimension
 $n(n-1)$ is $2$, and the symmetric space is $SO^*(4)/U(2)$.  Coincidence (xi) in
 Helgason's list indicates that this coincides with type AI $(n=1)$,
 i.e real symmetric $2 \times 2$ matrices $Herm(2,\R) \iso \R^2 \oplus \R$, aka the ``rebit'', 
also appearing as AIII $(p=q=1)$, DI $(p=2, q=1)$ (Helgason's BDI$ (p=2, q=1)$), and CI $(n=1)$. 

For $n=3, q = 1$ we have $d=6$. 
By coincidence (vii) in Helgason, this is also AIII $(p=3, q=1)$. $\Omega$ is 
 the $6$-ball.  The DIII symmetric space is given by $SO^*(6)/U(3)$, while $p=3,q=1$
 implies the AIII symmetric space is $SU(3,1)/(S(U(3) \times U(1))$.
The isomorphisms cited in the Helgason coincidence (vii) are 
 $\fsu(3,1) \iso \fso^*(6)$ and 
 $\fsu(4)
 \iso \fso(6)$, 
 corresponding to the ``numerators" in the noncompact and compact forms of the symmetric space
respectively.  As for the denominators, $S(U_3 \times U_1) \iso U(3)$.
 A transitive  action of $U(3)$ on the $5$-sphere
 (boundary of the $6$-ball) is known (cf. Table I in \cite{MasanesEtAlEntanglementAndBlochBall}).

 Type CI, for $n=q=1$, gives $Sp(1, \R)/U(1)$, with
 dimension 2, rank 1.  This is again the $2$-ball (cf. coincidence (i)
 in Helgason again).  This is the only simplicial case.

 For type CIII, the simplicial cases have $q=1$, with polytope $\tri_1$ and
regular convex body a ball $B_{4p}$.   The symmetric space is 
 $Sp(p,1)/(Sp(p)\times Sp(1))$.  Since
 $\fsp(1) \iso \fsu(2)$, and $Sp(p) \times SU(2)$ acts transitively on
 the boundary of the $4p$-ball
 (cf. \cite{MasanesEtAlEntanglementAndBlochBall} once again) 
In dimension 4, i.e. $p=q=1$, Helgason's
 coincidence (iii) implies that we have an EJA $\R^4 \oplus \R$:  we
 have $\fsp(1) \times \fsp(1) \iso \fso(4)$.

\vspace{10pt}
\noindent
{\bfseries From Table \ref{table: exceptional noncompact ss reps}: exceptional symmetric space representations.} 
\vspace{6pt}

Table \ref{table: exceptional noncompact ss reps} concerns spaces derived from exceptional Lie groups.  Of
these, only two have simplicial polytopes.

EIV, with polytope $\tri_2$, has symmetric space determined by the pair of
Lie algebras $(\fg, \fk) = (\fe_{6(-26)}, \ff_4)$.  The dimension of the polar
representation of $\fk$, hence of $\aff{\Omega}$, is $26$, and the Cartan
decomposition and the Lie algebra $\ff_4$ is that of the automorphism
group of the convex set of unit-trace $3 \times 3$ positive definite
octonionic-Hermitian matrices, which is indeed $26$-dimensional,
corresponding to the $27$-dimensional exceptional Jordan algebra (cf. Table \ref{table: EJAs});
$\fp_0$ in the Cartan decomposition is the traceless
octonionic-Hermitian matrices.

FII has polytope $\tri_1$, and symmetric space determined by $\fg = \ff_{4(-20)}$, $\fk = \fso(9)$.  Its
dimension is $16$, so $B$ is a $16$-ball.  $Spin(9)$, a double cover of $SO(9)$, is known to act 
 transitively on the unit $15$-sphere in its ($16$-dimensional) fundamental representation (cf. \cite{MasanesEtAlEntanglementAndBlochBall} or 
\cite{EschenburgHeintzeClassification}).  Since the other exceptional case is $Herm(3,\Oct)$, 
one might be tempted to think this representation 
is the octonionic bit, $Herm(2,\Oct)$.  But $Herm(2,\Oct)$ is $10$-dimensional, with $9$-dimensional traceless part, 
so the ``octobit" is naturally identified with the spin factor $\R^9 \oplus \R$, with state space a 
$9$-ball.\footnote{See \cite{DrayManogue} for an interpretation of $Herm(2,\Oct)$ as acted on by double covers of 
$SO(9,1)$ and of $SO(9)$.}

\vspace{10pt}
\noindent
{\bfseries From Table \ref{table: ss reps from complex groups}: representations from simple complex groups}
\vspace{6pt}

In Table 4, corresponding to Helgason's Type IV symmetric spaces,
which arise from quotients of complex semisimple groups (viewed as
real groups) by their compact real forms, the only simplicial polytopes
appear in the the first line, of type and root space
$A_n$, for $n \ge 1$, and noncompact symmetric space
given by $SL(n+1, \C)/SU(n+1)$, with dimension $n(n+2)$.  
The Cartan decomposition is $\fsl(n+1,\C) = \fsu(n+1, \C) \oplus 
\fp$ where $\fp$ is the traceless complex Hermitian matrices;
the convex hull, $\Omega$, of the orbit of the highest restricted weight
state in the traceless positive semidefinite matrices
is affinely isomorphic to the unit-trace positive semidefinite matrices,
i.e. the ``density matrices'' of standard quantum theory over $\C$. 

The remaining lines of Table 4 include
 some families with polytopes $\Box_n, \Diamond_n$, but the cases $n=1$
 are not included so there are no more simplicial cases.

\subsection{Actions by polar representations other than symmetric space representations}
\label{subsec: polar representations that are not ss reps}

In Dadok's classification of polar representations of compact connected groups  $K$ 
by symmetric space representations, usually the polar representation is isomorphic to a
 symmetric space representation, but sometimes it is necessary to pass, via Theorem \ref{theorem:
   associated symmetric space representation},
to a symmetric space representation of a larger connected compact group $\tilde{K}$, acting on the same space 
and having the same orbits as the original polar representation.  The original representation is 
the restriction of the representation of $\tilde{K}$ to the subgroup $K$. 

In the representations associated with simplicial Farran-Robertson polytopes, this
only occurs in the $\tri_1$ cases, those in which the regular body $B$ is a ball.  These
must be cases in which $K$ acts transitively (and linearly) on the sphere $\partial B \equiv \partial_e B$. 
  Any representation of a compact group acting
 transitively on a sphere is polar.\footnote{This is implied at the
 end of Case I on p. 133 of \cite{Dadok1985}.  It is also fairly obvious:
the Cartan subspace is $1$-dimensional, generated by a diameter, and the sphere is
 the unique (up to dilation) orbit.}  
So all such representations of compact connected groups must occur either as symmetric space isotropy representations
in Tables \ref{table: classical noncompact ss reps}, \ref{table: exceptional noncompact ss reps} and
\ref{table: ss reps from complex groups}, or as the restrictions of such representations to subgroups
via Theorem \ref{theorem: associated symmetric space representation} of Dadok (for a more compact and slightly more
detailed presentation, see the main theorem of \cite{EschenburgHeintzeClassification} and the remarks following it).

In both cases, the subgroup necessarily acts transitively not only on the set of extremal points ($1$-frames), 
but on the set of $2$-frames as well, since
the $2$-frames in a ball are just pairs $[x, -x]$.    One motivation for interest in these possibilities
comes from general probabilistic theories: we might be interested in theories in which the state space is 
strongly symmetric and spectral, but the group of \emph{allowed} reversible transformations is a proper
 subgroup of $\aut_0 \Omega$, yet we still want it to act transitively on frames.  In \cite{BMU}, the condition of 
energy observability (formulated for this case to require an injection of the Lie algebra of the allowed subgroup into the 
observables with the properties described in \cite{BMU}, Def. 30) was shown to rule out not only all Jordan-algebraic state spaces with $\aut_0 \Omega$ as the
allowed transformations, but also the systems with $\Omega$ a ball $B_d$, but with only a proper subgroup of 
$SO(d)$ as the allowed reversible transformations.

\subsection{More detail on the Jordan-algebraic cases}
For the symmetric space isotropy representations in which the compact group acts 
on the orthocomplement of the identity in a Jordan algebra, and has the Lie algebra 
of the group of Jordan automorphisms (call these \emph{Jordan isotropy representations}), 
the main text already gave a general argument why the regular convex body is affinely 
isomorphic to the normalized Jordan algebra state space.  In this subsection, we see a little
more concretely how this works,  by looking in more detail at those representations for 
which the Farran-Robertson polytopes are  
$\tri_n$, $n \ge 2$, which is to 
say the ``matrix" cases $Herm(m, \D)$.  
We need to verify
that, possibly up to an affine isomorphism, the orbit whose convex
hull is the regular convex body $B$ in the Madden-Robertson
construction is the same orbit whose convex hull gives the normalized
state space of $V$.  
For $\tri_n, n \ge 2$, if $A = \fp \oplus \R$ is an EJA then it must
be of the form $Herm(n,\D)$, $\D \in \{\R,\C,\H,\Oct\}$.  The
Coxeter/Dynkin diagram for the restricted Weyl group of $G_{ss}$ 
is that of $A_{n-1}$.  The space $\fp$ in the Cartan decomposition
$\fg = \fk \oplus \fp$ of the Lie algebra of $\aut _0^{ss}A_+$ is the traceless symmetric
matrices over $\D$.  A maximal abelian subspace $\fa$ of $\fp$ is
given by the traceless diagonal matrices in each case; its dimension
is $n-1$.  Its Weyl group $K_{\fa}/K^{\fa}$ (also written
$N_K(\fa)/Z_K(\fa)$) is $S_n$.  It is convenient to study $K$'s action
$\fp$ by extending it to an action of $\fp \oplus \R$, where $\R$ in
this case is generated by the identity matrix, so we have an action on
the space of symmetric matrices, not just the traceless ones.  
 $K$'s
action fixes the identity matrix, and the traceless matrices are an
invariant (in fact irreducible) subspace for the $K$-action.  $W \iso
S_n$ acts as the group of all permutations of the unit diagonal
matrices $e_{ii}$, for $i \in \{1,...,n\}$ (or if we prefer to
consider its action on the traceless matrices, it permutes the $n$
traceless matrices $e_{11} - I/n \equiv \diag{((n-1)/n,
  -1/n,...,-1/n)}, e_{22} - I/n$, etc..).  These matrices form the $n$
vertices of an $(n-1)$-simplex in the unit-trace (resp. traceless)
matrices.  A set of generators for this group is the reflections in
the hyperplanes normal to the traceless matrices $e_{ii} - e_{jj}$,
but these are not all necessary; a minimal set is $e_{ii} -
e_{{i+1},{i+1}}$.  Thus $e_{ii} - e_{{i+1}, {i+1}}$ are a system of
simple roots for $A_{n-1}$.  
For $A_{n-1}$, the diagram is linear with $n-1$ simple roots, and
unmarked links because adjacent roots are all at angle $2 \pi/3$.  If
we identify the unit matrices $e_{ii}$ with unit vectors $e_i$ in
$\R^n$ (the unit-trace matrices) to make index manipulation easier,
then $e_{11}$, projected into the traceless matrices (i.e.  $e_{11} -
I/n$) is $\lambda_1$, the first fundamental weight, which is the point
in the Madden-Robertson construction that corresponds to the first
node of the Coxeter diagram.  The convex hull of its $W$-orbit is a
simplex, the translation to the traceless matrices of the simplex
$\tri(e_1,...,e_n)$ in the unit-trace matrices.  This simplex in
$\fa$ is therefore Madden and Robertson's $\pi(B)$ in this polar
representation, for the case of the first (leftmost) node of the
Coxeter diagram.  The last node gives (up to possible dilation) the
polar body,\footnote{In the particular case of $A_{n-1}$, this is
  confirmed by the fact that $\lambda_{n-1}/(n-1)$ is the barycenter
  of a maximal face of the simplex $\pi(B)$, which is a negative
  multiple of a vertex of the simplex, sox the convex hull of its
  orbit gives (up to dilation) the negative of the original simplex,
  which is an isomorphic simplex and (up to dilation) the polar of the
  original simplex.} and the polar of a simplex is again a simplex;
indeed since the diagram is symmetric the regular convex body
associated with the last node will also be affinely isomorphic to the
one associated with the first.
\footnote{The other nodes in the Coxeter diagram correspond, in
general in the Madden-Robertson construction, to the barycenters of a
maximal flag of faces of this convex body (the orbitope over
$\lambda_1$), and this is confirmed in the $A_{n-1}$ case by the
explicit formula for the fundamental weights: the weight $\lambda_k$
is the projection of $\sum_{i=1}^k e_i$ into subspace $\sum_i x_i =
0$, i.e.  writing $t_i$ for the projection to the subspace $\sum_i x_i
= 0$ (the traceless diagonal matrices in our particular
representation), \beq \lambda_k = \sum_i^k t_i.  \eeq For comparison,
the barycenters of the unit $(n-1)$-simplex in the unit-trace matrices
are $b_k := \sum_{j=1}^k e_i/k$, and their translations to the
$(n-1)$-dimensional subspace $\sum_{i=1}^n x_i = 0$, which are the
barycenters of a maximal flag of of (proper) faces of the simplex
$\conv{W.t_1}$ in that space, are just \beq c_k := \sum_{j=1}^l t_j /
k \equiv \lambda_k/k.  \eeq} 

Since we saw that in this construction, we could identify the
fundamental weight $\lambda_1 \in \fa$ whose $W$-orbitope gives
$\pi(B)$ and whose $K$-orbitope gives $B$ in the Madden-Robertson
embedding in a polar representation, as the traceless part of the unit
matrix $e_{11}$, when that polar representation is one corresponding
to a Euclidean Jordan algebra, we see that $B = \conv{(K.(e_{11} -
  I/n))} \equiv \conv{(K.e_{11})} - I/n$ is affinely isomorphic to the
normalized state space $\conv{K.e_{11}}$ of the corresponding
Euclidean Jordan algebra, the isomorphism being given by mapping the
linear hyperplane of traceless symmetric matrices to the affine 
hyperplane of unit-trace ones, by adding $I/n$.


\end{appendix}


\end{document}